\renewcommand\AB@authnote[1]{\rlap{\textsuperscript{\normalfont#1}}}
\newtheorem{lemma}{Lemma}
\newtheorem{theorem}{Theorem}
\newtheorem{proposition}{Proposition}
\newtheorem{corollary}{Corollary}
\theoremstyle{definition}
\newcommand{\R}{\mathbb{R}}
\DeclareMathOperator{\E}{\mathrm{E}}
\DeclareMathOperator{\Var}{\mathrm{Var}}
\title{The Moderating Effect of Instant Runoff Voting\thanks{The extended version of a AAAI '24 paper.}}
\author[1]{Kiran Tomlinson\thanks{kt@cs.cornell.edu}} 
\author[2]{Johan Ugander}
\author[1]{Jon Kleinberg}
\affil[1]{Department of Computer Science, Cornell University}
\affil[2]{Department of Management Science and Engineering, Stanford University}
\date{}
\date{}
\begin{document}
\maketitle 

\begin{abstract}
Instant runoff voting (IRV) has recently gained popularity as an alternative to plurality voting for political elections, with advocates claiming a range of advantages, including that it produces more moderate winners than plurality and could thus help address polarization. However, there is little theoretical backing for this claim, with existing evidence focused on case studies and simulations. In this work, we prove that IRV has a moderating effect relative to plurality voting in a precise sense, developed in a 1-dimensional Euclidean model of voter preferences.  We develop a theory of {\it exclusion zones}, derived from properties of the voter distribution, which serve to show how moderate and extreme candidates interact during IRV vote tabulation. The theory allows us to prove that if voters are symmetrically distributed and not too concentrated at the extremes, IRV cannot elect an extreme candidate over a moderate. In contrast, we show plurality can and validate our results computationally. Our methods provide new frameworks for the analysis of voting systems, deriving exact winner distributions geometrically and establishing a connection between plurality voting and stick-breaking processes.

\end{abstract}

\section{Introduction}

Instant runoff voting (IRV) elections ask voters to rank candidates in order of preference and use a sequence of ``instant runoffs'' to determine a winner.\footnote{IRV is also called ranked choice voting in the United States. Other names for IRV include alternative vote, preferential voting, and the Hare method. Multi-winner IRV is also called single transferrable vote. Plurality is also called first-past-the-post.}
IRV selects a winner by repeatedly eliminating the candidate with the fewest first-place votes, redistributing those votes to the next-ranked candidate on each ballot, and removing the eliminated candidate from all ballots. The final remaining candidate is declared the winner (equivalently, one can terminate when a majority of the remaining ballots list the winner first).   
By comparison, in a plurality election the winner is simply the candidate with the most first-place votes. While plurality has historically been the predominant single-winner voting system, IRV is among the most popular alternatives; for instance, Australia and Ireland have used IRV  since the early 20th century. In the United States, IRV has recently been gaining traction to address issues with plurality voting~\cite{wang2021systems}, with three states (Maine, Alaska, and Nevada) voting to adopt IRV for federal elections in the last decade. IRV has also seen increasing adoption in local elections and/or primaries, for instance in San Francisco (since 2004), Minneapolis (since 2009), and New York City (since 2021). 
 
Proponents of IRV claim that it encourages moderation, compromise, and
civility, since candidates are incentivized to be ranked highly by as
many voters as possible, including by those who do not rank 
them first~\cite{dean2016,diamond2016}. Analyses of
campaign communication materials and voter surveys have supported the
theory that IRV increases campaign
civility~\cite{donovan2016campaign,john2017candidate,kropf2021using},
with extensive debate about whether this greater civility 
translates into winners who are also more moderate in their positions
\cite{fraenkel2006does,
fraenkel2006failure,horowitz2006strategy,horowitz2007have}.
Analyses of potential moderating effects of IRV have primarily been 
based on case studies~\cite{fraenkel2004neo,mitchell2014single,reilly2018centripetalism} 
and simulation~\cite{chamberlin1978toward,merrill1984comparison,mcgann2002party}, as well as 
empirical evidence for a moderating effect 
in a related voting system, two-round runoff~\cite{bordignon2016moderating}. In contrast, there has been almost no theoretical work on the subject; most social choice theory has focused on problems other than moderation, such as minimizing metric distortion and ensuring fairness or representation~\cite{halpern2023representation,aziz2017justified,boutilier2012optimal,brill2022individual,ebadian2022optimized,gkatzelis2020resolving,kahng2023voting}. Two interesting specific exceptions can be found in the works of \citet{grofman2004if} and \citet{dellis2017policy}. \citet{grofman2004if} show that for single-peaked preferences and
four or fewer candidates, IRV is at least as likely as plurality
to elect the median candidate. \citet{dellis2017policy} show that in a citizen-candidate model, if the voter distribution is asymmetric then two-party equilibria under plurality can be more extreme than under IRV.

There is clear value in mathematical analyses that identify more general moderating tendencies.
At present---beyond the noted exceptions---the arguments for IRV's moderating effects summarized above
have tended to point to institutional or behavioral properties of the
way candidates run their campaigns in IRV elections.
A natural question, therefore, is whether this picture is complete,
or whether there might be something in the definition of IRV itself
that leads to outcomes with more moderate winners.
Such questions are fundamental to the mathematical theory of voting
more generally, where we frequently seek explanations that are rooted
in the formal properties of the voting systems themselves, rather than 
simply the empirical regularities of how candidates and voters tend
to behave in these systems.
In the case of IRV, what would it mean to formalize a tendency toward
moderation in the underlying structure of the voting system?
To begin, we must first identify a natural set of definitions under which we can
isolate such a property.

\paragraph{Formalizing the moderating effect of IRV.}
In this paper, we propose such definitions and use them to articulate
a precise sense in which IRV produces moderate winners in a way
that plurality does not.
We work within a standard one-dimensional model of voters and candidates: 
the positions of voters and candidates correspond to points
drawn from distributions on the unit interval $[0,1]$ of the real line (representing left--right ideology), 
and voters form preferences over candidates
by ranking them in order of proximity. That is, voters favor candidates who are
closer to them on the line; this is often called the \textit{1-Euclidean} model, a common model in social choice theory~\cite{coombs1964theory,bogomolnaia2007euclidean,elkind2022restrictions}. We typically assume the voters and candidates are drawn from the same distribution $F$, but some of our results hold for fixed candidate positions.
In addition to its role as one of the classical mathematical models
of voter preferences, where it is sometimes called the Hotelling model~\cite{hotelling1929stability,downs1957economic}, 1-Euclidean preferences arise naturally from 
higher-dimensional opinions under simple models of opinion updating
\cite{demarzo2003persuasion}. There is wide-ranging empirical evidence suggesting that political
opinions in the United States are remarkably
one-dimensional~\cite{poole1984polarization,poole1991patterns,layman2006party,dellaposta2015liberals}:
from a voter's views on any one of a set of issues including
tax policy, immigration, 
climate change, gun control, and abortion, it is possible to 
predict the others with striking levels of confidence.

Let's consider a voting system applied to a set of $k$ candidates
and a continuum of voters in this setting: we draw $k$ candidates
independently from a given distribution $F$ on the unit interval $[0,1]$,
and each candidate gets a vote share corresponding to the fraction of
voters who are closest to them (see \Cref{fig:example-dsns} for examples).
The use of a one-dimensional model gives a natural interpretation to the
distinction between moderate and extreme candidates: 
a candidate is more extreme if they are closer to the endpoints of 
the unit interval $[0,1]$.
We take two approaches to defining a moderating effect in this model, one probabilistic (in the limit of large $k$) and one combinatorial (for all $k$).
We say that a voting system has a {\em probabilistic moderating effect} if 
for some interval $I = [a,b]$ with $0 < a \leq b < 1$,
the probability that the winning candidate comes from $I$ 
converges to 1 as the number of candidates $k$ goes to infinity (since we focus on symmetric voter distributions, we will typically have $I$ symmetric about $1/2$; i.e. $b = 1 - a$). We say that a voting system has a {\em combinatorial moderating effect} if for all $k$, the presence of a candidate in $I$ prevents any candidate outside of $I$ from winning; i.e., a moderate candidate (inside $I$) is guaranteed to win as long as at least one moderate runs. (Note that a combinatorial moderating effect implies a probabilistic one, as long as the candidate distribution $F$ places positive probability mass on $I$.) We call such an interval $I$ an \textit{exclusion zone} of the voting system, since the presence of a candidate inside this zone precludes outside candidates from winning.
In this way, a voting system with a moderating effect will tend
to suppress extreme candidates who lie
outside a middle portion  of the unit interval,
while a voting system that does not have at moderating effect will
allow arbitrarily extreme candidates to win with positive probability
even as the number of candidates becomes large.

Using this terminology, we can state our first main result succinctly:
under a uniform voter distribution, IRV has a moderating effect
and plurality does not---in both the combinatorial and probabilistic senses.
In particular, we prove a novel and striking fact about IRV:
when voters and candidates both come from the uniform distribution on $[0,1]$,
the probability that the winning candidate produced by IRV lies
outside the interval $[1/6,5/6]$ goes to 0 as the number of candidates
$k$ goes to infinity. 
In sharp contrast, the distribution of the plurality winner's position converges to uniform as the number of candidates goes to infinity,  allowing arbitrarily extreme candidates to win.
As part of our analysis,
we provide a method for deriving the distribution of plurality and IRV winner positions for finite $k$ and perform this derivation for $k=3$ candidates. Surprisingly, our analysis of plurality---the simpler voting system---requires much more sophisticated machinery: we establish a connection between plurality voting and a classic model in discrete probability known as the stick-breaking process and develop new asymptotic stick-breaking results for use in our analysis. 

\begin{figure}
\centering
  \begin{tikzpicture}
\begin{axis}[height=5cm, width=6cm, xmin=0, xmax=1,ymin=-0.1, ymax=2, samples=500, axis y line=none,       axis x line=middle, yticklabels={..}, axis line style=thick]
  \addplot[black, thick, name path=A][domain=0.001:0.999] (x, {(x^(1))*(1-x)^(1)*6});
  \addplot[draw=none,name path=B] {0};     
  \addplot[fill={rgb,255:red,255; green,200; blue,200},fill opacity=1] fill between[of=A and B,soft clip={domain=0:0.25}];
  \addplot[fill={rgb,255:red,255; green,150; blue,150},fill opacity=1] fill between[of=A and B,soft clip={domain=0.25:0.35}];
  \addplot[fill={rgb,255:red,255; green,100; blue,100},fill opacity=1] fill between[of=A and B,soft clip={domain=0.35:0.625}];
  \addplot[fill={rgb,255:red,255; green,50; blue,50},fill opacity=1] fill between[of=A and B,soft clip={domain=0.625:1}];
  \node[label={90:{A}},circle,fill={rgb,255:red,255; green,200; blue,200},inner sep=2pt, draw=black, line width=0.2mm] at (axis cs:0.2, 0) {};
  \node[label={90:{B}},circle,fill={rgb,255:red,255; green,150; blue,150},inner sep=2pt, draw=black, line width=0.2mm] at (axis cs:0.3, 0) {};
  \node[label={90:{C}},circle,fill={rgb,255:red,220; green,100; blue,100},inner sep=2pt, draw=black, line width=0.2mm] at (axis cs:0.4, 0) {};
  \node[label={90:{D}},circle,fill={rgb,255:red,255; green,50; blue,50},inner sep=2pt, draw=black, line width=0.2mm] at (axis cs:0.85, 0) {};
      \node[] at (axis cs:0.16, .46) {\scriptsize 0.16};
    \node[] at (axis cs:.3, .46) {\scriptsize 0.13};
    \node[] at (axis cs:.5, .46) {\scriptsize 0.40};
    \node[] at (axis cs:.8, .46) {\scriptsize 0.32};
\end{axis};
\end{tikzpicture}
\quad
  \begin{tikzpicture}
\begin{axis}[height=5cm, width=6cm, xmin=0, xmax=1,ymin=-0.1, ymax=2, samples=200, axis y line=none,       axis x line=middle, yticklabels={..}, axis line style=thick]
  \addplot[black, thick, name path=A][domain=0.001:0.999] (x, {1});
  \addplot[draw=none,name path=B] {0};     
  \addplot[fill={rgb,255:red,255; green,200; blue,200},fill opacity=1] fill between[of=A and B,soft clip={domain=0:0.25}];
  \addplot[fill={rgb,255:red,255; green,150; blue,150},fill opacity=1] fill between[of=A and B,soft clip={domain=0.25:0.35}];
  \addplot[fill={rgb,255:red,255; green,100; blue,100},fill opacity=1] fill between[of=A and B,soft clip={domain=0.35:0.625}];
  \addplot[fill={rgb,255:red,255; green,50; blue,50},fill opacity=1] fill between[of=A and B,soft clip={domain=0.625:1}];
  \node[label={90:{A}},circle,fill={rgb,255:red,255; green,200; blue,200},inner sep=2pt, draw=black, line width=0.2mm] at (axis cs:0.2, 0) {};
  \node[label={90:{B}},circle,fill={rgb,255:red,255; green,150; blue,150},inner sep=2pt, draw=black, line width=0.2mm] at (axis cs:0.3, 0) {};
  \node[label={90:{C}},circle,fill={rgb,255:red,220; green,100; blue,100},inner sep=2pt, draw=black, line width=0.2mm] at (axis cs:0.4, 0) {};
  \node[label={90:{D}},circle,fill={rgb,255:red,255; green,50; blue,50},inner sep=2pt, draw=black, line width=0.2mm] at (axis cs:0.85, 0) {};
  \node[] at (axis cs:0.125, .46) {\scriptsize 0.25};
    \node[] at (axis cs:.3, .46) {\scriptsize 0.10};
    \node[] at (axis cs:.5, .46) {\scriptsize 0.28};
    \node[] at (axis cs:.8, .46) {\scriptsize 0.38};
\end{axis};
\end{tikzpicture}
\quad
  \begin{tikzpicture}
\begin{axis}[height=5cm, width=6cm, xmin=0, xmax=1,ymin=-0.1, ymax=2, samples=500, axis y line=none,       axis x line=middle, yticklabels={..}, axis line style=thick]
      
      \fill [color={rgb,255:red,255; green,200; blue,200}] (axis cs:0,0) rectangle (axis cs:0.027,2);
      
      \fill [color={rgb,255:red,255; green,50; blue,50}] (axis cs:0.974,0) rectangle (axis cs:1,2);
      
  \addplot[black, thick, name path=A][domain=0.001:0.999] (x, {min((x^(-0.5))*(1-x)^(-0.5)/3.14159, 3)});
  \addplot[draw=none,name path=B] {0};     
  \addplot[fill={rgb,255:red,255; green,200; blue,200},fill opacity=1] fill between[of=A and B,soft clip={domain=0:0.25}];
  \addplot[fill={rgb,255:red,255; green,150; blue,150},fill opacity=1] fill between[of=A and B,soft clip={domain=0.25:0.35}];
  \addplot[fill={rgb,255:red,255; green,100; blue,100},fill opacity=1] fill between[of=A and B,soft clip={domain=0.35:0.625}];
  \addplot[fill={rgb,255:red,255; green,50; blue,50},fill opacity=1] fill between[of=A and B,soft clip={domain=0.625:1}];
  \node[label={90:{A}},circle,fill={rgb,255:red,255; green,200; blue,200},inner sep=2pt, draw=black, line width=0.2mm] at (axis cs:0.2, 0) {};
  \node[label={90:{B}},circle,fill={rgb,255:red,255; green,150; blue,150},inner sep=2pt, draw=black, line width=0.2mm] at (axis cs:0.3, 0) {};
  \node[label={90:{C}},circle,fill={rgb,255:red,220; green,100; blue,100},inner sep=2pt, draw=black, line width=0.2mm] at (axis cs:0.4, 0) {};
  \node[label={90:{D}},circle,fill={rgb,255:red,255; green,50; blue,50},inner sep=2pt, draw=black, line width=0.2mm] at (axis cs:0.85, 0) {};
  \node[] at (axis cs:0.125, .46) {\scriptsize 0.33};
    \node[] at (axis cs:.3, .46) {\scriptsize 0.07};
    \node[] at (axis cs:.5, .46) {\scriptsize 0.18};
    \node[] at (axis cs:.8, .46) {\scriptsize 0.42};
\end{axis};
\end{tikzpicture}
\caption{Three example voter distributions in one dimension (all Betas). Candidates A, B, C, D are positioned at 0.2, 0.3, 0.4, and 0.85. The black line shows the density function of the voter distribution. Regions are colored according to the most preferred candidate of voters in that region and annotated with the approximate vote share of that candidate. As an example, the preference ordering of a voter at 0.5 is C, B, A, D (regardless of the voter distribution). Similarly, a voter at 0.1 has preference ordering A, B, C, D. In the moderate voters example (left), C is both the plurality and IRV winner.  In the uniform voters example (center), D is the plurality winner and C is the IRV winner. In the polarized voters example (right), D is the plurality winner and A is the IRV winner.}\label{fig:example-dsns}
\end{figure}
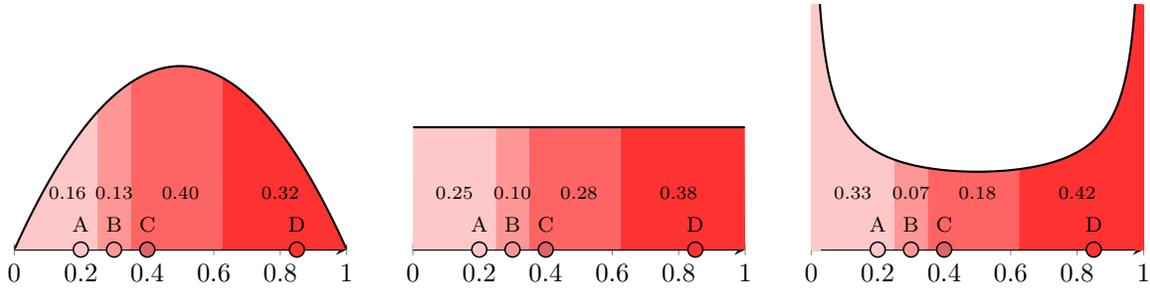

Our probabilistic result for IRV follows from a companion fact that is 
combinatorial in nature and comparably succinct: 
given any finite set of candidates in $[0,1]$,
and voters from the uniform distribution, if any of the candidates
belong to the interval $[1/6,5/6]$, then the IRV winner must come
from $[1/6,5/6]$; that is, $[1/6,5/6]$ is an exclusion zone for IRV in the uniform case.
Moreover, $[1/6,5/6]$ is the smallest interval for which this statement
is true.
Again, the analogue for plurality voting with any proper sub-interval of 
the unit interval is false: we show that plurality has no exclusion zones. 

This first main result therefore gives a precise sense in which the
structure of the IRV voting system favors moderate candidates:
whenever moderate candidates (in the middle two-thirds of the unit interval)
are present as options, IRV will push out more extreme candidates.
We then address the more challenging case of non-uniform voter distributions, where we prove that IRV continues to have a moderating effect (in the sense
of our formal definitions) even for voter distributions that push probability
mass out toward the extremes of the unit interval, up to a specific
threshold beyond which the effects cease to hold.
Thus, IRV is even able to offset a level of polarization built into
the underlying distribution of voters and candidates, although it can only
do so up until a certain level of polarization is reached. In contrast, we establish 
that plurality never has a combinatorial moderating effect for any non-pathological voter distribution.

As a final point, it is worth emphasizing what is and is not a focus of our work here.
We examine IRV and plurality because of their widespread use in real-world elections and the fierce debate surrounding the adoption of IRV over plurality. 
We are not trying to characterize all possible voting systems that give rise to moderation (although we can show that many voting systems not in widespread use have a moderating effect, including the Coombs rule and any Condorcet method; for these systems, any symmetric interval  around 0.5 is an exclusion zone).
Our interest, instead, is in the following contribution to the plurality--IRV debate: there is a precise mathematical sense in which IRV has a moderating effect and plurality does not.
Second, we do not analyze strategic choices by candidates about
where to position themselves on the unit interval
\cite{hotelling1929stability,downs1957economic,osborne1995spatial},
but instead derive properties of voting systems that hold for
fixed candidate positions, or candidate positions drawn from a distribution.
This approach produces results that are robust
against the question of whether candidates are actually able
to make optimal strategic positioning decisions in practice~\cite{bendor2011behavioral};
it also allows us to better understand how the voting
systems themselves behave---providing a foundation for future
strategic work.

\section{Uniform voters}\label{sec:uniform}

The previous section describes our complete model, but it is
useful to review it here in the context of some more specific notation.
We assume voters and candidates are both drawn from a distribution $F$
on the unit interval $[0,1]$,
representing their ideological position on a left--right 
spectrum.\footnote{We will generally focus on distributions $F$
that are symmetric around $1/2$ and represented by a density function $f$.}
Voters prefer candidates closer to them (i.e., they have 1-Euclidean preferences).
There are $k$ candidates  drawn independently from $F$;
suppose that these draws produce candidate positions 
$x_1 < x_2 < \dots < x_k$ in order. Some of our results apply regardless of the candidate distribution, relying only on the voter distribution; we will make a note of such cases.

Since we want to model the case of a large population of voters,
we do not explicitly sample the voters from $F$, but instead
think of a continuum of voters who correspond to the distribution
$F$ itself: that is, under the plurality voting rule, the
fraction of voters who vote for candidate $x_i$ is the probability
mass of all voters who are closer to $x_i$ than to any other candidate
(or, equivalently, it is the probability that a voter randomly
chosen according to $F$ would be closer to $x_i$ than any other candidate). In this section, we focus on the case where $F$ is uniform.\footnote{To provide another perspective on the uniform voter assumption, consider the following preference assumption that also produces uniform 1-Euclidean preferences: voters are arbitrarily distributed, but rank candidates according to how many voters are between them and each candidate. That is, voters have 1-Euclidean preferences \textit{in the voter quantile space} and are always uniformly distributed over this space by definition. All of our uniform voter results hold in that setting as well, although stated in terms of voter quantiles rather than absolute positions.}
We use $v(x_i)$ to denote the vote share for candidate $x_i$.
Under IRV, the candidate $i$ with the smallest
$v(x_i)$ is eliminated and vote shares are recomputed without
candidate $i$. This repeats until only one candidate remains, who is
declared the winner (equivalently, elimination can terminate when a
candidate achieves majority). In practice, voters submit a ranking
over the candidates and their votes are ``instantly'' redistributed
after each elimination.

\paragraph{IRV's moderating effect: A first result.}
With uniform 1-Euclidean voters, we now show that IRV cannot elect extreme
candidates over moderates---regardless of the distribution of candidates. 
That is, IRV exhibits an exclusion zone in the middle of the unit interval, 
where the presence of moderate candidates inside the zone precludes 
outside extreme candidates from winning. The idea behind the proof is that as
moderates get eliminated, the middle part of the interval becomes
sparser, granting a higher vote share to any remaining moderates. Consider the moment when only one candidate $x$ remains in the interval $[1/6, 5/6]$ (see \Cref{fig:1/6-proof}); extreme
candidates near 0 and 1 are then too far away to ``squeeze out'' $x$. With uniform voters, the tipping point for
squeezing out moderates occurs when
extreme candidates are at 1/6 and 5/6. In the next section, we
present generalizations of this result for non-uniform voter
distributions.

\begin{theorem}\label{thm:1/6}
(Combinatorial moderation for uniform IRV.) Under IRV with uniform voters over $[0, 1]$ and $k\ge 3$ candidates, if there is a candidate in $[1/6, 5/6]$, then the IRV winner is in $[1/6, 5/6]$. No smaller interval $[c, 1-c]$, $c>1/6$, has this property. If there are no candidates in $[1/6, 5/6]$, then the IRV winner is the one closest to $1/2$.
\end{theorem}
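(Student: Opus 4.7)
The plan is to handle the three claims using a single uniform-voter toolkit: with uniform voters on $[0,1]$, the vote share of a candidate $x$ equals the width of its Voronoi cell, so a candidate with immediate surviving neighbors $x_L<x<x_R$ on both sides gets $(x_R-x_L)/2$, and a candidate that is a boundary survivor gets $(x+x_R)/2$ or symmetrically $1-(x_L+x)/2$. Two arithmetic identities do all the work: $(5/6-1/6)/2 = 1/3$ and $(1/6+5/6)/2=1/2$.

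For the exclusion-zone claim, I would argue by contradiction. Suppose the IRV winner lies outside $[1/6,5/6]$ and let $x$ be the \emph{last} candidate in $[1/6,5/6]$ to be eliminated; at the start of that round $x$ is the only surviving moderate, and the other survivors split into a left set $L'\subseteq[0,1/6)$ and a right set $R'\subseteq(5/6,1]$. If both $L'$ and $R'$ are nonempty, $x$'s immediate neighbors $x_L<1/6$ and $x_R>5/6$ give $x$ share $(x_R-x_L)/2 > 1/3$; the remaining mass $<2/3$ is split among $\ge 2$ other candidates, so at least one of them has share below $1/3$ and hence strictly below $x$, contradicting $x$'s elimination. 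If one side (say $L'$) is empty, then $x$ is the leftmost survivor with share $(x+x_R)/2>1/2$, a strict majority, so $x$ actually wins---again a contradiction.

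For the tightness claim I would give an explicit 3-candidate example. For any $c>1/6$, choose $0<\delta<c-1/6$ and place candidates at $1/6+\delta$, $1/2$, and $5/6$. Direct computation yields first-round shares $1/3+\delta/2$, $1/3-\delta/2$, $1/3$; the moderate $1/2$ is eliminated first, after which $1/6+\delta$ defeats $5/6$ with shares $1/2+\delta/2$ vs.\ $1/2-\delta/2$. Since $1/2\in[c,1-c]$ but the winner $1/6+\delta\notin[c,1-c]$, the interval $[c,1-c]$ is not an exclusion zone. For the third claim, let $x^*$ be the candidate closest to $1/2$ (WLOG $x^*\in[0,1/6)$), so every candidate $y\in(5/6,1]$ satisfies $x^*+y\ge 1$. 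The same toolkit, now applied to $x^*$ at every round, shows $x^*$ is never eliminated: if $x^*$ has surviving neighbors on both sides, share $>1/3$ and the averaging argument from Part 1 produces a smaller other; if $x^*$ is alone on its side but has a right neighbor $y^*$, share $(x^*+y^*)/2\ge 1/2$ is a (weak) majority; if no candidate survives to the right of $x^*$ at all, $x^*$ is the rightmost survivor with share $\ge 5/6$.

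The main obstacle is choosing the right critical round: once one commits to analyzing the \emph{last}-eliminated moderate in Part 1 and $x^*$ at every round in Part 3, every vote-share inequality collapses to the two identities above. The only remaining subtlety is tie-handling at shares of exactly $1/2$, which can be absorbed into a standard tie-breaking convention.
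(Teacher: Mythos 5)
Your proof is correct, and it rests on the same core ``squeezing'' identity as the paper's---a lone moderate flanked by survivors outside $[1/6,5/6]$ has Voronoi width greater than $2/3$ and hence vote share greater than $1/3$---but the execution differs in three ways worth noting. First, where the paper explicitly bounds the \emph{maximum} share an extremist in $[0,1/6)$ can attain ($1/3-\epsilon/2$, via a worst-case placement computation), you instead use a pigeonhole/averaging step: the remaining mass $<2/3$ shared among at least two survivors forces some share below $1/3$. This is cleaner and avoids the paper's $\epsilon$ bookkeeping. Second, your tightness construction ($1/6+\delta$, $1/2$, $5/6$) is a different and simpler witness than the paper's ($c$, $1/2$, $1-c$, plus extras packed near $1$); note, however, that yours only covers $k=3$, whereas the paper's is parametrized to refute the property for \emph{every} $k\ge 3$. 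Since the theorem as stated only requires one counterexample, your version suffices, but it proves slightly less. Third, for the no-moderates case you track the candidate $x^*$ closest to $1/2$ directly through every round, whereas the paper observes that the entire exclusion-zone argument is monotone in $c$ and applies it to the shrunken interval $[c,1-c]$ with $c$ the distance from $x^*$ to its nearer endpoint; both work, and both share the same measure-zero tie issue at exactly symmetric configurations, which you correctly flag and which the statement implicitly excludes by speaking of \emph{the} candidate closest to $1/2$.
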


\begin{proof}
Suppose first there is only one candidate $x \in [1/6, 5/6]$ and all other candidates are $< 1/6$ or $> 5/6$. Suppose without loss of generality that $x \le 1/2$. The smallest vote share $x$ could have occurs when there are candidates at $1/6-\epsilon$ and $5/6+\epsilon$. In this case, $x$ gets vote share $(x - 1/6 + \epsilon) / 2 + (5/6 + \epsilon - x)/2 = 1/3+ \epsilon$. Meanwhile, the highest vote share any candidate $< 1/6$ could have (when $x$ is at $1/2$) is $1 / 6 - \epsilon + (1/2 - 1/6 + \epsilon) / 2 = 1/3 - \epsilon / 2$. Thus, every candidate $< 1/6$ will be eliminated before $x$. At this point, $x$ will win, since it is closer to $1/2$ than any of remaining candidates $> 5/6$ and therefore has a majority. 
  
If there is more than one candidate in $[1/6, 5/6]$ to begin with, then as candidates are eliminated, at some point there will only be one candidate $x$ remaining in $[1/6, 5/6]$. Either $x$ will be the ultimate winner, or there will still be candidates $<1/6$ or $> 5/6$, in which case $x$ will win as argued above.
  
Notice that the above argument still holds if we replace $1/6$ and $5/6$ with $c$ and $1-c$ for any $0 < c \le 1/6$: it only reduces the vote share going towards candidates in $[0, c)$. Thus, if there is some candidate in $[c, 1-c]$, then the IRV winner is in $[c, 1-c]$. So, if there is no candidate in $[1/6/, 5/6]$, then let $c$ be the distance between the most moderate candidate and its closest edge. This candidate must be the IRV winner, since it is the only candidate in $[c, 1-c]$ (and $c < 1/6$). In this case, the IRV winner is the most moderate candidate as claimed. 
  
Finally, we show that no smaller interval satisfies the theorem.
To do so, we describe a construction that is parametrized to handle
any number of candidates $k \geq 3$.
In the construction, 
there is one candidate at $1/2$, two
candidates at $c$ and $1-c$ for $c > 1/6$, and any remaining candidates in $(1-\epsilon, 1]$ for $\epsilon < (1/2 - c)/2$. First, all candidates right of $1-\epsilon$ will be eliminated---all of these candidates have a smaller vote share than the candidate at $c$. At this point, the
candidate at $1/2$ has vote share less than $2(1/2 - c)/2 = 1/2-c$. Since
$c > 1/6$, this is less than $1/2-1/6 = 1/3$. Meanwhile, the
candidates at $c$ and $1-c$ have vote shares higher than $c + (1/2-c)/2 = 1/4+c/2$. Since $c > 1/6$, this is greater than $1/4+1/12 = 1/3$. Thus
the middle candidate is eliminated. The winner is is thus outside of $[c + \delta, 1-c-\delta]$ for all $\delta\in (0, 1/2-c)$, despite there being a candidate in this interval. Since this construction applies for any $c > 1/6$, no interval smaller than $[1/6, 5/6]$ satisfies the theorem.
\end{proof}
\begin{figure}
\centering
\pgfmathsetseed{1}
  \begin{tikzpicture}[x=4.5cm]
    \draw[line width=0.3mm] (0, 0) -- (1, 0);
    \draw[line width=0.3mm] (0, -0.1) -- (0, 0.1);
    \draw[line width=0.3mm] (1, -0.1) -- (1, 0.1);
    \draw[line width=0.3mm] (1/6, -0.1) -- (1/6, 0.1);
    \draw[line width=0.3mm] (5/6, -0.1) -- (5/6, 0.1);
    
    \node (0) at (0, -0.4) {$0$};
    \node (1/6) at (1/6, -0.4) {$1/6$};
    \node (5/6) at (5/6, -0.4) {$5/6$};
    \node (1) at (1, -0.4) {$1$};
    
\foreach \x in {0, ..., 10}
        {\pgfmathsetmacro{\xcoor}{(\x+ rnd-0.5) / 65 }
        \node[draw, circle, inner sep=1pt, fill] (\x) at (\xcoor, 0) {};}
\foreach \x in {0, ..., 10}
        {\pgfmathsetmacro{\xcoor}{5/6 + (\x+ rnd-0.5) / 61 }
        \node[draw, circle, inner sep=1pt, fill] (\x) at (\xcoor, 0) {};}
 \foreach \x in {0, ..., 40}
        {\pgfmathsetmacro{\xcoor}{1/6 + (\x+ rnd-0.5) / 60 }
        \node[draw, circle, inner sep=1pt, fill] (\x) at (\xcoor, 0) {};}       
        
          \node[draw, circle, inner sep=1.5pt, fill, label=$x$, color={rgb,255:red,220; green,0; blue,0}] (x) at (0.4, 0) {};
  \end{tikzpicture}
\raisebox{6mm}{$\rightarrow$}
\pgfmathsetseed{1}
\raisebox{-1.8mm}{
  \begin{tikzpicture}[x=4.5cm]
    \draw[line width=0.3mm] (0, 0) -- (1, 0);
    \draw[line width=0.3mm] (0, -0.1) -- (0, 0.1);
    \draw[line width=0.3mm] (1, -0.1) -- (1, 0.1);
    \draw[line width=0.3mm] (1/6, -0.1) -- (1/6, 0.1);
    \draw[line width=0.3mm] (5/6, -0.1) -- (5/6, 0.1);
    
    \node (0) at (0, -0.4) {$0$};
    \node (1/6) at (1/6, -0.4) {$1/6$};
    \node (5/6) at (5/6, -0.4) {$5/6$};
    \node (1) at (1, -0.4) {$1$};
    
    \node[draw, circle, inner sep=1.5pt, fill, label=$x$, color={rgb,255:red,255; green,50; blue,50}] (x) at (0.4, 0) {};
    \draw[line width=0.8mm, color={rgb,255:red,255; green,50; blue,50}] (17/60, 0) -- (37/60, 0);
    \draw [decorate,decoration={brace,amplitude=5pt,mirror,raise=4pt},yshift=0pt, line width=0.2mm]
(17/60, 0) -- (37/60, 0) node [black,midway, yshift=-6mm] {\footnotesize $v(x) > 1/3$};
\foreach \x in {0, ..., 4}
        {\pgfmathsetmacro{\xcoor}{(\x+ rnd-0.5) / 25 }
        \node[draw, circle, inner sep=1pt, fill] (\x) at (\xcoor, 0) {};}
\foreach \x in {0, ..., 4}
        {\pgfmathsetmacro{\xcoor}{5/6 + (\x+ rnd) / 30}
        \node[draw, circle, inner sep=1pt, fill] (\x) at (\xcoor, 0) {};}
  \end{tikzpicture}}
\raisebox{6mm}{$\rightarrow$}
\pgfmathsetseed{1}
  \begin{tikzpicture}[x=4.5cm]
    \draw[line width=0.3mm] (0, 0) -- (1, 0);
    \draw[line width=0.3mm] (0, -0.1) -- (0, 0.1);
    \draw[line width=0.3mm] (1, -0.1) -- (1, 0.1);
    \draw[line width=0.3mm] (1/6, -0.1) -- (1/6, 0.1);
    \draw[line width=0.3mm] (5/6, -0.1) -- (5/6, 0.1);
    
    \node (0) at (0, -0.4) {$0$};
    \node (1/6) at (1/6, -0.4) {$1/6$};
    \node (5/6) at (5/6, -0.4) {$5/6$};
    \node (1) at (1, -0.4) {$1$};
    
    \node[draw, circle, inner sep=1.5pt, fill, label=$x$, color={rgb,255:red,255; green,50; blue,50}] (x) at (0.4, 0) {};
    \draw[line width=0.8mm, color={rgb,255:red,255; green,50; blue,50}] (0, 0) -- (37/60, 0);
\foreach \x in {0}
        {\pgfmathsetmacro{\xcoor}{5/6 + (\x+ rnd) / 30}
        \node[draw, circle, inner sep=1pt, fill] (\x) at (\xcoor, 0) {};}
  \end{tikzpicture}
  \caption{Visual depiction of the proof of \Cref{thm:1/6}. IRV eliminates candidates until a final candidate $x$ remains in the exclusion zone $[1/6, 5/6]$. At this point, $x$ gets more than $1/3$ of the vote share and cannot be eliminated next (regardless of where they are in $[1/6, 5/6]$). Candidates outside of $[1/6, 5/6]$ are thus eliminated until $x$ wins.}\label{fig:1/6-proof}
\end{figure}
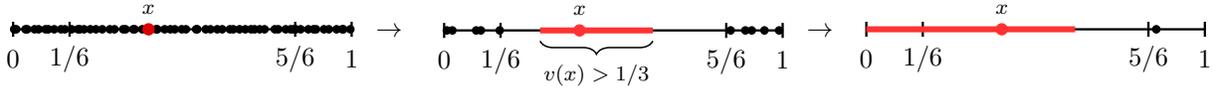

In the language of our analysis, $[1/6,5/6]$ is then the smallest possible exclusion zone of IRV under a uniform voter distribution. See \Cref{fig:1/6-proof} for a visual depiction of the argument.  A corollary of \Cref{thm:1/6} is that if candidates are distributed uniformly at random (for instance, if voters independently and identically decide whether to run for office), then IRV elects extreme candidates with probability going to 0 as the number of candidates grows, since the probability of having no moderate candidates in $[1/6, 5/6]$ is $(1/3)^k$. In the language defined earlier, IRV thus has a probabilistic moderating effect with uniform voters and candidates.

\begin{corollary}\label{cor:probabilistic-moderation}
(Probabilistic moderation for uniform IRV.) Let $R_k$ be the position of the IRV winner with $k$ candidates distributed uniformly at random and uniform voters. 
  \begin{equation}
    \lim_{k \rightarrow \infty} \Pr(R_k \notin [1/6, 5/6]) =0.
      \end{equation}
      
\end{corollary}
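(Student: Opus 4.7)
The plan is to derive this as a short, essentially immediate consequence of \Cref{thm:1/6} combined with a routine iid probability computation, which the author has already sketched in the preceding paragraph. No new machinery is needed.

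First, I would set up the contrapositive of \Cref{thm:1/6}: that theorem says that whenever at least one of the $k$ candidates lies in $[1/6, 5/6]$, the IRV winner must also lie in $[1/6, 5/6]$. Taking contrapositives, the event $\{R_k \notin [1/6, 5/6]\}$ is contained in the event $\mathcal{E}_k$ that \emph{none} of the $k$ candidate positions falls in $[1/6, 5/6]$. This is the key reduction, and it is where all the substantive content of the corollary lives; everything afterward is just an iid tail bound.

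Next, I would compute $\Pr(\mathcal{E}_k)$ directly. Since each candidate is drawn independently and uniformly from $[0,1]$, the probability that any single candidate lies outside $[1/6, 5/6]$ is $1 - (5/6 - 1/6) = 1/3$, and by independence $\Pr(\mathcal{E}_k) = (1/3)^k$. Combining,
\begin{equation}
0 \;\le\; \Pr\bigl(R_k \notin [1/6, 5/6]\bigr) \;\le\; \Pr(\mathcal{E}_k) \;=\; (1/3)^k,
\end{equation}
so letting $k \to \infty$ gives the claim by the squeeze theorem.

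There is essentially no obstacle here: the combinatorial moderating effect from \Cref{thm:1/6} is precisely the ingredient that converts ``a moderate candidate is present'' (a high-probability event under the uniform candidate distribution for large $k$) into ``the IRV winner is a moderate.'' The only minor care point is to use the correct direction of \Cref{thm:1/6} (the existence of \emph{some} candidate in $[1/6, 5/6]$ suffices, we do not need all candidates to be there), which makes the bounding event the complement of ``all $k$ candidates are extreme'' rather than something larger. Once that inclusion is stated, the geometric rate $(1/3)^k$ even gives exponential convergence for free, although the statement only asks for convergence to $0$.
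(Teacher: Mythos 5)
Your proposal is correct and matches the paper's argument exactly: the paper also derives the corollary by noting that, by \Cref{thm:1/6}, the winner can only fall outside $[1/6,5/6]$ when no candidate lands there, an event of probability $(1/3)^k \to 0$. Nothing is missing.
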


In contrast to IRV, where the presence of candidates with moderate positions (namely, inside $[1/6, 5/6]$) precludes extreme candidates from winning, we now show that no such fact is true for plurality (excluding the extreme points 0 and 1): for any interval $I \subseteq (0, 1)$, there is some configuration of candidates such that the winner is outside of $I$ despite having candidates in $I$. In other words, plurality voting does not have a combinatorial moderating effect with uniform voters.\footnote{An anonymous reviewer suggested an elegant construction proving this fact for symmetric intervals $I = [c, 1-c]$, which provides counterexamples for every $k \ge 3$: place candidates at $c-\epsilon, 1-c-\epsilon,$ and any others at $1-c + \epsilon$ (for $\epsilon < c / 2$). The candidate at $c-\epsilon$ wins, despite having a candidate in $I$.} 
Later, we generalize this result to non-uniform voter distributions. The idea behind the proof is relatively straightforward: given a set of candidates, keep adding candidates to reduce the vote share of everyone except the desired winner. 

\begin{theorem}\label{thm:no-exclusion-uniform}
(No combinatorial moderation for uniform plurality.) Suppose voters are uniformly distributed over $[0, 1]$. Given any set of $\kappa \ge 1$ distinct candidate positions $x_1, \dots, x_\kappa$ with $x_1 \notin \{0, 1\}$, there exists a configuration of $k\ge \kappa$ candidates (including $x_1, \dots, x_\kappa$) such that the candidate at $x_1$ wins under plurality.
\end{theorem}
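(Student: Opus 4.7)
The plan is to construct a configuration of $k \geq \kappa$ candidates including $x_1,\ldots,x_\kappa$ in which $x_1$'s plurality vote share strictly exceeds that of every other candidate. The key idea is to reserve a small ``protection zone'' around $x_1$ that contains no other candidate (so $x_1$ retains at least a fixed positive vote share), and then pack enough additional filler candidates into the rest of the unit interval so that every candidate other than $x_1$ has small vote share.

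Let $p = x_1$ and set $d_{\min} = \min_{j \ge 2} |x_j - p|$ (with $d_{\min} = \infty$ if $\kappa = 1$). Since the original candidates are distinct and $p \in (0,1)$,
\[
d^* = \min\!\left(\tfrac{1}{2} d_{\min},\ \tfrac{1}{2} p,\ \tfrac{1}{2}(1-p)\right)
\]
is strictly positive, and $(p - d^*, p + d^*) \subseteq (0,1)$. By construction each $x_j$ with $j \geq 2$ satisfies $|x_j - p| \geq 2d^*$, so $x_j \in [0, p - d^*] \cup [p + d^*, 1]$.

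Next I would pick any $\epsilon \in (0, 2d^*/3)$ and add filler candidates so that $[0, p - d^*] \cup [p + d^*, 1]$ is covered with consecutive candidate positions at most $\epsilon$ apart, with at least one candidate within $\epsilon$ of $0$, one within $\epsilon$ of $1$, and candidates placed exactly at $p - d^*$ and $p + d^*$. The existing $x_2,\ldots,x_\kappa$ stay in place among the fillers. In the resulting configuration, $x_1$'s immediate neighbors are at $p \pm d^*$, giving $v(x_1) = d^*$. For every other candidate, a three-case check bounds vote share strictly below $d^*$: interior fillers get at most $\epsilon$; the extreme candidates near $0$ and $1$ get at most $3\epsilon/2$; and the two inner-boundary candidates at $p \pm d^*$ get $(d^* + \epsilon)/2$. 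Since $\epsilon < 2d^*/3 < d^*$, all three quantities are less than $d^*$, so $x_1$ is the unique plurality winner.

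The only mildly delicate point is the inner-boundary case: each of the candidates at $p \pm d^*$ sits next to $x_1$ across an unfilled gap of width $d^*$, so it captures half of that gap and achieves the largest non-$x_1$ vote share $(d^* + \epsilon)/2$. This is what forces $\epsilon < d^*$; the slightly sharper $\epsilon < 2d^*/3$ just absorbs the additional slack coming from the leftmost and rightmost fillers, which also capture up to distance $\epsilon$ toward the endpoints $0$ and $1$. Everything else is direct arithmetic, and the construction clearly produces a valid configuration with $k \geq \kappa$ that contains $x_1, \ldots, x_\kappa$.
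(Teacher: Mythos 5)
Your proposal is correct and follows essentially the same strategy as the paper's proof: preserve a protected gap around $x_1$ so it keeps a fixed positive vote share, then pack filler candidates densely into the rest of $[0,1]$ so every other candidate's share drops below that of $x_1$. The only differences are cosmetic---you symmetrize the protected gap to radius $d^*$ and handle the endpoint and boundary candidates via an explicit three-case check, whereas the paper adds candidates at $0$ and $1$ and spaces fillers by $\tfrac{1}{2}\min\{v_\ell,v_r\}$ outside the neighbors of $x_1$.
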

\begin{proof}
We show how to add candidates to the initial set $x_1, \dots, x_\kappa$ so that $x_1$ becomes the plurality winner (as long as $x_1 \notin \{0, 1\}$). First, add candidates at $x_0=0$ and $x_{\kappa+1} = 1$ to guarantee that $x_1$ is between two candidates. Let $x_\ell$ be the candidate to the left of $x_1$ and let $x_r$ be the candidate to the right of $x_1$. Let $v_\ell = (x_1 - x_\ell)/ 2$ be the vote share $x_1$ gets on its left and let $v_r = (x_r - x_1)/2$ be the vote share $x_1$ gets on its right. Add new candidates spaced by $\frac{1}{2} \min\{v_\ell, v_r\}$ in the intervals $[0, x_\ell]$ and $[x_r, 1]$. This causes every candidate in the intervals $[0, x_\ell)$ and $(x_r, 1]$ to have vote share strictly less than $\frac{1}{2}\min\{v_\ell, v_r\}$ (whether they are part of the original $\kappa$ or new). Additionally, $x_\ell$ and $x_r$ have vote share at most $\frac{1}{2} \min\{v_\ell, v_r\} + \max\{v_\ell, v_r\}$. Meanwhile, $x_1$ has vote share $v_\ell + v_r$, so $x_1$ is the plurality winner in the new configuration.\end{proof}

In addition, we prove that the asymptotic distribution of the plurality winner's position is uniform over the unit interval when voters and candidates are positioned uniformly at random. In other words, plurality does not have a probabilistic moderating effect: it does not preclude extreme candidates from winning when there are many moderate candidates to choose from. The proof is more involved, so we relegate it to \Cref{app:proofs}. Note that this result implies plurality also has no combinatorial moderation, but \Cref{thm:no-exclusion-uniform} is much easier to prove. 

\begin{theorem}\label{thm:plurality-asymptotic-uniform}
(No probabilistic moderation for uniform plurality.)  Let $P_k$ be the position of the plurality winner with $k$ candidates distributed uniformly at random and uniform voters. As $k \rightarrow \infty$, $P_k$ converges in distribution to Uniform$(0, 1)$; that is, $\lim_{k \rightarrow \infty} \Pr(P_k \le x) = x$ for all $x \in [0, 1]$.
\end{theorem}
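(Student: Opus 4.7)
The plan is to show that $P_k$ admits a density $f_k$ on $[0,1]$ converging pointwise to the constant $1$ on $(0,1)$; Scheff\'e's lemma then upgrades this to $L^1$ convergence and yields $\Pr(P_k \le x) \to x$.

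\textbf{Reduction by exchangeability.} First I would let $Y_1,\ldots,Y_k$ be the unordered Uniform$(0,1)$ candidate positions with vote shares $V_i$, and $W = \argmax_i V_i$, unique almost surely. Since $P_k = Y_W$ and the joint distribution of $(Y_1,\ldots,Y_k)$ is symmetric in labels,
\begin{equation*}
  \Pr(P_k \le x) = \sum_{i=1}^{k} \Pr(Y_i \le x,\, W=i) = k\cdot \Pr(Y_1 \le x,\, W=1) = \int_0^x k\,g_k(y)\,dy,
\end{equation*}
where $g_k(y) := \Pr(W=1 \mid Y_1=y)$. So $f_k(y) = k\,g_k(y)$ is the density of $P_k$, it integrates to $1$, and $\E_{Y_1}[g_k(Y_1)] = 1/k$. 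It then suffices to prove $k\,g_k(y) \to 1$ for each $y \in (0,1)$.

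\textbf{Asymptotic translation invariance via stick-breaking.} Next I would fix $y \in (0,1)$ and condition on $Y_1 = y$. The other $k-1$ candidates are iid Uniform$(0,1)$, and the stick-breaking representation of uniform spacings as normalized iid Exp$(1)$ variables implies that, after rescaling positions by $k$, the vector of consecutive gaps around $y$ converges (jointly on any bounded window) to iid Exp$(1)$'s. As a result, the rescaled vote-share process $(kV_1,\ldots,kV_k)$, with candidate $1$ tagged at $y$, has a limit whose law is translation-invariant in the interior $y$: the pair of gaps incident to $y$ is iid Exp$(1)$ independent of $y$, and the rest of the spacing process is unaffected by $y$ in the limit. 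Thus $k\,g_k(y)$ has a limit that is independent of the interior point $y$.

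\textbf{Pinning the constant.} Two observations force this common interior limit of $k\,g_k(y)$ to equal $1$. First, $\int_0^1 f_k = 1$ for every $k$. Second, the winner lies at an endpoint order statistic with probability $o(1)$: rescaled interior vote shares are asymptotically $(E+E')/2$ with $E,E' \sim \text{Exp}(1)$ iid, whose maximum over $k$ samples is of order $(\log k)/2$, whereas boundary vote shares are $O(1)$ in these units. So the mass of $f_k$ on any shrinking neighborhood of $\{0,1\}$ vanishes, and the common interior limit must equal $1$. Scheff\'e's lemma then delivers $\|f_k - 1\|_{L^1[0,1]} \to 0$ and thus $\Pr(P_k \le x) \to x$.

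\textbf{Main obstacle.} The hard part will be the asymptotic translation invariance step: the vote-share sequence is only $1$-dependent (neighboring shares share a common spacing), so the winning probability cannot be extracted by naive iid or exchangeability arguments, and the argmax is a global statistic of the whole sequence. One must show both that the tagged candidate's vote share has the same limiting marginal law as a generic interior vote share, and that the law of the maximum of the other $k-1$ rescaled vote shares does not depend on the tagged position $y$ in the limit. The paper's new asymptotic stick-breaking results—quantifying extremes in this $1$-dependent spacing process—are exactly the machinery needed to furnish this uniform control.
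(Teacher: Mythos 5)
Your route is genuinely different from the paper's. The paper couples the interval with a circle of circumference $1$: on the circle the winner's position is \emph{exactly} uniform for every finite $k$ by rotational symmetry, and the only thing left to prove is that cutting the circle does not change the winner with probability tending to $1$. That reduction needs one quantitative fact in each direction: the two candidates adjacent to the cut have vote shares at most $\log k/(2k)$ with probability tending to $1$ (a short order-statistics computation), while the winning vote share concentrates around $(\log n+\log\log n)/(2n)$ with Gumbel fluctuations (the paper's stick-breaking lemma, \Cref{lemma:gumbel}). Your approach instead works directly on the interval, tags a candidate at $y$, aims to show $k\,\Pr(W=1\mid Y_1=y)\to 1$ for every interior $y$, and finishes with Scheff\'e. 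The circle trick buys translation invariance for free---every candidate is exchangeable with every other under rotation---whereas you must prove it, and that is where the real work hides.

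Two concrete gaps remain in your sketch. First, the translation-invariance step is asserted rather than proved, and it does not follow routinely from local convergence of the spacing process: the event $\{W=1\}$ compares the tagged candidate's vote share against the \emph{global} maximum of a $1$-dependent sequence, and $g_k(y)\sim c/k$ is a large-deviation-scale quantity (the tagged share must reach height about $(\log k+\log\log k)/(2k)$), so you need joint control of the tagged candidate's upper tail and of the law of the maximum of the remaining shares, uniformly in $y$---which essentially forces you to re-derive the paper's Gumbel lemma plus an asymptotic-independence statement on top of it. Second, ``pinning the constant'' is not closed as written: pointwise convergence $f_k\to c$ on $(0,1)$ together with $\int_0^1 f_k=1$ and vanishing boundary mass does not force $c=1$, since mass can still concentrate on shrinking interior sets; Fatou only yields $c\le 1$. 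You need locally uniform convergence, or at least a uniform bound $k\,g_k(y)\le C$ on compact subsets of $(0,1)$, before you can integrate the limit, conclude $c=1$, and then legitimately invoke Scheff\'e. Both gaps look fillable, but the resulting argument is strictly harder than the paper's, which is precisely why the circle coupling is the better decomposition for this theorem.
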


The proof uses a coupling argument between plurality on the unit interval and plurality on a circle. By rotational symmetry, the plurality winner on a circle is uniformly distributed. We show that as $k$ grows, cutting the circle to transform it into the interval does not change the winner with probability approaching 1, since cutting the circle only affects vote shares of the boundary candidates. 

Thus, a key step is deriving the asymptotic distribution of the
winning plurality vote share. This vote share distribution may be useful for other asymptotic analyses of plurality voting, so we describe it here. The winning plurality vote share is closely related to a category
of probabilistic problems known as 
{\em stick-breaking problems}, which focus on the properties of a stick of
length $1$ broken into $n$ pieces uniformly at
random~\cite{holst1980lengths}. Setting $n = k+1$, these stick pieces
can be viewed as the gaps between candidates (equivalently, candidates
are the breakpoints of the stick). A classic result in stick-breaking
is that the biggest piece will have size $B_n$ almost exactly $\log n
/ n$ as $n$ grows large~\cite{darling1953class,holst1980lengths} and
that $nB_n - \log n$ converges to a Gumbel$(1, 0)$ distribution as $n
\rightarrow \infty$. The plurality vote setting is different,
since candidates get vote shares from half of the gap to their left plus
half of the gap to their right (except the left- and rightmost
candidates). We show that as the
number of candidates grows large, the winning vote share $V_k$ with $k
= n-1$ candidates is almost exactly $(\log n + \log \log n)/2n$ and
that $n V_k - (\log n + \log \log n) / 2$ also converges to Gumbel$(1,
0)$ as $k \rightarrow \infty$. Intuitively, the largest pair of
adjacent gaps have size $\log n /n$ and $\log \log n /n$, and the
candidate between these gaps gets vote shares from half of each gap (more
correctly, the total size of the gaps is
$(\log n + \log \log n)/n$). This is formalized in the following lemma used to prove \Cref{thm:plurality-asymptotic-uniform}.

\begin{lemma}\label{lemma:gumbel}
Let $V_k$ be the winning plurality vote share with $k$ candidates distributed uniformly at random over $[0, 1]$ and uniform voters. Setting $n = k+1$, 
\begin{equation}
  \lim_{k\rightarrow \infty} \Pr\left(V_k \le \frac{\log n + \log \log n + x}{2n} \right) = e ^{-e^{-x}}.
\end{equation}
\end{lemma}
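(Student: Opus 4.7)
The plan is to (i) reduce uniform spacings to iid exponentials via Rényi's representation, (ii) strip off the negligible boundary contributions to $V_k$, and (iii) apply an extreme value theorem for a $1$-dependent stationary chain of gamma$(2,1)$ variables. Write $n = k+1$ and let $G_1,\dots,G_n$ denote the spacings between consecutive candidates (with $G_1$ and $G_n$ being the two end spacings). Let $E_1,\dots,E_n$ be iid Exp$(1)$ with $S = \sum_i E_i$. By Rényi's representation, $(G_1,\dots,G_n) \stackrel{d}{=} (E_1,\dots,E_n)/S$, and $n/S = 1 + O_p(n^{-1/2})$ by the CLT. Multiplying the eventual max (of order $\log n$) by this factor perturbs it by only $O_p(\log n/\sqrt{n}) = o_p(1)$, so by Slutsky the Gumbel limit is preserved and I may reason as if $nG_i = E_i$.

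Next, decompose $V_k$. The leftmost candidate has vote share $G_1 + G_2/2$, the rightmost has $G_{n-1}/2 + G_n$, and each interior candidate $j$ has share $(G_j+G_{j+1})/2$. Since $nG_1$ and $nG_n$ each converge in distribution to Exp$(1)$, the two boundary shares are $O_p(1/n)$ and fall well below the threshold $(\log n + \log\log n)/(2n)$ with probability tending to $1$. It therefore suffices to prove
\[
\Pr\!\left(\max_{1 \le j \le n-1}(E_j+E_{j+1}) \le u_n(x)\right) \longrightarrow e^{-e^{-x}}, \qquad u_n(x) := \log n + \log\log n + x.
\]

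Let $Y_j := E_j + E_{j+1}$, so $\{Y_j\}$ is a stationary $1$-dependent sequence with $\Pr(Y_1 > t) = (1+t)e^{-t}$. A direct computation gives
\[
n\Pr(Y_1 > u_n(x)) \;=\; \frac{(1+u_n(x))\,e^{-x}}{\log n} \;\longrightarrow\; e^{-x}.
\]
For the anti-clustering (Leadbetter $D'$) condition, only adjacent pairs $(Y_j,Y_{j+1})$ are dependent. Conditioning on the shared $E_{j+1}$,
\[
\Pr(Y_j > u,\,Y_{j+1} > u) \;=\; \int_0^{u} e^{-y}\,e^{-2(u-y)}\,dy + e^{-u} \;=\; 2e^{-u} - e^{-2u},
\]
so $n\Pr(Y_j > u_n,\, Y_{j+1} > u_n) = O(1/\log n) \to 0$; the contributions from $|j-j'|\ge 2$ are handled by independence. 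The standard extreme value theorem for $m$-dependent stationary sequences (or equivalently a Chen--Stein Poisson approximation for the count of threshold exceedances) then yields $\Pr(\max_j Y_j \le u_n(x)) \to \exp(-e^{-x})$, which, combined with the reductions above, proves the lemma.

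The main obstacle is the anti-clustering step: one must rule out the scenario in which a single large $E_j$ inflates \emph{both} $Y_{j-1}$ and $Y_j$ above the threshold, causing extremes to cluster and corrupting the Poisson count. The explicit bound above confirms this does not happen on the scale of interest, and it also explains why the location constant is $\log n + \log\log n$ rather than $2\log n$: the winning vote share is driven by two \emph{moderately} large adjacent spacings summing to this value, not by one maximal spacing beside a typical one. A secondary care-point is the exponential coupling, where one must verify that the multiplicative $n/S = 1+O_p(n^{-1/2})$ factor contributes only $o_p(1)$ additive error to the normalized maximum; this is immediate once the max is shown to be $O_p(\log n)$.
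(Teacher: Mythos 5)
Your proof is correct, but it takes a genuinely different route from the paper's. The paper starts from the same representation of uniform spacings as normalized i.i.d.\ exponentials, but then attacks the maximum head-on: it writes the joint CDF of all the vote-share sums (boundary terms included) as a telescoping product of conditional probabilities, uses $1$-dependence to reduce the factors to four distinct closed-form expressions such as $\Pr(C_i \le z \mid C_{i-1}\le z) = \bigl(1 - e^{-4z} - 4e^{-2z}z\bigr)/\bigl(1 - e^{-2z}(1+2z)\bigr)$, and extracts the Gumbel limit by taking logarithms and applying l'H\^{o}pital's rule. You instead discard the two boundary candidates up front---correctly, since their shares are $O_p(1/n)$, far below the $\Theta(\log n/n)$ threshold---and then invoke the extreme value theorem for stationary $1$-dependent sequences, verifying $n\Pr(Y_1 > u_n(x)) \to e^{-x}$ and the $D'$ anti-clustering condition via the exact computation $\Pr(Y_j>u,\,Y_{j+1}>u) = 2e^{-u}-e^{-2u}$; all of these calculations check out, as does your Slutsky argument that the $n/S = 1+O_p(n^{-1/2})$ normalization perturbs a maximum of size $O_p(\log n)$ by only $o_p(1)$. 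Your route is shorter, leans on standard machinery (Leadbetter's $D$/$D'$ or Chen--Stein), and cleanly isolates the one phenomenon that could break the result (a single huge spacing inflating two adjacent sums); it also sidesteps a delicate step in the paper's argument, namely the claim that conditioning on all earlier events $\{C_j \le z\}$ reduces to conditioning on $\{C_{i-1}\le z\}$ alone, which is not literally exact for a $1$-dependent chain. The paper's approach, in exchange, is fully self-contained and yields explicit finite-$n$ expressions. One cosmetic point: your maximum ranges over $1\le j\le n-1$, which includes two pairs belonging to the boundary candidates you already discarded, but adding $O(1)$ terms each exceeding $u_n$ with probability $O(1/n)$ does not affect the limit.
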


\subsection{Plurality and IRV winner distributions}
\begin{figure}[t]
\centering
  \includegraphics[width=\textwidth]{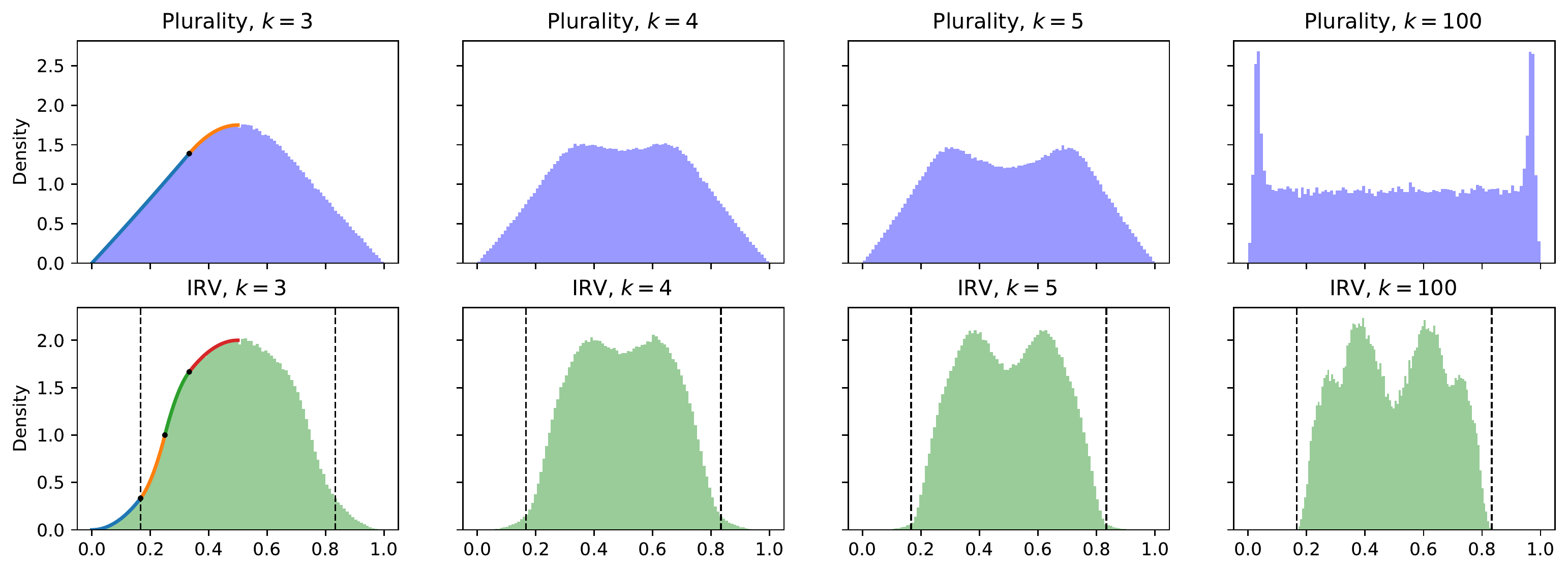}
  \caption{The distributions of the winning position with $k=3, 4, 5,$ and $100$ candidates and continuous 1-Euclidean voters (both uniformly distributed) under plurality and IRV. The histograms are from 1 million simulation trials for $k=3, 4, 5$ and 100000 trials for $k=100$, while the curves plotted for $k=3$ (shown up to 1/2) are the exact density functions given in \Cref{thm:plurality-dsn,thm:irv-dsn}, with pieces separated by color. Note that the IRV winner is only at a position $<1/6$ or $>5/6$ when no candidates fall in $[1/6, 5/6]$ by \Cref{thm:1/6}; the dashed vertical lines outline this exclusion zone. The probabilistic moderating effect for IRV is already strong at with only $k=5$ candidates.}\label{fig:k-3-distributions}
\end{figure}

Given these results about the asymptotic distributions of the plurality and IRV winner positions $P_k$ and $R_k$, asymptotic in the number of candidates $k$, a natural follow-on question is whether we can say anything about these distributions for fixed values of $k$.   

For a fixed value of $k$, the distributions of the plurality and IRV
winner positions $P_k$ and $R_k$ with uniform voters and candidates
have density functions $f_{P_k}$ and $f_{R_k}$ that are piecewise
polynomial of order $k-1$. To see this, consider a point in
the $k$-dimensional
unit hypercube, where dimension $i$ of this point
represents the position of
candidate $i$. For every left-right order of candidates $\pi \in S_k$
(where $\pi(i)$ is the index of candidate $i$ in left-right order and
$S_k$ is the symmetric group on $k$ elements), we can express the
region in $\R^k$ where candidate $i$ wins given order $\pi$ using the
following collection of linear inequalities:

\begin{align*}
  &0\le x_{\pi^{-1}(1)} < x_{\pi^{-1}(2)} < \dots < x_{\pi^{-1}(k)}\le 1,\\
  &v(x_i) = \frac{x_{r(i)}-x_{\ell(i)}}{2} > \frac{x_{r(j)}-x_{\ell(j)}}{2} = v(x_j), \tag{for all $j \ne i$}
\end{align*}
where $\ell(i) = \pi^{-1}(\pi(i) - 1)$ is the candidate to $i$'s left and $r(i) = \pi^{-1}(\pi(i) + 1)$ is the candidate to $i$'s right. The inequalities in the first line ensure the left--right candidate order matches $\pi$, while the inequalities on the second line ensure $x_i$ has a larger vote share than any other candidate (i.e., $x_i$ is the plurality winner). The region defined by these linear inequalities is therefore a convex polytope, as it is the intersection of a finite number of half spaces. 

To find the probability that a candidate $i$ at a particular point $x$ wins under plurality, we can find the sum of the cross-sectional areas of these polytopes at $x_i=x$ (with one polytope for each of the $k!$ candidate orderings), integrating over the positions of the other $k-1$ candidates. This procedure produces a piecewise polynomial in $x$ of order $k-1$, where pieces are split at the vertices of the polytopes. To convert the win probability of candidate $i$ at position $x$ into the winner position density at $x$, we scale by $k$ to account for the symmetry in choosing $i$.

 We can use this approach to derive $f_{P_3}$, the winner distribution for plurality with $3$ candidates (see \Cref{fig:k-3-distributions} for a visualization). As the derivation is tedious, we present it in \Cref{app:k3_derivations}. Additionally, \Cref{app:figures} includes a visualization of the winning position polyhedra for $k=3$ whose cross-sectional areas produce $f_{P_3}$. 
 
\begin{proposition}\label{thm:plurality-dsn} 
  \begin{equation}
    f_{P_3}(x) = \begin{cases}
   x^2/2 + 4 x, & x \in [0, 1/3]\\
  - 13 x^2  + 13 x -3/2 , & x \in [1/3, 1/2]\\
  f_{P_3}(1-x), & x \in (1/2, 1].
 \end{cases}
  \end{equation}
\end{proposition}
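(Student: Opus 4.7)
The plan is to apply the polytope cross-section method described just before the proposition, specialized to $k = 3$. By the $x \mapsto 1-x$ symmetry of the setting, it suffices to compute $f_{P_3}(x)$ for $x \in [0, 1/2]$. By exchangeability of the three i.i.d.\ candidate positions,
\[
f_{P_3}(x) = 3 \int_0^1\!\!\int_0^1 \mathbf{1}[\text{a candidate at } x \text{ wins against candidates at } y_2, y_3]\,dy_2\,dy_3.
\]
Splitting the $(y_2, y_3)$ square by whether $x$ ends up leftmost, middle, or rightmost among the three positions, and reparameterizing the other two candidates in sorted order $z_1 \le z_2$ (picking up a factor of $2$ from the two orderings in each piece), gives $f_{P_3}(x) = 6(h_L(x) + h_M(x) + h_R(x))$ for sorted-area integrals $h_L, h_M, h_R$.

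In each sub-case, the three vote shares are explicit linear functions of $(x, z_1, z_2)$, so ``the candidate at $x$ wins'' becomes a conjunction of two linear inequalities. For example, in the leftmost case ($x < z_1 \le z_2 \le 1$), the share of $x$ is $(x+z_1)/2$, and it exceeds the shares of the middle and rightmost candidates iff $z_2 < z_1 + 2x$ and $x + 2z_1 + z_2 > 2$, respectively. Analogous inequality pairs arise in the middle case ($z_2 > x + 2z_1$ and $z_1 < 2z_2 + x - 2$) and in the rightmost case.

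I would then evaluate the area of each polytope in $(z_1, z_2)$ as a function of $x$, producing a piecewise polynomial of degree at most $2$ for each $h_\bullet$. The breakpoint at $x = 1/3$ arises naturally: in the leftmost case, combining $z_2 < z_1 + 2x$ with $x + 2z_1 + z_2 > 2$ forces $z_1 > 2/3 - x$, which is the binding lower bound on $z_1$ when $x < 1/3$ but is supplanted by the order constraint $z_1 > x$ once $x > 1/3$; analogous transitions occur in the middle and rightmost cases. Summing $6(h_L + h_M + h_R)$ and simplifying should yield the claimed piecewise formula. Natural sanity checks are that the two pieces agree at $x = 1/3$ and that $f_{P_3}$ integrates to $1$.

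The main obstacle is bookkeeping rather than any conceptual difficulty: for each of $h_L$, $h_M$, $h_R$ one must identify three or four sub-regions of the $(z_1, z_2)$ square based on which of the win-inequalities, order constraint, and unit-box constraints is active, and this case split has to be redone above and below $x = 1/3$. Every individual integration is a one-variable antiderivative of a linear function, so no step is deep, but the full tabulation is long enough to justify relegating it to an appendix, as the authors do.
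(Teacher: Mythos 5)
Your proposal is correct and follows essentially the same route as the paper's derivation in the appendix: condition on the order statistic of the winner, express the win condition as a pair of linear inequalities in the other two candidates' positions (your inequalities match the paper's exactly, including the feasibility line $z_1 > 2/3 - x$ whose intersection with the order constraint produces the breakpoint at $x = 1/3$), integrate the resulting polygon areas, and sum with the factor $3 \cdot 2$. The only detail you understate is that the intermediate region decomposition also introduces a candidate breakpoint at $x = 1/5$ in the leftmost case, which cancels in the final sum; this is pure bookkeeping and does not affect the argument.
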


 For analyzing the plurality winner distribution in this way with larger $k$ (even after accounting for relabeling symmetry), we would need to integrate over $k$ $k$-polytopes, each of which has a number of faces growing linearly with $k$ (one face per inequality requiring that $x_i$ beats each other $x_j$). Unfortunately, the number of vertices per polytope in this procedure could grow exponentially with $k$, potentially requiring exponentially many integrals.

The same strategy can also be used for IRV, except we no longer have only one polytope per permutation of candidates---instead, we have one polytope per combination of left-right candidate order and candidate elimination order. If we fix both, the region where candidate $i$ wins under IRV can once again be defined by a collection of linear inequalities. We used this approach to derive the IRV winner distribution with 3 candidates, $f_{R_3}$. Again, see \Cref{fig:k-3-distributions} for a visualization of $f_{R_3}$, \Cref{app:k3_derivations} for the derivation, and \Cref{app:figures} for a visualization of the IRV polyhedra.

\begin{proposition}\label{thm:irv-dsn}
  
  \begin{equation}
    f_{R_3}(x) = \begin{cases}
  12x^2, & x \in [0, 1/6]\\
  48 x^2  - 12 x + 1, & x \in [1/6, 1/4]\\
   - 48 x^2+ 36 x-5 , & x \in [1/4, 1/3]\\
   - 12 x^2+ 12 x-1 , & x \in [1/3, 1/2]\\
  f_{R_3}(1-x), & x\in (1/2, 1].
   \end{cases}  \end{equation}

\end{proposition}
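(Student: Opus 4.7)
The plan is to apply the polytope cross-section method that the paper develops in the paragraphs preceding the statement. By the symmetry of the three candidate labels (and since each is drawn from the same distribution), write
$$f_{R_3}(x) \;=\; 3 \cdot \Pr\bigl(\text{the candidate at position } x \text{ is the IRV winner}\bigr),$$
where the other two candidate positions $U, V$ are iid Uniform$(0,1)$. The reflection symmetry of the uniform voter distribution gives $f_{R_3}(x) = f_{R_3}(1-x)$, so it suffices to derive the formula on $[0, 1/2]$.

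The first step is to partition the event $\{x \text{ wins}\}$ according to (i) the left--right order of the three candidates (is $x$ the leftmost, middle, or rightmost?), (ii) which candidate is eliminated in the first round, and (iii) which of the two remaining candidates wins the runoff. Given a sorted triple $y_1 < y_2 < y_3$, the plurality vote shares are $v_{\mathrm{left}} = (y_1+y_2)/2$, $v_{\mathrm{mid}} = (y_3-y_1)/2$, and $v_{\mathrm{right}} = 1 - (y_2+y_3)/2$, and in the pairwise runoff between two remaining candidates $a < b$ the left one wins iff $a+b > 1$. Each combination in (i)--(iii) is therefore defined by a conjunction of linear inequalities in $(U, V, x)$ which cuts out a polygon in the unit $(U,V)$-square whose area is a piecewise-quadratic function of $x$. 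Summing these areas over the (at most $3\times 3\times 2 = 18$) sub-cases and multiplying by $3$ produces $f_{R_3}(x)$.

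The breakpoints $1/6, 1/4, 1/3$ in the final formula arise exactly at values of $x$ where the combinatorial structure of one of these polygons changes: a vertex exits the unit square or a defining inequality becomes vacuous. A useful sanity check is immediately available from \Cref{thm:1/6}: for $x \in [0, 1/6]$, the candidate at $x$ can win only when neither other candidate lies in $[1/6, 5/6]$ (else a moderate wins) and $x$ is closer to $1/2$ than every other extreme candidate, which forces both $U$ and $V$ into $[0, x] \cup [1-x, 1]$, a set of measure $2x$. The corresponding probability is $(2x)^2 = 4x^2$, giving $f_{R_3}(x) = 12x^2$ on this piece and matching the stated first branch.

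The main obstacle is bookkeeping rather than anything conceptual: enumerating the non-redundant sub-cases, tracking which of the inequalities defining each polygon are active on each sub-interval of $[0, 1/2]$, and aggregating the resulting piecewise quadratics without arithmetic error. Continuity of $f_{R_3}$ at each internal breakpoint (easy to verify on the stated formula at $x = 1/6, 1/4, 1/3$) and the total-probability check $\int_0^1 f_{R_3}(x)\,dx = 1$ serve as convenient consistency tests on the final expression.
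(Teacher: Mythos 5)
Your plan is the same method the paper uses in its appendix derivation of $f_{R_3}$: factor out the label symmetry to write $f_{R_3}(x)=3\Pr(x\text{ wins})$, split the win event by the winner's order statistic and by which candidate is eliminated first, express each sub-case as linear inequalities in the other two positions, and integrate the resulting polygon areas piecewise in $x$. Your setup is correct (the sorted vote shares $(y_1+y_2)/2$, $(y_3-y_1)/2$, $1-(y_2+y_3)/2$ and the runoff criterion $a+b>1$ are right), and your shortcut for the first branch is valid and complete: for $x<1/6$, \Cref{thm:1/6} shows $x$ wins iff both other candidates land in $[0,x]\cup[1-x,1]$, giving $\Pr(x\text{ wins})=4x^2$ and $f_{R_3}(x)=12x^2$ there. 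However, as written this establishes only one of the four branches. The other three are exactly where the content of the proposition lives---the paper spends several pages enumerating the sub-cases (winner leftmost with second or third candidate eliminated first, winner middle with either neighbor eliminated first, winner rightmost), locating the lines in the $(x,x_2)$-plane where active constraints switch (which is where the breakpoints $1/6,1/4,1/3$ come from), and summing the resulting quadratics. Your continuity and total-mass checks are good consistency tests but do not substitute for that computation, so the proposal is a correct outline rather than a proof of the stated formula on $[1/6,1/2]$.
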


For IRV with general $k$, this analysis requires integrating over $k!$ $k$-polytopes, each of which has $O(k^2)$ faces: given an elimination order, we need an inequality specifying that the candidate eliminated $i^{\text{th}}$ has a smaller vote share than each of the candidates eliminated later. Each such inequality defining a half-space can add a face to the polytope. 

Note that in \Cref{thm:irv-dsn}, the integral of the density $f_{R_3}(x)$ on $[0,1/6]$ is exactly equal to half the probability that the $k-1$ losing candidates did not appear inside $[x,1-x]$ (scaled by $k$ to account for relabeling symmetry), since we know by \Cref{thm:1/6} that a candidate can only win outside $[1/6, 5/6]$ if they are the most moderate candidate. For general $k>3$ we can easily derive the density on $[0,1/6]$ and $[5/6,1]$ using the generalization of this argument: $f_{R_k}(x) =k(2x)^{k-1}$ on $[0,1/6]$ (with the right tail being mirrored). Note that the integral of $f_{R_k}(x)=k(2x)^{k-1}$ over $[0, 1/6]$ goes to 0 as $k\rightarrow \infty$, a limit that furnishes an independent way of establishing a probabilistic moderating effect for IRV.

Having the exact winner position distributions $f_{P_3}$ and $f_{R_3}$ allows us to answer additional questions---for instance, how much more moderate do IRV winners tend to be for $k=3$ with uniform voters and candidates? Using the density functions above, we can analytically compute the variances of the plurality and IRV winner distributions, $\Var(P_3) = 23/540$ and $\Var(R_3) = 25/864$. For $k=3$, the variance of the plurality winner's position with uniform voters is thus exactly $184/125 = 1.472$ times higher than the variance of the IRV winner's position.

Connecting our results to related work, while the distribution of the 
winner's position is challenging to derive, 
the expected plurality vote share at each point 
is more tractable. This distribution was discovered in another context: a guessing game where the goal is to be closest to
an unknown target distributed uniformly at random, against $k$
players who guess uniformly at random~\cite{drinen2009optimization}.
The target can be thought of as a random voter and
the guesses as candidate positions. The guessing game and plurality
winner position distributions are similar in shape, with two prominent bumps 
that move outward as $k$ grows; and both converge to uniform distributions. However, the point with the max
\emph{expected plurality vote share} (and max guessing game win
probability) is not quite the same as the point with the maximum
\emph{plurality win probability}, since a candidate's position
influences other candidates' vote shares.

\section{Non-uniform voters}\label{sec:non-uniform}
Given our understanding of the uniform voter case, we now broaden our scope and show that IRV exhibits exclusion zones more generally. We find that the same
``squeezing'' argument
can be applied to any symmetric voter distribution. The generalized
result hinges on a specific condition on the cumulative distribution function, \Cref{ineq:1/3-cond},
which intuitively
captures when, no matter where the last moderate candidate is, they
cannot be squeezed out by the most moderate extremists. This condition
is not always possible to satisfy non-trivially. After first giving
the general statement, we present special cases where the
condition is simple to state and satisfy---specifically, when the
voter density is monotonic over $[0, 1/2]$. If
the voter distribution is sufficiently highly polarized, the condition becomes
impossible to satisfy. In this \emph{hyper-polarized}
regime, the exclusion zone of IRV actually flips,
and IRV cannot elect moderate candidates over extreme ones. First, we present the general moderating effect of IRV for symmetric voter distributions. 

\begin{theorem}\label{thm:general-exclusion}
(General combinatorial moderation for IRV.) Let $f$ be symmetric over $[0, 1]$ with cdf $F$ and let $c \in (0, 1/2)$. If for all $x \in [c, 1/2]$,
 \begin{equation}\label{ineq:1/3-cond}
  F\left(\frac{x + 1-c}{2}\right) - F\left(\frac{c+ x}{2}\right)  > 1/3,
\end{equation}
then if there is at least one candidate in $[c, 1-c]$, the IRV winner must be in $[c, 1-c]$.
 \end{theorem}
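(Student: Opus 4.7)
The plan is to extend the argument of \Cref{thm:1/6} from the uniform setting to a general symmetric $f$, using \Cref{ineq:1/3-cond} in place of the explicit uniform estimates. I would argue by contradiction: assume the IRV winner lies outside $[c,1-c]$ even though at least one candidate starts in $[c,1-c]$, and consider the IRV elimination step at which the last remaining moderate $x\in[c,1-c]$ is removed. Such a step exists because the number of moderates decreases by at most one per elimination, so the count cannot drop from $\ge 1$ to $0$ without first passing through the configuration ``one moderate $x$, all other surviving candidates in $[0,c)\cup(1-c,1]$''.

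The quantitative heart of the argument is a lower bound on $x$'s vote share in this configuration. Let $y_L$ and $y_R$ denote $x$'s immediate left and right neighbors (when they exist); by hypothesis $y_L<c$ and $y_R>1-c$, so by monotonicity of $F$,
\[
v(x) \;=\; F\!\left(\tfrac{x+y_R}{2}\right) - F\!\left(\tfrac{x+y_L}{2}\right) \;\ge\; F\!\left(\tfrac{x+1-c}{2}\right) - F\!\left(\tfrac{x+c}{2}\right) \;>\; 1/3,
\]
where the last inequality is \Cref{ineq:1/3-cond}. By the symmetry of $f$ (equivalently $F(t)+F(1-t)=1$), this condition is invariant under $x\mapsto 1-x$ and so extends from $[c,1/2]$ to all of $[c,1-c]$. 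If $y_L$ is absent (i.e., $x$ is the leftmost remaining candidate), the first term becomes $F(0)=0$, and the weaker bound $F\!\left(\tfrac{x+y_R}{2}\right) > F\!\left(\tfrac{x+c}{2}\right) + 1/3 \ge 1/3$ still gives $v(x)>1/3$; the missing-$y_R$ case is symmetric.

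With $v(x)>1/3$ in hand, I would close the argument by a pigeonhole split on the remaining candidate count at the moment of $x$'s putative elimination. If at least three candidates remain, the others jointly receive strictly less than $2/3$ of the vote, so at least one of them has share strictly below $1/3<v(x)$, contradicting $x$ being the minimum. If exactly two candidates remain ($x$ and a single extreme $y$), then $y\in[0,c)\cup(1-c,1]$; in the subcase $y<c$ the midpoint satisfies $\tfrac{x+y}{2}<\tfrac{(1-c)+c}{2}=1/2$, so symmetry of $F$ yields $v(x)=1-F(\tfrac{x+y}{2})>1-F(1/2)=1/2$, and the subcase $y>1-c$ is analogous. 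In either subcase $x$'s share exceeds the other candidate's, again contradicting elimination.

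I expect the main obstacle to be the bookkeeping around the reduction to the ``last moderate'' configuration, specifically arguing cleanly that the moderate count cannot skip past $1$ and that \Cref{ineq:1/3-cond}'s strict inequality provides just enough slack for the pigeonhole step to succeed. The supporting technical points—extending the hypothesis beyond $[c,1/2]$ via symmetry and handling the edge cases where $x$ lacks a neighbor on one side—are each minor once written out explicitly, but they need to be tracked carefully because the bound on $v(x)$ must be simultaneously valid in every configuration the IRV process can produce before the alleged elimination of $x$.
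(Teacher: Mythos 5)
Your proposal is correct and follows essentially the same route as the paper's proof: reduce to the configuration with a single remaining moderate $x$, use Condition~(\ref{ineq:1/3-cond}) together with monotonicity of $F$ to bound $v(x) > 1/3$ against worst-case neighbors near $c$ and $1-c$, and conclude via the pigeonhole and the majority argument in the two-candidate endgame (the paper handles the symmetry reduction by assuming $x \le 1/2$ and delegates the endgame to the argument of \Cref{thm:1/6}). Your write-up is somewhat more explicit about the edge cases (missing neighbors, the exact elimination step), but there is no substantive difference in approach.
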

\begin{proof}
  Suppose there is at least one candidate in $[0, c)$, at least one candidate in $(1-c, 1]$, and exactly one candidate $x$ in $[c, 1-c]$ (if there no candidates in the left or right extremes, then $x$ immediately wins by majority). Assume without loss of generality that $x \le 1/2$. Candidate $x$'s vote share is minimized when there are candidates at $c-\epsilon$ and $1-c + \epsilon$. The vote share of $x$ is then 
\begin{align*}
v(x) = F\left(\frac{x + 1-c + \epsilon}{2}\right) - F\left(\frac{c - \epsilon + x}{2}\right).
  \end{align*}
  
If Condition~(\ref{ineq:1/3-cond}) is satisfied, then
  then we can increase the left hand side of (\ref{ineq:1/3-cond}) to find
  \begin{align*}
    v(x) = F\left(\frac{x + 1-c + \epsilon}{2}\right) - F\left(\frac{c - \epsilon + x}{2}\right) > 1/3.
  \end{align*}
  Thus $x$ cannot be eliminated next, since there is a candidate with a smaller vote share than $x$. The IRV winner must therefore be in $[c, 1-c]$ by the same argument as in \Cref{thm:1/6}.
\end{proof}

We now consider two cases where Condition~(\ref{ineq:1/3-cond}) can be greatly simplified: when the voter distribution is moderate 
 ($f$ increases over $[0, 1/2]$; \Cref{thm:moderate-voters-exclusion}) and when voters are polarized ($f$ decreases over $[0, 1/2]$ but $F(1/4) < 1/3$; \Cref{thm:polarized-voters-exclusion}). The proofs in these cases follow the same structure, but differ in where moderate candidates are easiest to squeeze out (nearer or farther from 1/2). Proofs can be found in \Cref{app:proofs}. As another note, just as with \Cref{cor:probabilistic-moderation}, we immediately see from \Cref{thm:general-exclusion} (and the special cases below) that IRV has a probabilistic moderating effect with symmetric voter and candidate distributions (as long as they place positive mass on $[c, 1-c]$): as the number of candidates goes to infinity, the probability that the winner comes from $[c, 1-c]$ goes to 1.

\begin{theorem}\label{thm:moderate-voters-exclusion}
 (Moderate voter distribution.) Let $f$ be symmetric over $[0, 1]$ and non-decreasing over $[0, 1/2]$. For any $c \le F^{-1}(1/6)$, if there is a candidate in $[c, 1-c]$, then the IRV winner is in $[c, 1-c]$.
\end{theorem}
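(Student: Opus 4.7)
The plan is to reduce everything to verifying Condition~(\ref{ineq:1/3-cond}) of \Cref{thm:general-exclusion}: given the hypotheses, I want to show that for all $x \in [c, 1/2]$,
\begin{equation*}
g(x) := F\!\left(\frac{x+1-c}{2}\right) - F\!\left(\frac{x+c}{2}\right) > \tfrac{1}{3}.
\end{equation*}
Once this is established, \Cref{thm:general-exclusion} immediately delivers the desired conclusion. So the real content is a univariate minimization of $g$ over $[c, 1/2]$.

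First I would identify where $g$ attains its minimum on $[c, 1/2]$. Differentiating,
\begin{equation*}
g'(x) = \tfrac{1}{2}\left[f\!\left(\tfrac{x+1-c}{2}\right) - f\!\left(\tfrac{x+c}{2}\right)\right].
\end{equation*}
For $x \in [c, 1/2]$, the argument $(x+c)/2$ lies in $[c, 1/2]$, which is in the non-decreasing region of $f$. By symmetry, $f\bigl((x+1-c)/2\bigr) = f\bigl((1+c-x)/2\bigr)$, and $(1+c-x)/2$ also lies in $[0, 1/2]$. Since $x \le 1/2$ gives $(x+c)/2 \le (1+c-x)/2$ and $f$ is non-decreasing on $[0, 1/2]$, we conclude $g'(x) \ge 0$. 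Hence $g$ is non-decreasing on $[c, 1/2]$ and its minimum is $g(c) = F(1/2) - F(c) = 1/2 - F(c)$, using that symmetry of $f$ around $1/2$ forces $F(1/2) = 1/2$. The hypothesis $c \le F^{-1}(1/6)$ gives $F(c) \le 1/6$, so $g(c) \ge 1/3$.

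The one subtlety is that \Cref{thm:general-exclusion} requires strict inequality, while the bound $g(c) \ge 1/3$ becomes an equality precisely when $c = F^{-1}(1/6)$. To handle this, I would revisit the proof of \Cref{thm:general-exclusion}: the actual vote share of the last moderate $x$ is computed with extremist candidates at $c - \epsilon$ and $1-c+\epsilon$ for some $\epsilon > 0$, so the true vote share is $F\bigl((x+1-c+\epsilon)/2\bigr) - F\bigl((x+c-\epsilon)/2\bigr) \ge g(x) + \text{(positive mass contributed by the }\epsilon\text{ shift)}$. As long as $f$ is not identically zero near the relevant points, this extra contribution upgrades $g(c) \ge 1/3$ to $v(x) > 1/3$, which is what the argument actually needs. (Equivalently, one can simply restate the conclusion for $c$ strictly less than $F^{-1}(1/6)$ and take a limit.)

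The main obstacle I anticipate is purely bookkeeping around this boundary case: the monotonicity argument for $g$ is clean, and the evaluation $g(c) = 1/2 - F(c)$ is immediate from symmetry. Everything else reduces to pointing out that once $g$ is non-decreasing on $[c, 1/2]$, the worst case is the last moderate sitting exactly at $c$, which is the intuitive ``hardest to defend'' position when voter mass concentrates near $1/2$.
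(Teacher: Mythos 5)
Your proposal is correct and takes essentially the same route as the paper's proof: minimize the vote-share expression over $x \in [c, 1/2]$, identify the worst case at $x = c$, and evaluate $g(c) = 1/2 - F(c) \ge 1/3$. You in fact supply slightly more detail than the paper, which simply asserts that the minimum occurs at $x = c$ (you prove it via the sign of $g'$) and which handles the strictness issue via the $\epsilon$-shift and the monotonicity of $F$, exactly as in your boundary-case discussion.
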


\begin{theorem}\label{thm:polarized-voters-exclusion}
  (Polarized voter distribution.) Let $f$ be symmetric over $[0, 1]$, non-increasing over $[0, 1/2]$, and let $F(1/4) < 1/3$. For any $c \le 2(F^{-1}(1/3) - 1/4)$, if there is a candidate in $[c, 1-c]$, then the IRV winner is in $[c, 1-c]$.
\end{theorem}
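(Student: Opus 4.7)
The plan is to apply \Cref{thm:general-exclusion}: I verify Condition~(\ref{ineq:1/3-cond}) for every $x \in [c, 1/2]$. The first observation is that the LHS of~(\ref{ineq:1/3-cond}) is the $F$-mass of an interval of fixed length $L = (1-2c)/2$ whose midpoint $m = (2x+1)/4$ sweeps through $[1/4 + c/2,\ 1/2]$ as $x$ ranges over $[c, 1/2]$. So I reduce the problem to locating the midpoint at which this mass is smallest.

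The core step is the monotonicity claim that $g(m) := F(m + L/2) - F(m - L/2)$ is non-increasing on $[L/2, 1/2]$ under the hypotheses. For $m \in [L/2, 1/2]$, both $m - L/2$ and $1 - m - L/2$ lie in $[0, 1/2]$ with $m - L/2 \le 1 - m - L/2$; the non-increasing monotonicity of $f$ on $[0, 1/2]$ together with the symmetry $f(m + L/2) = f(1 - m - L/2)$ then yields
\begin{align*}
g'(m) = f(m + L/2) - f(m - L/2) \le 0.
\end{align*}
Thus $g$ is minimized on the relevant range at $m = 1/2$, i.e.\ at $x = 1/2$. This inverts the moderate-voter case (\Cref{thm:moderate-voters-exclusion}), where the worst $x$ sits at the boundary $c$: here polarization pushes density away from the centre, so a centrally-located moderate is the candidate most vulnerable to being squeezed out by extremists.

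The rest is algebra. Using the symmetry $F(1 - y) = 1 - F(y)$,
\begin{align*}
g(1/2) = F\!\left(\tfrac{3}{4} - \tfrac{c}{2}\right) - F\!\left(\tfrac{1}{4} + \tfrac{c}{2}\right) = 1 - 2F\!\left(\tfrac{1}{4} + \tfrac{c}{2}\right),
\end{align*}
so Condition~(\ref{ineq:1/3-cond}) at $x = 1/2$ reduces to $F(1/4 + c/2) < 1/3$, which rearranges to $c < 2(F^{-1}(1/3) - 1/4)$. The hypothesis $F(1/4) < 1/3$ ensures the threshold is strictly positive (so the statement is non-vacuous), and the boundary value $c = 2(F^{-1}(1/3) - 1/4)$ is covered because the $\epsilon$-perturbation in the proof of \Cref{thm:general-exclusion} supplies strict slack. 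Invoking \Cref{thm:general-exclusion} then completes the argument.

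I expect the monotonicity claim to be the main obstacle. It is geometrically obvious---a fixed-length window captures the least mass when parked over the valley of the density---but making it rigorous takes care, because the right endpoint of the window crosses $1/2$ as $m$ grows, so one cannot apply monotonicity of $f$ to the two window endpoints directly; one must fold via the symmetry of $f$ to bring both comparisons back to $[0, 1/2]$. Everything else is routine substitution once the worst-case $x$ has been pinned down.
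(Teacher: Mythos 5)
Your proposal is correct and follows essentially the same route as the paper's proof: identify $x=1/2$ as the worst case when $f$ is non-increasing on $[0,1/2]$, use symmetry to reduce Condition~(\ref{ineq:1/3-cond}) there to $F(1/4+c/2)\le 1/3$, i.e.\ $c\le 2(F^{-1}(1/3)-1/4)$, and cover the boundary value of $c$ via the strict slack from the $\epsilon$-perturbed extremist positions. The only substantive difference is that you supply an explicit folding argument for why the fixed-length window captures the least mass at the center---a step the paper asserts without justification---and it is valid on the range that actually matters, $m\in[1/4+c/2,\,1/2]$ (your stated domain $[L/2,1/2]$ is slightly too generous for the claim that $1-m-L/2\le 1/2$, but on the excess portion both window endpoints lie in $[0,1/2]$ and the conclusion $g'(m)\le 0$ follows directly from monotonicity of $f$, so nothing breaks).
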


The uniform distribution is the unique distribution whose density function is both non-increasing and non-decreasing over $[0, 1/2]$. Indeed, for uniform $F(x) = x$, $1/6 = 2(F^{-1}(1/3) - 1/4) = F^{-1}(1/6)$. Note that for polarized voter distributions, \Cref{thm:polarized-voters-exclusion} requires $F(1/4) < 1/3$ (i.e., less than 1/3 of voters are left of 1/4). If the population is hyper-polarized and instead $F(1/4) > 1/3$, we can prove that IRV cannot elect moderates if both extremes are represented.

\begin{figure}[t]\centering
  \includegraphics[width=0.8\textwidth]{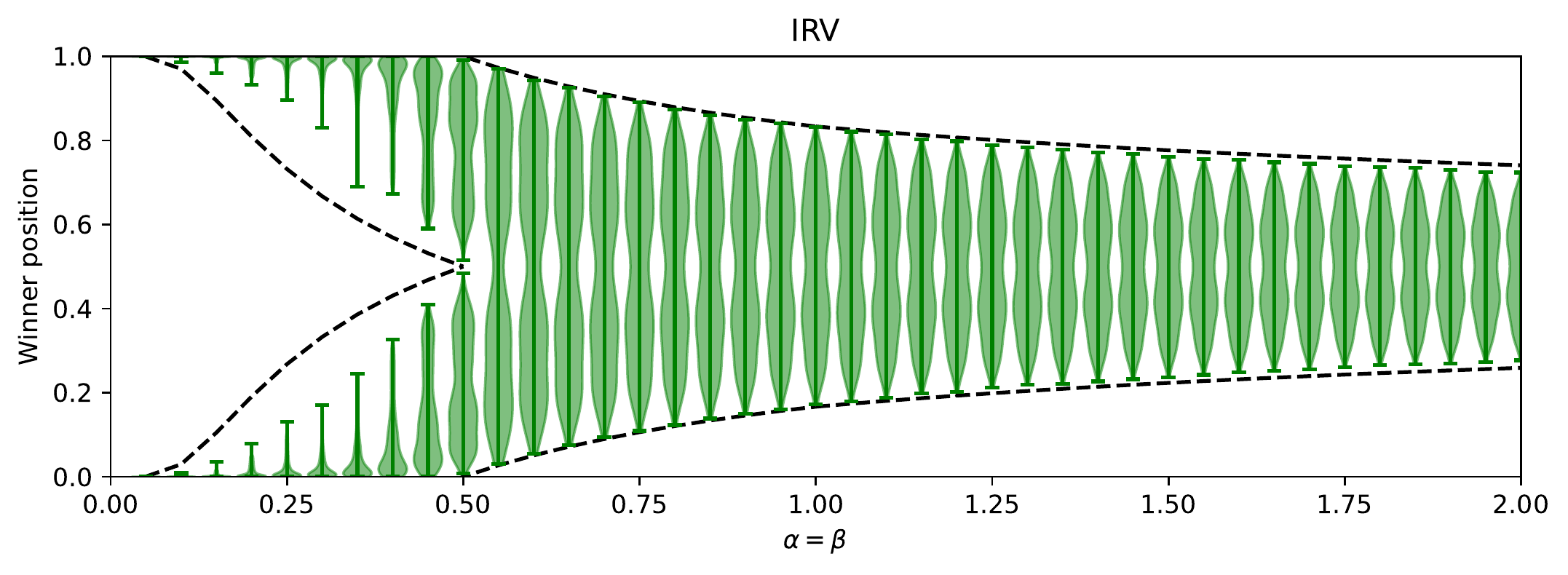}\\
  \begin{flushleft}
  \vspace{-5mm}
    \begin{tikzpicture}
    \node at (0, 0) {};
    \draw [decorate, 
    decoration = {calligraphic brace,amplitude=5pt}, line width=0.75pt,align=center] (5.42,0) --  (2.5,0) node [midway, below=3pt] {\scriptsize \Cref{thm:very-polarized-voters-exclusion} \\ \scriptsize (hyper-polarized)};
    \draw [decorate, 
    decoration = {calligraphic brace,amplitude=5pt}, line width=0.75pt,align=center] (8.39,0) --  (5.48,0) node [midway, below=3pt] {\scriptsize \Cref{thm:polarized-voters-exclusion} \\ \scriptsize(polarized)};
    \draw [decorate, 
    decoration = {calligraphic brace,amplitude=5pt}, line width=0.75pt,align=center] (14.3,0) --  (8.45,0) node [midway, below=3pt] {\scriptsize \Cref{thm:moderate-voters-exclusion} \\ \scriptsize(moderate)};
    \node[fill=white,inner sep=0pt, minimum height=2.5mm, minimum width=6mm] at (14.2, -.1) {$\cdots$};
    
  \end{tikzpicture}\\
  \end{flushleft}
  \includegraphics[width=0.8\textwidth]{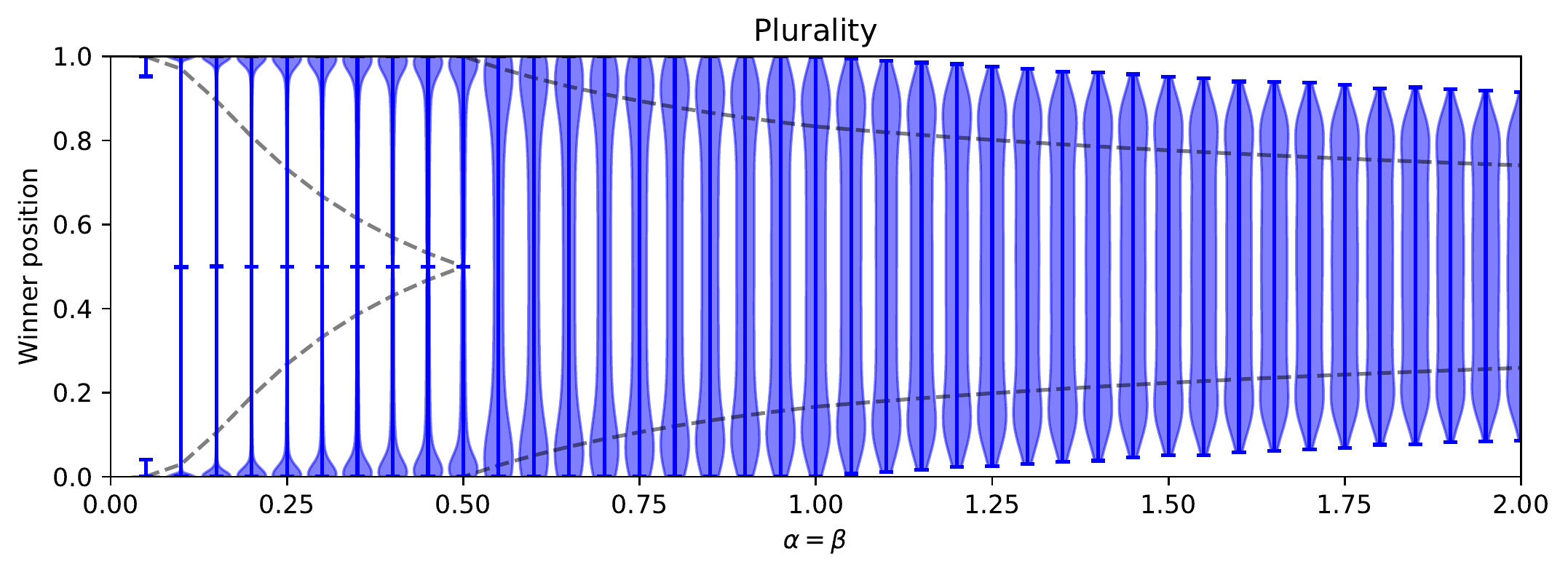}
  \caption{IRV (top) and plurality (bottom) winner positions with Beta$(\alpha, \alpha)$-distributed voters and candidates. The violin plots show empirical distributions from 100,000 simulation trials with $k=30$ candidates at each $\alpha$ value, with whiskers marking extrema. The dashed lines show the bounds from \Cref{thm:very-polarized-voters-exclusion,thm:polarized-voters-exclusion,thm:moderate-voters-exclusion} in the annotated ranges. As long as voters are not too polarized, IRV prevents extreme candidates from winning. Plurality, on the other hand, allows arbitrarily extreme candidates to win for $\alpha = 1$, when the voter distribution is uniform. }\label{fig:beta-sweep}
\end{figure}

\begin{theorem}\label{thm:very-polarized-voters-exclusion}
  (Hyper-polarized voter distribution.) Let $f$ be symmetric over $[0, 1]$ and let $F(1/4) > 1/3$. For any $c \ge 2 F^{-1}(1/3)$, if there is at least one candidate in $[0, c]$ and at least one candidate in $[1-c, 1]$, then the IRV winner must be in $[0, c]$ or $[1-c, 1]$. 
\end{theorem}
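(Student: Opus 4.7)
The plan is to argue by contradiction, reversing the ``squeezing'' argument used in \Cref{thm:1/6} and \Cref{thm:general-exclusion}: instead of showing a lone moderate is too large (vote share above $1/3$) to be eliminated, I will show that in the hyper-polarized regime, the last surviving extreme on each side has vote share above $1/3$ while the adjacent moderate has vote share below $1/3$, so moderates are eliminated in preference to the last extremes. Specifically, assume some $x^{*}\in(c,1-c)$ is the IRV winner. I will establish the following invariant throughout the elimination process: whenever at least one candidate in $(c,1-c)$ is alive, at least one candidate in $[0,c]$ and at least one in $[1-c,1]$ are alive as well. Given this invariant, the penultimate IRV state (two alive candidates) cannot contain any moderate, since a moderate would need to be accompanied by one extreme on each side, giving three candidates. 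Hence the winner, drawn from this state, lies in $[0,c]\cup[1-c,1]$, contradicting our assumption on $x^{*}$.

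I would prove the invariant by induction on elimination steps, the base case being the theorem's hypothesis. For the inductive step, it suffices to rule out the case where the eliminated candidate is the sole surviving left extreme (and symmetrically, the sole surviving right extreme) while moderates persist, since eliminating any other extreme still leaves at least one on that side. By symmetry, focus on the left. Let $a$ be the only alive left extreme; since moderates live in $(c,1-c)\subseteq(a,1-c)$, the right neighbor of $a$ on the line must be the leftmost surviving moderate, call it $x_{1}$, whose right neighbor $r$ is either another moderate or a right extreme. The key inequalities are then
\begin{align*}
v(a) &= F\!\left(\tfrac{a+x_{1}}{2}\right) > F(c/2) \ge 1/3,\\
v(x_{1}) &= F\!\left(\tfrac{x_{1}+r}{2}\right) - F\!\left(\tfrac{a+x_{1}}{2}\right) < F(1-c/2) - 1/3 = 1/3.
\end{align*}
The first line uses $a+x_{1}>c$ and $c/2\ge F^{-1}(1/3)$; the second uses $x_{1}+r<2-c$ (from $x_{1}<1-c$ and $r\le 1$) together with the symmetry identity $F(1-c/2)=1-F(c/2)\le 2/3$. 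So $v(x_{1})<1/3<v(a)$, meaning $a$ is strictly not the minimum vote share and therefore not eliminated in this step, preserving the invariant.

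The main obstacle is conceptual rather than computational: one must notice that only the step at which a side would lose its last extreme is delicate, so eliminations that merely thin out the multiple extremes on a side can be ignored entirely. Once this framing is in place, the calculation above is a direct consequence of the threshold $c\ge 2F^{-1}(1/3)$ and the symmetry of $F$. A minor care is needed at the boundary $c=2F^{-1}(1/3)$ to keep the inequalities strict, which follows from the strict monotonicity of $F$ implied by the existence of a density $f$, since both containments $(a+x_{1})/2>c/2$ and $(x_{1}+r)/2<1-c/2$ are strict.
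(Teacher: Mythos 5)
Your proof is correct and follows essentially the same route as the paper's: the crux in both is that the sole surviving candidate in $[0,c]$ (resp.\ $[1-c,1]$) has vote share exceeding $F(c/2)\ge 1/3$ and therefore cannot be eliminated while a moderate remains, so each side retains an extreme until all moderates are gone. The only cosmetic difference is that you exhibit a specific victim ($x_1$, with $v(x_1)<1/3$, requiring the extra symmetry step $F(1-c/2)=1-F(c/2)$) whereas the paper deduces the existence of a smaller vote share from the fact that the shares sum to $1$; your explicit invariant-plus-induction packaging is a cleaner write-up of what the paper leaves implicit.
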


We saw in \Cref{thm:no-exclusion-uniform} that plurality has no exclusion zones for uniform voters. We now show that plurality has no exclusion zones regardless of the voter distribution (given mild continuity and positivity conditions), except the points 0 and 1. The proof can be found in \Cref{app:proofs}.

\begin{theorem}\label{thm:no-exclusion}
  (No combinatorial moderation for plurality.) Let $f$ be continuous and strictly positive over $(0, 1)$. Given any set of $\kappa \ge 1$ distinct candidate positions $x_1, \dots, x_\kappa$ with $x_1 \notin \{0, 1\}$, there exists a configuration of $k\ge \kappa$ candidates (including $x_1, \dots, x_\kappa$) such that the candidate at $x_1$ wins under plurality. If $x_1 \in\{0, 1\}$, then there exist voter distributions where $x_1$ cannot win under plurality.
\end{theorem}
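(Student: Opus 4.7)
The plan is to adapt the argument from \Cref{thm:no-exclusion-uniform} to the non-uniform setting by replacing Euclidean spacing with $F$-mass control, using the continuity and strict positivity of $f$. I handle the two halves of the statement separately.

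For the main claim ($x_1 \notin \{0,1\}$), I first reduce to the case where the configuration contains candidates at $0$ and $1$ (adding them if absent; the theorem permits any augmentation). I then pick $\epsilon > 0$ small enough that $[x_1-\epsilon, x_1+\epsilon] \subset (0,1)$ is disjoint from all other existing candidates, and add new candidates $L := x_1 - \epsilon$ and $R := x_1 + \epsilon$. Now $x_1$'s vote share is $V := F(x_1+\epsilon/2) - F(x_1-\epsilon/2) > 0$. Using continuity of $f$ at $x_1$, for small $\epsilon$ we have $V = f(x_1)\epsilon + o(\epsilon)$, while the upper bounds $V_L := F(x_1-\epsilon/2) - F(x_1-\epsilon)$ and $V_R := F(x_1+\epsilon) - F(x_1+\epsilon/2)$ on the vote shares of $L$ and $R$ each satisfy $V_L, V_R = f(x_1)\epsilon/2 + o(\epsilon)$. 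Hence for $\epsilon$ sufficiently small, both $V_L < V$ and $V_R < V$.

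The remaining step is to add further candidates so that every non-$x_1$ candidate has vote share less than $V$. The vote share of $L$ is $F(x_1-\epsilon/2) - F((z+L)/2)$, where $z$ denotes $L$'s left neighbor; as $z \to L^{-}$ this tends to $V_L < V$, so by continuity of $F$ picking $z$ close enough to $L$ makes $L$'s vote share less than $V$, and symmetrically for $R$. For all other candidates in $[0, L] \cup [R, 1]$ (original or newly added), I invoke uniform continuity of $F$ on $[0,1]$: there exists $\delta > 0$ such that any interval of length at most $\delta$ has $F$-mass less than $V/2$. Adding enough candidates so that consecutive candidates in $[0, L] \cup [R, 1]$ lie within Euclidean distance $\delta$ forces each such candidate's Voronoi cell (of length at most $\delta$) to have $F$-mass less than $V$, completing the construction.

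For the second half ($x_1 \in \{0,1\}$), a single explicit example suffices. Take $x_1 = 0$, $\kappa = 2$, and any $x_2 \in (0,1]$, with voter distribution supported on $[1-\eta, 1]$ for some $\eta < 1/2$ (for instance, as a point mass at $1$, or a continuous bump). Then in any augmented configuration, the vote share of $0$ equals $F(y/2)$, where $y \leq 1$ is $0$'s nearest right neighbor; since $y/2 \leq 1/2 < 1-\eta$, this is $0$. Because some other candidate necessarily captures positive mass from the support, candidate $0$ cannot strictly win. The case $x_1 = 1$ is symmetric. The main technical obstacle is the strict inequality $V_L, V_R < V$ in the main claim; this is precisely where both continuity and strict positivity of $f$ are needed (so that the quadratically shrinking interval around $x_1$ has mass bounded below by a multiple of $\epsilon$), and once secured the flooding step is a routine uniform-continuity argument.
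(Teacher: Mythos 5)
Your proof of the main claim is correct and essentially the same as the paper's: both arguments bracket $x_1$ with two nearby candidates whose placement is controlled by the continuity and positivity of $f$ at $x_1$ (the paper fixes $\delta$ so that $f$ varies by at most $f(x_1)/4$ on $[x_1-\delta,x_1+\delta]$, yielding explicit bounds $\tfrac{5}{8}\delta f(x_1)$ versus $\tfrac{3}{4}\delta f(x_1)$; your $f(x_1)\epsilon/2+o(\epsilon)$ versus $f(x_1)\epsilon+o(\epsilon)$ is the same estimate in asymptotic form), and then suppress all remaining candidates by densely packing $[0,\ell]$ and $[r,1]$ --- your uniform-continuity $\delta$-net is a slightly cleaner packaging of the paper's iterative ``add a candidate next to the current maximizer'' step.

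The one place you diverge is the endpoint claim. The paper's counterexample is a monotonically increasing density (e.g.\ $f(x)=2x$), which is continuous and strictly positive on $(0,1)$ and hence lies in the theorem's own hypothesis class; the argument there is that the right neighbor of $0$ always captures more than half the mass of $[0,r]$. Your construction (a point mass at $1$ or a bump supported on $[1-\eta,1]$) proves the literal statement but uses a distribution whose density vanishes on $(0,1-\eta)$, so it does not witness that the exclusion of $x_1\in\{0,1\}$ is necessary \emph{within} the class of distributions the theorem is about. This is a minor point and easily repaired by switching to a strictly increasing positive density, but as written your example sidesteps rather than illustrates the role of the endpoints.
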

 
\Cref{fig:beta-sweep} provides illustrations to accompany \Cref{thm:moderate-voters-exclusion,thm:polarized-voters-exclusion,thm:very-polarized-voters-exclusion,thm:no-exclusion}, showing empirical IRV and plurality winner positions when voters (and $k = 20$ candidates) are distributed according to symmetric Beta$(\alpha, \alpha)$ distributions. This family of Beta distributions is polarized for $\alpha < 1$, uniform for $\alpha = 1$, and moderate for $\alpha > 1$. \Cref{thm:moderate-voters-exclusion} thus applies for $\alpha \ge 1$. The crossover point between \Cref{thm:polarized-voters-exclusion} and \Cref{thm:very-polarized-voters-exclusion} occurs at $\alpha = 1/2$ (i.e., for Beta(1/2, 1/2), $F^{-1}(1/3) = 1/4$). \Cref{fig:beta-sweep} also visualizes the positions of plurality winners for these voter distributions, consistent with our analysis of plurality in~\Cref{thm:no-exclusion}.

 Finally, we revisit the  existing moderating effect result of \citeauthor{grofman2004if}  with single-peaked voters and strengthen it in the symmetric 1-Euclidean case. Recall that 1-Euclidean preferences are always single-peaked, but most sets of single-peaked preferences are not 1-Euclidean. That is, we make a stronger assumption on voter preferences and thus derive a stronger result. \citet{grofman2004if} proved that when voters have single-peaked preferences over $k \le 4$ candidates, if plurality elects the median candidate, so does IRV. The \emph{median candidate} here is defined as the candidate most preferred by the median voter (with single-peaked preferences, this is the Condorcet winner~\cite{black1948rationale}). With symmetric 1-Euclidean voters, the median candidate is the candidate closest to 1/2 (i.e., the most moderate candidate). Thus, applying the result of \citeauthor{grofman2004if} directly to the symmetric 1-Euclidean voter setting, we know for $k\le 4$ that whenever plurality elects the most moderate candidate, IRV does too. In the symmetric 1-Euclidean setting, we can strengthen this theorem to consider what happens when plurality does not elect the most moderate candidate. Note that this result holds for any symmetric voter distribution.

\begin{theorem}\label{thm:small-k-not-more-extreme}
For $k \le 4$ with symmetric 1-Euclidean voters, the IRV winner cannot be more extreme than the plurality winner (if no ties occur). For $k \ge 5$, the IRV winner can be more extreme than the plurality winner. 
\end{theorem}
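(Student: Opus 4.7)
The plan splits into two directions. For $k \le 4$, I would show the IRV winner $r$ cannot be strictly more extreme than the plurality winner $p$; for $k \ge 5$, I would exhibit an explicit counterexample.

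For $k \le 4$, the pivot is that with symmetric 1-Euclidean voters any pairwise contest is won by whichever candidate is closer to $1/2$: the midpoint of the two positions splits the symmetric mass so that the majority sits on the side containing the closer-to-$1/2$ candidate. Hence the IRV winner $r$ is always the closer-to-$1/2$ of the two IRV finalists. If $r$ were strictly more extreme than $p$, then $p$ would have to be more moderate than both finalists and therefore eliminated before the final round---which for $k = 3$ is immediately impossible, since that would force $v(p)$ to be the smallest first-round share despite $p$ being the plurality winner. For $k = 4$, $p$ could in principle survive round~$1$ and be eliminated in round~$2$; I would dispose of this case by splitting on the position of the most-moderate candidate $x_m$. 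When $p = x_m$, the result of \citet{grofman2004if} cited in the excerpt gives $r = x_m = p$. When $p \neq x_m$ and $x_m \in \{x_1, x_4\}$, moderation forces $x_m$'s Voronoi region to contain $1/2$, so $v(x_m) > 1/2$ and $p = x_m$ after all. When $x_m$ is interior and $p$ is a spatial neighbor of $x_m$, any elimination of $x_m$ transfers strictly positive mass to $p$, keeping $p$'s round-$2$ share above its already-maximal $v(p)$, so $p$ cannot minimize in round~$2$. The delicate subcase is $x_m$ interior with $p$ not adjacent (by left/right symmetry it suffices to handle $x_m = x_2$, $p = x_4$): here the inequalities required for $p$ to be overtaken in round~$2$ (namely $x_2 + x_3 < x_1 + x_4$ for $v(x_2) < v(x_3)$, and $x_1 + x_2 + x_3 + x_4 < 2$ for $v(x_4) > v(x_1)$) jointly yield $x_2 + x_3 < 1$, whereas $x_2$ being the most moderate forces $x_2 + x_3 > 1$---a contradiction. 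Thus $p$ always survives to the final round, so either $r = p$ or $r$ is the strictly more moderate finalist.

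For $k \ge 5$, I would exhibit a concrete counterexample with uniform voters on $[0,1]$ and five candidates at positions $0$, $0.15$, $0.5$, $0.7$, $0.92$. The initial vote shares are $0.075$, $0.25$, $0.275$, $0.21$, $0.19$, so the plurality winner sits at $0.5$. IRV eliminates $0$ first (its voters transfer to $0.15$, raising its share to $0.325$), then $0.92$ (transferring to $0.7$, raising it to $0.4$), then $0.5$ itself (now the smallest at $0.275$, with its voters splitting at the midpoint $(0.15 + 0.7)/2 = 0.425$), leaving $0.7$ with a final share of $0.575$. Since $0.7$ is strictly more extreme than the plurality winner at $0.5$, IRV has moved in the more-extreme direction. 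To extend to arbitrary $k \ge 5$, I would add $k - 5$ dummy candidates clustered just above $0$; they are eliminated in the earliest rounds with votes funneling into $0.15$, preserving the downstream dynamics. The main obstacle lies in the non-adjacent subcase of the $k = 4$ analysis: my inequality bookkeeping uses uniform-$F$ lengths, and generalizing to arbitrary symmetric $F$ requires restating the chain in terms of $F$ and its symmetry relation $F(y) + F(1-y) = 1$, which I expect to go through cleanly but takes care.
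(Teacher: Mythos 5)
Your $k\le 3$ arguments and your $k\ge 5$ counterexample are correct and essentially parallel the paper's (the paper places candidates at $\epsilon, 1/5, 1/2, 4/5, 1$; your $0, 0.15, 0.5, 0.7, 0.92$ with dummies packed near $0$ works equally well). The gap is in the $k=4$ case. Your strategy is to prove the stronger claim that the plurality winner $p$ always survives to the final round, but that claim is false. Take uniform voters and candidates at $0.11,\ 0.4,\ 0.56,\ 0.9$: first-round shares are $0.255,\ 0.225,\ 0.25,\ 0.27$, so $p = 0.9$; IRV eliminates $0.4$ first, after which the shares are $0.335,\ 0.395,\ 0.27$, and $p = 0.9$ is eliminated second. (The theorem still holds here---the IRV winner is $0.56$, more moderate than $0.9$---but not by your route.) This example sits squarely in your subcase ``$x_m$ interior, $p$ a spatial neighbor of $x_m$'': here $x_m = 0.56$ and $p = 0.9$. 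Your argument in that subcase only analyzes what happens when $x_m$ itself is the round-one eliminee, but the candidate eliminated first can be the \emph{other} interior candidate (here $0.4$), which lies between the two candidates that then jointly overtake $p$; your conclusion that ``$p$ cannot minimize in round 2'' therefore fails.

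The repair has to use the hypothesis you are trying to contradict. If the IRV winner $r$ is strictly more extreme than $p$, then, as you correctly observe, $p$ is more moderate than both finalists, which pins the most moderate candidate $x_m$ to be either $p$ or the first eliminee $E$---and that is the case split that closes the argument. The paper does exactly this: assuming $I$ is more extreme than $P$, it deduces that the final round is $I$ versus $F$ with $P$ eliminated second and $E$ spatially between $I$ and $F$, with $P$ off to one side; if $P$ is adjacent to $I$, then $P$ captures everything on its outer side and holds a majority (so it can never be eliminated), while if $P$ is adjacent to $F$, then $F$ lies between $I$ and $P$ and is therefore more moderate than $I$, so $I$ loses the final round. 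Your subcase with $x_m = x_2$, $p = x_4$ is essentially this computation in coordinates (and does generalize from uniform to symmetric $F$ as you anticipate, since moderation over an interval is maximized at an endpoint), but the adjacent subcase as written cannot be patched without invoking the extremeness of both finalists. A smaller point: your first subcase leans on the cited Grofman--Feld result, whereas the paper's argument is self-contained.
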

\begin{proof}
The $k=1$ and $k=2$ cases are trivial, since IRV and plurality are identical when $k < 3$.

For $k=3$, suppose for a contradiction that the plurality winner $P$ is more moderate than the IRV winner $I$ (call the third candidate $E$). Under IRV, the first candidate eliminated can't be $I$ (since they win under IRV) and can't be $P$ (since they have the highest first-place vote share), so it must be $E$. In the second round of IRV, we are then left with a two-candidate plurality election between $I$ and $P$. Since voters are symmetrically distributed, the more moderate of $I$ and $P$ thus wins under IRV, which is $P$. Contradiction! 

For $k=4$, suppose again for a contradiction that the IRV winner $I$ is more extreme than the plurality winner $P$. As before, neither can be the first eliminated. Call the first  candidate eliminated $E$ and the fourth candidate $F$. Since $P$ is more moderate than $I$, the final IRV round cannot be between $P$ and $I$, or else $P$ would win, contradicting that $I$ is the IRV winner. Thus, the final round must be between $I$ and $F$. $P$ must then be the second eliminated after $E$. However, $P$ has a higher vote share than both $I$ and $F$ in the first round. To be eliminated second, the elimination of $E$ must cause $I$ and $F$ to overtake $P$. To redistribute votes to both $I$ and $F$, $E$ must be directly between them, with $P$ off to one side of the $I, E, F$ group. Consider two cases: (1) $P$ is adjacent to $I$. Since $P$ is more moderate than $I$, it must get all of the vote share on the side of opposite the $I, E, F$ group (either [1, 0.5] or [0.5, 1]), which means it has a majority---contradicting that $I$ is the IRV winner. (2) $P$ is adjacent to $F$. But then $F$ is more moderate than $I$, so $I$ cannot win in the final round---contradicting that it is the IRV winner.

For $k \ge 5$, place candidates at $\epsilon, 1/5, 1/2, 4/5,$ and $1$ for small $\epsilon$ (for instance $\epsilon\le 0.01$ works; additional candidates can be packed into $[0, \epsilon]$). Note that the candidate at $1/2$ is the plurality winner, with vote share $3/10$. The candidates in $[0, \epsilon]$ are eliminated first under IRV, followed by the candidates at $1$ and $\epsilon$. At this point, the candidates at $1/5$ and $4/5$ have a higher vote share than the candidate at $1/2$, who is eliminated. The IRV winner is then either at $1/5$ or $4/5$.
\end{proof}

See \Cref{fig:plurality-irv-scatter} in \Cref{app:figures} for simulation results demonstrating \Cref{thm:small-k-not-more-extreme}. All simulation code and results are available at \url{https://github.com/tomlinsonk/irv-moderation}.

\section{Discussion}

We began by considering a contrast between IRV and plurality voting
when the positions of voters and candidates are drawn from the uniform 
distribution on the unit interval:
in this case, IRV (unlike plurality)
has a moderating effect, with the probability 
that the winner comes from the interval $[1/6,5/6]$ converging to 1
as the number of candidates goes to infinity.
This moderating effect continues to hold (with proper sub-intervals
different from $[1/6,5/6]$) even as the distribution
of voters and candidates becomes more polarized, with an increasing
amount of probability mass near the endpoints of the interval,
until a specific threshold of hyper-polarization is reached.
Our analysis also provides methods for determining the exact distribution
of winner positions in certain cases, making more fine-grained
comparisons between IRV and plurality possible. 

It would be interesting to consider extensions of our work in a number of
directions, and here we highlight three of these.
First, we did not consider
strategic analyses (e.g., of Nash equilibria, as in \citet{dellis2017policy}), and were instead
motivated by bounded rationality~\cite{bendor2011behavioral} and a need to better understand the
underlying voting system, focusing on the non-strategic setting where
candidate positions are fixed. 
For instance,
how might candidates behave strategically given an understanding 
of IRV exclusion zones or 
the winner position distribution of IRV? Behavioral evidence for
bounded rationality indicates that people tend to operate at a low
strategic
depth~\cite{stahl1995players,colman2003depth,ohtsubo2006depth}. In
this framework, level-0 players act randomly, level-1 players
calculate best responses to level-0 players, and so on. Our analysis
therefore corresponds to level-0 strategic reasoning, and can be used
as a starting point for analysis of higher-order strategy. 

Second, we modeled voting populations
as symmetric continuous distributions in one dimension, with
preferences arising strictly from distances in this dimension. 
Considering
higher-dimensional preference spaces would also be a natural extension
of our analysis. Does IRV exhibit exclusion zones in two, three, or
more dimensions? Asymmetric voter distributions would also be valuable
to consider, although the notion of a \emph{moderate} may need to be
revisited in this case (perhaps based on the median voter). Using the
same squeezing argument, IRV should also exhibit exclusion zones with
asymmetric voter distributions, although their forms may not be as
tidy as the ones we derive. Other possible extensions include non-linear voter preferences (for instance, where a voter ranks all candidates on their right before all candidates on their left, regardless of distance), probabilistic voting, and voter abstention. Practical considerations of IRV could also be taken into account; for instance, real-world elections often ask for top-truncated preferences rather than full rankings, which can the affect the outcome~\cite{tomlinson2023ballot}. Does IRV with truncated ballots still exhibit a moderating effect?

Finally, as we noted earlier, there are voting systems that always
select the most moderate candidate with symmetric 1-Euclidean voters.
This is true for any system that satisfies the Condorcet criterion,
selecting the Condorcet winner whenever one exists (a property that holds for the minimax, Condorcet-Hare, Copeland, and Dodgson 
methods, among many others~\cite{black1958theory,richelson1975comparative,green2016statistical}); it is also true for
some other voting systems that do not in general satisfy the Condorcet
criterion, like the Coombs rule~\cite{coombs1964theory,grofman2004if}.
There are a variety of practical and historical reasons why these
methods are not widely used for political elections. For instance,
Dodgson's method is NP-hard to compute~\cite{bartholdi1989voting} and the Coombs rule is very
sensitive to incomplete ballots, which are common in practice. As we
are motivated by ongoing debates about IRV and plurality, our
attention has been restricted to these voting methods. However, a
broader understanding of moderating effects of voting systems would be
valuable. There has been some theoretical work on moderating effects
of score-based voting systems (like Borda count and approval voting)
with strategic voters and candidates~\cite{dellis2009would}. However,
it is an open question (with some computational evidence to support it
\cite{chamberlin1978toward}) whether other voting systems like Borda count
exert a moderating effect in the setting we study, with fixed voter
and candidate distributions.

\section*{Acknowledgments}
This work was supported in part by ARO MURI, a Simons Investigator Award, a Simons Collaboration grant, a grant from the MacArthur Foundation, the Koret Foundation, and NSF CAREER Award \#2143176. We thank Robert Kleinberg and Spencer Peters for suggesting the circle-cutting argument used to prove \Cref{thm:plurality-asymptotic-uniform}.

\bibliographystyle{plain}
\bibliography{references}
\clearpage

\appendix

\section{Quotes about IRV Moderation}\label{app:quotes}

In this section, we include some quotes indicating that moderating effects of IRV are an often-invoked argument in policy debates, suggesting the existence of a folk theory which we formalize in this work. Recall that IRV is commonly referred to as ranked-choice voting in United States.

\begin{table}[h]
\centering
\caption{Quotes for and against a moderating effect of IRV}
  \begin{tabular}{p{15cm}}
  \toprule
    
  \textit{[Under ranked-choice voting,] voters can support their favorites while still voting effectively against their least favorite. Having more competition encourages better dialogue on issues. Civility is substantially improved. Needing to reach out to more voters leads candidates to reduce personal attacks and govern more inclusively.}\\
  ---Howard Dean, former Governor of Vermont~\cite{dean2016}\\[1em]
  
  \textit{We need an electoral system that breaks the current stranglehold of the two-party monopoly, one that would allow voters to choose between a much more nuanced range of positions than “extreme” versus “moderate,” would allow third-party candidates to run without being spoilers and would encourage more civil campaigning and political discourse. The solution is to adopt ranked-choice voting for all state and federal elections [....] We need [...] reforms that will allow the American people to reassert our power over a party system that is badly broken and compel candidates to appeal to a far broader swathe of us than a narrow ``base.''}\\---Anne-Marie Slaughter, CEO of New America~\cite{slaughter}\\[1em]

  \textit{Quite to the contrary, [ranked-choice voting] may give life to more strident candidates, hoping to siphon first-place ballots from extreme voters who will give second preference to whichever major party is closest to them. This could result in more comity between the major-party candidates, as fringier competitors blot the airwaves with attacks. Or it might produce strategic coalitions sniping at each other, leaving us effectively back where we started.}\\
  ---Simon Waxman, former managing editor at Boston Review~\cite{waxman2016}\\[1em]

  \textit{However, ranked-choice voting makes it more difficult to elect moderate candidates when the electorate is polarized. For example, in a three-person race, the moderate candidate may be preferred to each of the more extreme candidates by a majority of voters. However, voters with far-left and far-right views will rank the candidate in second place rather than in first place. Since ranked-choice voting counts only the number of first-choice votes (among the remaining candidates), the moderate candidate would be eliminated in the first round, leaving one of the extreme candidates to be declared the winner. [...] The ranked-choice system that is being used around the country to conduct elections with more than two candidates is biased towards extreme candidates and away from moderate ones.}\\
  ---Nathan Atkinson, Assistant Professor at University of Wisconsin Law School, and Scott C. Ganz, Associate Teaching Professor at Georgetown University)~\cite{atkinson}\\
  
  \bottomrule
  \end{tabular}
\end{table}

\section{Additional proofs}\label{app:proofs}
Since \Cref{thm:plurality-asymptotic-uniform} relies on \Cref{lemma:gumbel}, we first prove \Cref{lemma:gumbel}.

\begin{proof}[Proof of \Cref{lemma:gumbel}]
For notational simplicity, define $n = k+1$ to be the number of gaps between candidates (including the leftmost and rightmost gaps bounded by 0 and 1) and let $S_1, \dots, S_{n}$ be the sizes of the gaps. Additionally, let $X_1, \dots, X_n$ be i.i.d.\ exponential random variables with mean 1 and let $T_n = \sum_{i=1}^n X_i$ be their sum. We then have $S_i = X_i / T_n$~\cite{holst1980lengths} and $T_n$ is independent of each $S_i$ (the $S_i$ are not independent, however). Since voters are uniform, the vote shares are $S_1 + S_{2}/2$ for the leftmost candidate and $(S_i + S_{i+1})/2$ for the $i$th candidate for $i=2, \dots, n-2$. For the rightmost candidate, we introduce an alternative indexing to avoid a subscript dependent on $n$. Let $S_r$ and $S_{r'}$ be the rightmost and second-rightmost gaps, respectively, so that the rightmost candidate's vote share is $S_{r'}/2 + S_r$ (ditto for $X_r$ and $X_r'$).  We then have 
\begin{align*}
S_1 + S_{2} / 2 &= \frac{1}{T_n}(X_1 + X_2 / 2),\\
(S_i + S_{i+1})/2  &= \frac{1}{T_n}(X_i + X_{i+1})/2,\\
S_{r'} / 2 + S_{r}  &= \frac{1}{T_n}(X_{r'} / 2 + X_{r}).
\end{align*}

Consider the asymptotic CDF of the leftmost candidate's vote share scaled by $n \ge 1$:
\begin{align*}
  \lim_{n\rightarrow \infty}\Pr(n(S_1 + S_{2} / 2) \le z) &= \lim_{n\rightarrow \infty} \Pr\left(\frac{n}{T_n}(X_1 + X_2 / 2) \le z\right)\\
  &= \lim_{n\rightarrow \infty} \Pr\left(X_1 + X_2 / 2 \le \frac{T_n}{n} z\right).
\end{align*}
Since it is the sum of $n$ independent exponential RVs with mean $1$, $T_n$ has a Gamma$(n, 1)$ distribution~\cite{holst1980lengths}, so it has variance $n$ and expectation $n$. Thus, $\Var(T_n / n) = 1/n^2 \Var(T_n) = 1/n$. Since $\Var(T_n / n) \rightarrow 0$ as $n \rightarrow \infty$ and $\E[T_n / n] = 1$, we have $\lim_{n \rightarrow \infty} T_n / n = 1$. This can also be seen using the Law of Large Numbers, since $\sum_i X_i /n$ converges in probability to $\E[X_i] = 1$.  Thus, 
\begin{align*}
  \lim_{n\rightarrow \infty} \Pr\left(X_1 + X_2 / 2 \le \frac{T_n}{n} z\right) 
  =
  \Pr\left(X_1 + X_2 / 2 \le z\right).
\end{align*}
Similarly for middle candidates:
\begin{align*}
\lim_{n\rightarrow \infty}\Pr(n(S_{i} + S_{i+1}) / 2 \le z) &= \lim_{n\rightarrow \infty} \Pr\left(\frac{n}{T_n}(X_{i} + X_{i+1}) / 2 \le z\right)\\
  &= \lim_{n\rightarrow \infty} \Pr\left((X_{i} + X_{i+1}) / 2 \le \frac{T_n}{n} z\right)\\
&= \Pr\left((X_{i} + X_{i+1}) / 2 \le z\right).
\end{align*}
Likewise, for the rightmost candidate,
\begin{align*}
\lim_{n\rightarrow \infty}\Pr(n(S_{r'}/2 + S_{r}) \le z) &= \lim_{n\rightarrow \infty} \Pr\left(\frac{n}{T_n}(X_{r'}/2 + X_r) \le z\right)\\
  &= \lim_{n\rightarrow \infty} \Pr\left(X_{r'}/2 + X_r \le \frac{T_n}{n} z\right)\\
&= \Pr\left(X_{r'}/2 + X_r \le  z\right).
\end{align*}

Thus, the asymptotic distributions of $n$ times the vote shares equal the distributions of the corresponding sums of exponentials RVs. This is the same idea used in proving the asymptotic distribution of the maximum gap size~\cite{holst1980lengths}. Now consider the distribution of the maximum vote share. Let $V_k$ be the maximum vote share with $k\ge 3$ candidates (and therefore $n=k+1$ gaps between candidates) and let $M_k$ be the maximum corresponding exponential RV sum. As above, by LLN, $\lim_{n\rightarrow\infty}\Pr(nV_k \le z) = \Pr(M_k \le z)$. Let $L_n = X_1 + X_2 / 2$, $C_{i} = (X_i + X_{i+1})/2$, and $R_n = X_{n-1}/ 2 + X_n$.

\begin{align*}
\Pr(M_k \le z) &= \Pr(L_n \le z, C_2 \le z, \dots, C_{n-2},\le z, R_n \le z)\\
&= \Pr(L_n \le z) \Pr(R_n \le z \mid L_n \le z, C_2 \le z, \dots, C_{n-2}\le z)\\
& \qquad \qquad \cdot  \prod_{i = 2}^{n-2} \Pr(C_i \le z \mid L_n \le z, \dots, C_{i-1} \le z).
\end{align*}
Then, using the facts that each $X_i$ is independent (and thus $C_i$ is independent of $C_{j}$ for $j > i+1$ and $j < i - 1$) and that each $C_i$ is identically distributed, we can simplify the conditioning:
\begin{align*}
  \Pr(M_k \le z) &=  \Pr(L_n \le z) \Pr(C_2 \le z \mid L_n \le z) \Pr(R_n \le z \mid C_{n-2}\le z)\prod_{i = 3}^{n-2} \Pr(C_i \le z \mid  C_{i-1} \le z)\\
  &=  \Pr(L_n \le z) \Pr(C_2 \le z \mid L_n \le z) \Pr(R_n \le z \mid C_{n-2}\le z) \Pr(C_i \le z \mid  C_{i-1} \le z)^{n-4}.
\end{align*}
We now only have four different probabilities to compute.

\begin{enumerate}
  \item $\Pr(L_n \le z) = \Pr(X_1 + X_2/2 \le z)$. Note that $X_1 \sim \text{Exp}(1)$, $X_2/2 \sim \text{Exp}(2)$, and $X_1$ and $X_2/2$ are independent. Thus:
  \begin{align*}
    \Pr(X_1 + X_2/2 \le z) &= \int_0^z e^{-t}\left(\int_0^{z-t} 2e^{-2s}ds \right) dt\\
    &= 1-e^{-2z}-2e^{-z}.
  \end{align*}

  \item $$\Pr(C_2 \le z \mid L_n \le z) = \frac{\Pr((X_2+X_3) /2 \le z, X_1 + X_2/2 \le z)}{\Pr(X_1 + X_2/2 \le z)}.$$
  We already know the denominator from the previous calculation. As before, $X_2/2$ and $X_3/2$ are independent and $\text{Exp}(2)$ distributed. To have $(X_2+X_3) /2 \le z$ and $X_1 + X_2/2 \le z$, we first need $X_1 \le z$, then $X_2/2 \le z- X_1$, and finally $X_3/2 \le z - X_2/2$. Thus:
  \begin{align*}
    \Pr((X_2+X_3) /2 \le z, X_1 + X_2/2 \le z) &= \int_{0}^z e^{-t} \left(\int_{0}^{z-t} 2e^{-2s} \left[\int_{0}^{z-s} 2e^{-2r} \,dr\right]\, ds\right) \,dt\\
    &= 1 - 2 e^{-3 z} + 3 e^{-2 z} - 2 e^{-z} - 2 e^{-2 z} z.
  \end{align*}
  So,
  \begin{align*}
    \Pr(C_2 \le z \mid L_n \le z) &= \frac{1 - 2 e^{-3 z} + 3 e^{-2 z} - 2 e^{-z} - 2 e^{-2 z} z}{1-e^{-2z}-2e^{-z}}\\
    &= 1+2e^{-z} - \frac{2(z + e^z -4)}{e^{z}(e^z-2)-1}.
  \end{align*}

\item $$\Pr(C_i \le z \mid  C_{i-1} \le z) = \frac{\Pr((X_{i} + X_{i+1})/2 \le z, (X_{i-1} + X_i)/2 \le z)}{\Pr((X_{i-1} + X_i)/2 \le z)}.$$
First, consider the denominator:
\begin{align*}
  \Pr((X_{i-1} + X_i)/2 \le z) &= \int_0^z 2e^{-2t}\left(\int_0^{z-t}2e^{-2s}\,ds\right)\,dt\\
  &= 1 - e^{-2 z} (1 + 2 z).
\end{align*}
Now, the numerator. First, we'll require $X_{i-1}/2 \le z$, then $X_{i}/2 \le z- X_{i-1}/2$, then $X_{i+1}/2 \le z-X_{i}/2$:
\begin{align*}
  \Pr((X_{i} + X_{i+1})/2 \le z, (X_{i-1} + X_i)/2 \le z) &= \int_{0}^z 2e^{-2t} \left(\int_{0}^{z-t} 2e^{-2s} \left[\int_{0}^{z-s} 2e^{-2r} \,dr\right]\, ds\right) \,dt\\
  &= 1 - e^{-4 z} - 4 e^{-2 z} z.
\end{align*}
Thus,
\begin{align*}
  \Pr(C_i \le z \mid  C_{i-1} \le z) &= \frac{1 - e^{-4 z} - 4 e^{-2 z} z}{1 - e^{-2 z} (1 + 2 z)}.
\end{align*}
\item \begin{align*}
 \Pr(R_n \le z \mid C_{n-2}\le z) &= \frac{\Pr(X_{n-1}/2 + X_n \le z, (X_{n-2}+ X_{n-1})/2 \le z) }{\Pr((X_{n-2}+ X_{n-1})/2 \le z)}
 \end{align*}
We already know the denominator from the previous step. We also know the numerator, by symmetry with the numerator in step 2. Thus:
\begin{align*}
  \Pr(R_n \le z \mid C_{n-2}\le z)&= \frac{1 - 2 e^{-3 z} + 3 e^{-2 z} - 2 e^{-z} - 2 e^{-2 z} z}{1 - e^{-2 z} (1 + 2 z)}.
\end{align*}
\end{enumerate}
Putting the four pieces together and simplifying: 
\begin{align*}
   \Pr(M_k \le z) = &\left(1-e^{-2z}-2e^{-z}\right)\left(\frac{1 - 2 e^{-3 z} + 3 e^{-2 z} - 2 e^{-z} - 2 e^{-2 z} z}{1 - e^{-2 z} (1 + 2 z)} 
 \right) \\
 &\qquad \cdot\left(1+2e^{-z} - \frac{2(z + e^z -4)}{e^{z}(e^z-2)-1}\right)\left(\frac{1 - e^{-4 z} - 4 e^{-2 z} z}{1 - e^{-2 z} (1 + 2 z)}\right)^{n-4}\\
 &= \frac{e^{-4 z} (-2 + e^z (3 + e^z (-2 + e^z) - 2 z))^2}{-1 + e^{2 z} - 2 z}\left(\frac{1 - e^{-4 z} - 4 e^{-2 z} z}{1 - e^{-2 z} (1 + 2 z)}\right)^{n-4}.\tag{$*$}
\end{align*}
We want to take the limit of $(*)$ as $n \rightarrow \infty$. We'll focus on the second part first, since the limit of a product is the product of the limits (as we will see, both limits are well-defined). Define 
\begin{align*}
\ell(z) &= \lim_{n\rightarrow \infty} \left(\frac{1 - e^{-4 z} - 4 e^{-2 z} z}{1 - e^{-2 z} (1 + 2 z)}\right)^{n-4}.
\end{align*}
Take the log to handle the exponent:
\begin{align*}
  \log \ell(z) &= \lim_{n\rightarrow \infty} (n-4)\log\left(\frac{1 - e^{-4 z} - 4 e^{-2 z} z}{1 - e^{-2 z} (1 + 2 z)}\right)\\
  &= \lim_{n\rightarrow \infty} [(n-4)\log\left(1 - e^{-4 z} - 4 e^{-2 z} z \right) - (n-4) \log\left(1 - e^{-2 z} (1 + 2 z)\right)].
\end{align*}

Now we'll split the limit into its two terms and plug in $(\log n + \log \log n+x) / 2$ for $z$ in $\ell(z)$. The first term:
\begin{align*}
  &\lim_{n\rightarrow \infty} (n-4)\log\left(1 - e^{-4 (\log n + \log \log n+ x) / 2 } - 4 e^{-2 (\log n + \log \log n+ x) / 2 } (\log n + \log \log n+ x) / 2  \right)\\
    &=\lim_{n\rightarrow \infty} (n-4)\log\left(1-e^{-2x}n^{-2}\log^{-2}n - 2e^{-x} n^{-1}\log^{-1}n (\log n + \log \log n + x)\right)\tag{simplify}\\
    &=\lim_{n\rightarrow \infty} \frac{\log\left(1-e^{-2x}n^{-2}\log^{-2}n - 2e^{-x} n^{-1}\log^{-1}n (\log n + \log \log n + x)\right)}{(n-4)^{-1}}\tag{rearrange}\\
  &=\lim_{n\rightarrow \infty} -(n-4)^2 \frac{d}{dn}\log\left(1-e^{-2x}n^{-2}\log^{-2}n - 2e^{-x} n^{-1}\log^{-1}n (\log n + \log \log n + x)\right)\tag{l'H\^{o}pital's rule}\\
  &= \lim_{n\rightarrow \infty} -(n-4)^2 \frac{2e^{x}n\log n (\log n + 1) (\log n + \log \log n + x -1) + 2\log n + 2}{ e^{x}n^2\log^2 n(e^{x}n\log n - 2\log n - 2 \log \log n -2x) - n \log n}\tag{take derivative}\\
    &= \lim_{n\rightarrow \infty} \frac{-2e^{x}n^3\log^3 n - O(n^3\log^2 n)}{e^{2x} n^3 \log^3n - O(n^2\log^3 n)}\tag{isolate highest order terms}\\
    &= -2e^{-x}.
\end{align*}
Plugging $(\log n + \log \log n+ x) / 2 $ into the second term and following the same strategy as above:
\begin{align*}
  &\lim_{n\rightarrow \infty}(n-4) \log\left(1 - e^{-2 ((\log n + \log \log n+ x) / 2)} (1 + 2 ((\log n + \log \log n+ x) / 2 ))\right)\\
  &= \lim_{n\rightarrow \infty} -(n-4)^2 \frac{d}{dn}\log\left(1 - e^{-x}n^{-1}\log^{-1}n (1 + \log n + \log \log n+ x)\right)\\
  &= \lim_{n\rightarrow \infty} (n-4)^2 \frac{(\log n + 1)(\log n + \log \log n + x)}{n \log n \left(-e^{x}n\log n + \log n + \log \log n + x + 1 \right)}\\
  &= \lim_{n\rightarrow \infty}  \frac{n^2\log^2 n + O(n^2 \log n \log \log n)}{-e^{x} n^2\log^2 n + O(n \log^2 n)}\\
  &= -e^{-x},
\end{align*}
where splitting the limit is allowed because both limits are finite. Thus, 
\begin{align*}
  \log \ell((\log n + \log \log n+ x) / 2 ) &= -2e^{-x} - (-e^{-x})\\
  &= -e^{-x}.
\end{align*}
We then have $\ell((\log n + \log \log n+ x) / 2 ) = e^{-e^{-x}}$. Going back to the first part of $(*)$ (recall that we are plugging in $z = (\log n + \log \log n + x) / 2$),
\begin{align*}
  \lim_{n \rightarrow \infty} \frac{e^{-4 z} (-2 + e^z (3 + e^z (-2 + e^z) - 2 z))^2}{-1 + e^{2 z} - 2 z} &= \lim_{z \rightarrow \infty} \frac{e^{-4 z} (-2 + e^z (3 + e^z (-2 + e^z) - 2 z))^2}{-1 + e^{2 z} - 2 z} \\
  &=  \lim_{z \rightarrow \infty} \frac{e^{2z} - O(e^z)}{e^{2 z} - O(z)}\\
  &= 1.
\end{align*}

Combining these findings gives us the limits of $(*)$ as $n\rightarrow \infty$, again plugging in $z = (\log n + \log \log n + x)/2$ and using the results above,
\begin{align*}
\lim_{n \rightarrow \infty} \Pr\left(M_k \le  \frac{\log n + \log \log n + x}{2}\right) = e^{-e^{-x}}.  
\end{align*}
As we saw at the beginning of the proof, to convert from the max sum of exponential RVs to the max plurality vote share (in the limit), we simply multiply by $n$. We can additionally convert back to $k+1 = n$ to prove the claim:
\begin{align*}
 e^{-e^{-x}} &= \lim_{n \rightarrow \infty} \Pr\left(M_k \le  \frac{\log n + \log \log n + x}{2}\right)\\
&=  \lim_{n \rightarrow \infty} \Pr\left(nV_k \le  \frac{\log n + \log \log n + x}{2}\right)  \\
&=  \lim_{k \rightarrow \infty} \Pr\left(V_k \le  \frac{\log (k+1) + \log \log (k+1) + x}{2(k+1)}\right).
\end{align*}
\end{proof}

Since $e^{-e^{-x}}\rightarrow 1$ and $e^{-e^{x}}\rightarrow 0$ as $x \rightarrow \infty$, we immediately have the following corollary of \Cref{lemma:gumbel}.

\begin{corollary}\label{cor:winner-vote-share}
For any function $g(k)$ with $\lim_{k \rightarrow \infty} g(k) = \infty$,
 \begin{equation*}
    \lim_{k\rightarrow \infty} \Pr\left( \frac{\log (k+1) + \log \log (k+1) - g(k)}{2(k+1)} \le V_k \le \frac{\log (k+1) + \log \log (k+1) + g(k)}{2(k+1)} \right) = 1.
 \end{equation*}
\end{corollary}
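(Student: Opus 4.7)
The plan is to split the window probability into an upper and a lower tail and sandwich each one using the pointwise Gumbel limit from Lemma~\ref{lemma:gumbel}. Since $g(k) \to \infty$, for every fixed $M > 0$ we have $g(k) \geq M$ for all sufficiently large $k$, and in particular $g(k) > 0$ eventually, so the window
\[
\left[\frac{\log(k+1)+\log\log(k+1)-g(k)}{2(k+1)},\ \frac{\log(k+1)+\log\log(k+1)+g(k)}{2(k+1)}\right]
\]
is nonempty for large $k$. Writing $A_k$ and $B_k$ for its left and right endpoints, the desired probability equals $\Pr(V_k \le B_k) - \Pr(V_k < A_k)$, so it suffices to show $\Pr(V_k \le B_k) \to 1$ and $\Pr(V_k < A_k) \to 0$.

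For the upper bound, fix $M > 0$ and let $B_k^M = (\log(k+1)+\log\log(k+1)+M)/(2(k+1))$. Eventually $g(k) \ge M$, so $B_k \ge B_k^M$ and hence $\Pr(V_k \le B_k) \ge \Pr(V_k \le B_k^M)$. Lemma~\ref{lemma:gumbel} applied at the fixed point $x = M$ gives $\Pr(V_k \le B_k^M) \to e^{-e^{-M}}$, so $\liminf_{k\to\infty} \Pr(V_k \le B_k) \ge e^{-e^{-M}}$. Letting $M \to \infty$ and using $e^{-e^{-M}} \to 1$ yields $\lim_{k\to\infty} \Pr(V_k \le B_k) = 1$.

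For the lower tail, fix $M > 0$ and let $A_k^M = (\log(k+1)+\log\log(k+1)-M)/(2(k+1))$. Eventually $g(k) \ge M$, so $A_k \le A_k^M$ and hence $\Pr(V_k < A_k) \le \Pr(V_k \le A_k^M)$. Lemma~\ref{lemma:gumbel} applied at $x = -M$ gives $\Pr(V_k \le A_k^M) \to e^{-e^{M}}$, so $\limsup_{k\to\infty} \Pr(V_k < A_k) \le e^{-e^{M}}$. Letting $M \to \infty$ and using $e^{-e^{M}} \to 0$ yields $\lim_{k\to\infty} \Pr(V_k < A_k) = 0$. Combining the two limits gives the claim.

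There is no real obstacle here; the only subtlety is that Lemma~\ref{lemma:gumbel} gives pointwise convergence of CDFs at each fixed $x$, not uniform convergence, so one cannot simply substitute $x = \pm g(k)$ directly. The monotonicity of $\Pr(V_k \le \cdot)$ in its threshold, together with the diagonal trick of first fixing $M$ and then sending $M \to \infty$, is exactly what bridges this gap.
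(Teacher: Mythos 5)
Your proof is correct and follows the same route the paper intends: the paper derives this corollary ``immediately'' from Lemma~\ref{lemma:gumbel} via the observation that $e^{-e^{-x}} \to 1$ and $e^{-e^{x}} \to 0$ as $x \to \infty$, and your argument simply makes that one-line deduction rigorous by the standard monotonicity-plus-diagonal device (fix $M$, use the CDF limit at $x = \pm M$, then send $M \to \infty$). Nothing is missing and no new ideas are introduced beyond what the paper treats as immediate.
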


Intuitively, \Cref{cor:winner-vote-share} states that the asymptotic winning plurality vote share is almost exactly $\frac{\log(k+1) + \log \log (k+1)}{2(k+1)}$ with probability 1. We now provide a useful lemma before proving \Cref{thm:plurality-asymptotic-uniform}.

\begin{lemma}\label{lemma:pr-limit-1}
  Let $A_t$ and $B_t$ be events with $\lim_{t \rightarrow \infty} \Pr(A_t) = 1$ and $\lim_{t \rightarrow \infty} \Pr(B_t) = \beta > 0$. Then $\lim_{t \rightarrow \infty} \Pr(A_t \mid B_t) = 1.$
\end{lemma}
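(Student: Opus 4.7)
The plan is to route around the conditional-probability definition by bounding the complementary event. Write
\[
\Pr(A_t \mid B_t) = 1 - \Pr(A_t^c \mid B_t) = 1 - \frac{\Pr(A_t^c \cap B_t)}{\Pr(B_t)},
\]
so the goal reduces to showing that the right-hand fraction tends to $0$.

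For the numerator, I would use the monotonicity bound $\Pr(A_t^c \cap B_t) \leq \Pr(A_t^c) = 1 - \Pr(A_t)$, which goes to $0$ by hypothesis. For the denominator, since $\lim_t \Pr(B_t) = \beta > 0$, for any $\varepsilon \in (0, \beta)$ we have $\Pr(B_t) > \beta - \varepsilon$ for all sufficiently large $t$, so $\Pr(B_t)$ is bounded away from $0$ eventually. Combining the two bounds gives
\[
0 \leq \frac{\Pr(A_t^c \cap B_t)}{\Pr(B_t)} \leq \frac{1 - \Pr(A_t)}{\beta - \varepsilon}
\]
for all large $t$, and the right side tends to $0$. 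A squeeze then yields $\Pr(A_t \mid B_t) \to 1$.

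There is not really a main obstacle here; the only subtlety is that one must be careful not to divide by $\Pr(B_t)$ before knowing it is positive, which is handled by the ``for all sufficiently large $t$'' clause that uses $\beta > 0$ strictly. The argument does not require independence or any structural relationship between $A_t$ and $B_t$, just the union-bound style estimate on $\Pr(A_t^c \cap B_t)$.
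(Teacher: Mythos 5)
Your proposal is correct and follows essentially the same route as the paper's proof: both decompose $\Pr(A_t \cap B_t) = \Pr(B_t) - \Pr(A_t^c \cap B_t)$, bound the subtracted term by $\Pr(A_t^c) \to 0$, and use $\Pr(B_t) \to \beta > 0$ to control the denominator. Your explicit $\beta - \varepsilon$ lower bound for large $t$ is just a slightly more careful phrasing of the same limit argument.
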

\begin{proof}
Using the Law of Total Probability and some basic probability facts,
\begin{align*}
  \Pr(A_t \mid B_t) &= \frac{\Pr (A_t \cap B_t)}{\Pr(B_t)}\\
  &= \frac{\Pr(B_t) - \Pr(\overline{A_t} \cap B_t)}{\Pr(B_t)}\\
  &\ge 1 - \frac{\Pr(\overline{A_t})}{\Pr(B_t)}
\end{align*}  
Thus,
\begin{align*}
  \lim_{t \rightarrow \infty} \Pr(A_t \mid B_t) &\ge \lim_{t \rightarrow \infty} 1- \frac{\Pr(\overline{A_t})}{\Pr(B_t)}\\
  &= 1 - \frac{0}{\beta}\\
  &= 1.
\end{align*}
We then have $\lim_{t \rightarrow \infty} \Pr(A_t \mid B_t) = 1$.

\end{proof}

Finally, we can prove \Cref{thm:plurality-asymptotic-uniform}.
\begin{proof}[Proof of \Cref{thm:plurality-asymptotic-uniform}]
 Consider a plurality election with $k$ candidates on the circle with circumference $1$, with points on the circle mapped to the interval $[0, 1)$ (we'll say the point on the circle corresponding to the endpoints of the interval maps to 0). Let $C_k$ be the position of the plurality winner on the circle with candidates positioned uniformly at random. By rotational symmetry, $C_k$ is uniform over the interval $[0, 1)$. Consider a particular configuration of $k$ candidates on the circle. When we break the circle to make it the unit interval, we only change the vote shares of the candidates closest to 0 and 1 (call them $x_\ell$ and $x_r$, respectively). Let $x_{\ell'}$ be the second-closest candidate to 0 and let $x_{r'}$ be the second-closest candidate to 1.
  
Consider $\Pr(x_{\ell'} < y) = 1 - \Pr(x_{\ell'} \ge y)$. The event $x_{\ell'} \ge y$ can be partitioned into two cases: either $x_\ell < y$ or $x_\ell \ge y$. Thus,
\begin{align*}
  \Pr(x_{\ell'} \ge y) &= \Pr(x_{\ell'} \ge y, x_\ell <y) + \Pr(x_{\ell'} \ge y, x_\ell \ge y)\\
  &= \Pr(x_\ell <y )\Pr(x_{\ell'} \ge y \mid x_\ell <y) + (1-y)^k\\
  &= (1 - (1-y)^k) (1-y)^{k-1}+ (1-y)^k.
\end{align*}
Now, pick $y = \frac{\log k}{4k}$. We then have:
\begin{align*}
  \lim_{k \rightarrow \infty}\Pr\left(x_{\ell'} < \frac{\log k}{4k}\right) &= \lim_{k \rightarrow \infty}\left[1 - \Pr\left(x_{\ell'} \ge \frac{\log k}{4k}\right)\right]\\
  &= \lim_{k \rightarrow \infty} \left[1 - \left(1 - \left(1-\frac{\log k}{4k}\right)^k\right) \left(1-\frac{\log k}{4k}\right)^{k-1} - \left(1-\frac{\log k}{4k}\right)^k \right]
\end{align*}
Note that $\lim_{k \rightarrow \infty}(1- \log k / (4k))^k = 0$, since $\lim_{k \rightarrow \infty}(1- x/k)^k = e^{-x}$. We also have $\lim_{k \rightarrow \infty}(1- \log k / (4k))^{k-1} = 0$, since $(1- \log k / (4k))^{k-1} = (1- \log k / (4k))^{k} / (1- \log k / (4k))$ and $\lim_{k \rightarrow \infty}(1- \log k / (4k)) = 1$. This means
\begin{align}
  \lim_{k \rightarrow \infty}\Pr\left(x_{\ell'} < \frac{\log k}{4k}\right) &= \lim_{k \rightarrow \infty} \left[1 - \left(1 - \left(1-\frac{\log k}{4k}\right)^k\right) \left(1-\frac{\log k}{4k}\right)^{k-1} - \left(1-\frac{\log k}{4k}\right)^k \right]\notag\\
  &= 1 - (1 - 0)\cdot 0 - 0\notag\\
  & = 1 \label{eq:left-share-small}.
\end{align}
Symmetrically, we also have $\lim_{k \rightarrow \infty}\Pr\left(x_{r'} > 1-\frac{\log k}{4k}\right) = 1$. We can therefore bound the asymptotic vote shares of $x_\ell$ and $x_r$ on both the circle and the unit interval. Neither can get more votes (in either setting) than the distance between $x_{\ell'}$ and $x_{r'}$ on the circle; i.e., $b = x_{\ell'} + 1- x_{r'}$ is an upper bound on the vote shares of $x_\ell$ and $x_r$ on both the circle and the unit interval. We can use the above facts to find an asymptotic bound on $b$. First, consider the probability that \emph{both} $x_{r'}$ and $x_{\ell'}$ are close to the boundaries:
\begin{align*}
  &\lim_{k \rightarrow \infty}\Pr\left(x_{r'} > 1-\frac{\log k}{4k}, x_{\ell'} < \frac{\log k}{4k}\right)\\
  &= \lim_{k \rightarrow \infty}\Pr\left(x_{\ell'} < \frac{\log k}{4k}\right) \cdot \lim_{k \rightarrow \infty}\Pr\left(x_{r'} > 1-\frac{\log k}{4k}\mid x_{\ell'} < \frac{\log k}{4k}\right)\\
  &= \lim_{k \rightarrow \infty}\Pr\left(x_{r'} > 1-\frac{\log k}{4k}\mid x_{\ell'} < \frac{\log k}{4k}\right) \tag{by \Cref{eq:left-share-small}}\\
  &= 1. \tag{by \Cref{lemma:pr-limit-1}}
\end{align*}
If $x_{r'} > 1-\frac{\log k}{4k}$, then $1 - x_{r'} < \frac{\log k}{4k}$. Thus, if both $x_{r'} > 1-\frac{\log k}{4k}$ and $ x_{\ell'} < \frac{\log k}{4k}$, we then have $b =  x_{\ell'} + 1- x_{r'} < \frac{\log k}{2k}$. Therefore $\lim_{k \rightarrow \infty }\Pr(b < \frac{\log k}{2k}) = 1$. That is, the asymptotic vote share of the leftmost and rightmost candidates are both less than $\frac{\log k}{2k}$ with probability 1. Meanwhile, we know from \Cref{lemma:gumbel} that the asymptotic winning vote share on the unit interval is larger than $\frac{\log k}{2k}$ with probability 1; i.e., with probability 1, neither $x_\ell$ nor $x_r$ is the winner (on either the circle or unit interval). Since no other vote shares change when we go between the unit interval and the circle, the winner on the unit interval is the same as the winner on the circle with probability 1 as $k \rightarrow \infty$. Thus, $\lim_{k \rightarrow \infty} \Pr(P_k \le x) = \lim_{k \rightarrow \infty} \Pr(C_k \le x) = x$.  
\end{proof}

\begin{proof}[Proof of \Cref{thm:moderate-voters-exclusion}]
  Begin the same way as in the proof of \Cref{thm:general-exclusion}, minimizing $x$'s vote shares with candidates at $c- \epsilon$ and $1-c + \epsilon$. We thus have
\begin{align*}
v(x) = F\left(\frac{x + 1-c + \epsilon}{2}\right) - F\left(\frac{c - \epsilon + x}{2}\right).
  \end{align*}
  
  If $f$ is monotonic and non-decreasing over $[0, 1/2]$, then $x$ has the smallest vote share when $x=c$. At this edge of the interval, $x$'s vote share is at least

\begin{align*}
    v(x) &\ge F\left(\frac{c + 1-c + \epsilon}{2}\right) - F\left(\frac{c - \epsilon + c}{2}\right) \\
    &= F\left(\frac{1+\epsilon}{2}\right) - F\left(c - \frac{\epsilon}{2}\right)\\
    &> F\left(1/2\right) - F\left(c\right)\tag{$F$ increasing}\\
    &= 1/2 - F\left(c\right). \tag{symmetry of $f$}
\end{align*}
Suppose $c\le F^{-1}(1/6)$. Then:
\begin{align*}
  1/2 - F\left(c\right) &\ge 1/2 - F(F^{-1}(1/6)) \\
  &= 1/2 - 1/6 \\
  &= 1/3.
\end{align*}
Thus $x$ cannot be eliminated next. The IRV winner must therefore be in $[c, 1-c]$ by the same argument as in \Cref{thm:1/6}.
\end{proof}

\begin{proof}[Proof of \Cref{thm:polarized-voters-exclusion}]
 As in \Cref{thm:general-exclusion}, minimize $x$'s vote shares with candidates at $c- \epsilon$ and $1-c + \epsilon$. If $f$ is monotonic and non-increasing over $[0, 1/2]$, then $x$ has the smallest vote share when $x=1/2$: 
  
  \begin{align*}
   v(x) &\ge F\left(\frac{x + 1-c+\epsilon}{2}\right) - F\left(\frac{c - \epsilon + x}{2}\right) \\
   &= F\left(\frac{3}{4} - \frac{c - \epsilon}{2}\right) - F\left(\frac{1}{4}+\frac{c - \epsilon}{2}\right)\\
  &= 2\left[F(1/2) - F\left(1/4 + \frac{c - \epsilon}{2}\right)\right]\tag{symmetry of $f$}\\
  &= 1 - 2F\left(1/4 + \frac{c - \epsilon}{2}\right).\tag{symmetry of $f$}
\end{align*}
Suppose $c\le 2 (F^{-1}(1/3) - 1/4)$. Then we have:
\begin{align*}
  1 - 2F\left(1/4 + \frac{c - \epsilon}{2}\right) &\ge 1 - 2F\left(1/4 + \frac{2 (F^{-1}(1/3) - 1/4) - \epsilon}{2}\right)\\
  &= 1 - 2F\left(F^{-1}(1/3) - \frac{\epsilon}{2}\right)\\
  &> 1 - 2F\left(F^{-1}(1/3)\right)\tag{$F$ increasing}\\
  &= 1 - 2/3\\
  &= 1/3.
\end{align*}
As before, $x$ cannot be eliminated next and some candidate in $[c, 1-c]$ must win under IRV. 
\end{proof}

\begin{proof}[Proof of \Cref{thm:very-polarized-voters-exclusion}]
 Suppose there is exactly one candidate $\ell \in [0, c]$ and at least one candidate each in $(c, 1-c)$ and $[c, 1]$ (if there are no candidates in $(c, 1-c)$, the claim is vacuously true). The smallest vote share $\ell$ could have occurs when $\ell = 0$ and there is a candidate at $c + \epsilon$. In this case, $\ell$'s vote share is
\begin{align*}
F\left(\frac{c+\epsilon}{2}\right) &> F\left(c/2\right).
\end{align*} 
If $c \ge 2F^{-1}(1/3)$, then
\begin{align*}
  F\left(c/2\right) &\ge F\left(2F^{-1}(1/3)/2\right)\\
  &= 1/3.
\end{align*}
Thus, $\ell$ cannot be eliminated next. By a symmetric argument, the last candidate $r$ in $[1-c, 1]$ is guaranteed more than a third of the vote. As long we we begin with at least one candidate in $[0, c]$ and at least one candidate in $[1-c, 1]$, once there is only one candidate remaining in each of these intervals, they will survive elimination until all candidates in $(c, 1-c)$ are eliminated.  At this point,  the IRV winner is guaranteed to be in $[0, c]$ or $[1-c, 1]$. 
\end{proof}

\begin{proof}[Proof of \Cref{thm:no-exclusion}]
 First, if $x_1=0$, consider a voter distribution with density function $f$ that increases monotonically over $[0, 1]$. Let $r$ be the position of the candidate immediately to the right of $x_1$. The candidate at $r$ must have a higher vote share than $x_1$, since they split the interval $[0, r]$ and the right half of this interval has more voter mass (as $f$ is increasing). Thus, $x_1$ cannot win under plurality, regardless of how many additional candidates we add. A symmetric argument shows that there are cases where a candidate at $x_1=1$ cannot win under plurality. 
 
 Now we show how to add candidates to the initial set $x_1, \dots, x_\kappa$ so that $x_1$ becomes the plurality winner if $x_1 \notin \{0, 1\}$. Since $f$ is continuous and $f(x_1) > 0$, there must exist some $\delta > 0$ such that $|f(x_1) - f(x_1+t)|  < f(x_1)/ 4$ for $t \in [-\delta, \delta]$. This argument still holds if we make $\delta$ smaller, so we ensure than $\delta < \min\{x_1, 1-x_1\}$ (both are positive since $x_1 \notin\{0, 1\}$). Now add candidates $\ell = x_1 - \delta$ and $r = x_1 + \delta$. Then the vote share $\ell$ gets on its right is less than $\frac{5}{8}\delta f(x_1)$:
   \begin{align*}
     \int_{\ell}^{(x_1 + \ell)/2}f(t) dt &< \int_{x_1-\delta}^{(x_1 + x_1-\delta)/2}(f(x_1) + f(x_1)/4) dt\\
     &= \frac{5}{4}f(x_1)(x_1-\delta/2 - x_1 + \delta)\\
     &= \frac{5}{8}\delta f(x_1).
   \end{align*}
   The same argument shows the vote share $r$ gets on its left is less than $\frac{5}{8}\delta f(x_1)$. Meanwhile, the vote share of $x_1$ is more than $\frac{3}{4}\delta f(x_1)$:
   \begin{align*}
     \int_{(x_1 + \ell)/2}^{(x_1 + r)/2} f(t) dt &> \int_{(x_1 + x_1 - \delta)/2}^{(x_1 + x_1+\delta)/2} (f(x_1)- f(x_1)/4) dt\\
     &= \frac{3}{4}f(x_1)(x_1 + \delta / 2 - x_1 + \delta / 2)\\
     &= \frac{3}{4} \delta f(x_1).
   \end{align*}
   Thus $x_1$ has a higher vote share than $\ell$ gets on its right and than $r$ gets on its left. Now, we repeatedly add candidates to the left of $\ell$ adjacent to to the candidates with the maximum vote shares in $[0, \ell]$. We can make the maximum vote share in $[0, \ell]$ arbitrarily small (except $\ell$'s) by adding enough candidates in this way---in particular, we can make it smaller than $v(x_1)$. We can also make $\ell$'s total vote share smaller than $v(x_1)$, since the vote share $\ell$ gets on its right is strictly smaller than $v(x_1)$. Doing the same in $[r, 1]$ then ensures $x_1$ is the plurality winner.
 \end{proof}

\section{Derivations of $f_{P_3}$ and $f_{R_3}$}\label{app:k3_derivations}
In this section, we derive the probability density functions of the position of the plurality and IRV winners for  1-Euclidean profiles with $k=3$ uniformly placed candidates and continuous uniform voters. We then compute the variances of these distributions. The calculations for plurality are in \Cref{sec:k-3-plurality-dsn} and the calculations for IRV are in \Cref{sec:k-3-irv-dsn}. First, we provide an overview of our approach. 

Let $X_1, \dots, X_k \sim \text{Unif}(0, 1)$ be the random positions of the $k$ candidates and let $W \in \{X_1, \dots, X_k\}$ be the position of the winner. Let $X_{(i)}$ denote the $i$th order statistic of $X_1, \dots, X_k$.

The density of the winner's position at a point $w$, denoted $f(w)$, is $k$ times the probability that a particular candidate at $w$ is the winner (times the density of that candidate's position at $w$, which is 1). We sum over the possible order statistics of the winner to compute $f(w)$:

\begin{align*}
  f(w) &= k \Pr(w \text{ wins} )\\
  &= k\sum_{i = 1}^k \Pr(w \text{ wins}, w=X_{(i)})
\end{align*}

What is the probability that a candidate at position $w$ with order statistic $i$ wins? Say the winner is candidate 1. We can choose which $i - 1$ candidates are to their left. The remaining $k - i$ candidates are to their right. Then, we integrate over the positions of the other candidates where the candidate at $w$ wins.

\begin{align*}
  &\Pr(w \text{ wins}, w = X_{(i)}) \\&= \binom{k}{i - 1} \underbrace{\int_{0}^{x_i}\dots \int_{0}^{w}}_{i-1}\underbrace{\int_{x_i}^1\dots \int_{w}^1}_{k - 1} \mathbf{1}[w\text{ wins given positions }x_2, \dots, x_k] dx_{k}\dots dx_{2}
\end{align*}

We can also note that the win probability (and therefore the winner density) is symmetric about 0.5: $f(w) = 1-f(w)$. We therefore only need to consider $w\in [0, 0.5]$.

\subsection{1d plurality winner distribution, $k=3$}\label{sec:k-3-plurality-dsn}
We'll compute $\Pr(w \text{ wins}, w=X_{(i)})$ for $i = 1, 2, 3$ and $w \in [0, 0.5]$. That is, we'll compute the win probability of a candidate at a point $w$ in the cases where they are the leftmost, the middle, and the rightmost of the three candidates. In all cases, we'll call the winner candidate $1$ and the losers candidates 2 and 3, at positions $x_2$ and $x_3$.

\begin{enumerate}
  \item $w=X_{(1)}$. Consider the order $w < x_2 < x_3$ (we'll multiply by 2 later to account for the ordering $w < x_3 < x_2$). For $w$ to beat $x_2$, we need:
\begin{align*}
 & w + (x_2 - w) / 2 > (x_2 - w) / 2 + (x_3 - x_2) / 2\\
 \Leftrightarrow  \quad & w > (x_3 - x_2) / 2\\
 \Leftrightarrow  \quad & 2w > x_3 - x_2\\
 \Leftrightarrow  \quad & x_3 <  2w + x_2 
 \stepcounter{equation}\tag{\theequation}\label{eq:case1_beatx2}
 \end{align*}

For $w$ to beat $x_3$, we need 
\begin{align*}
  &w + (x_2 - w) / 2 > 1-x_3 + (x_3 - x_2) / 2\\
  \Leftrightarrow  \quad & 2w + x_2 - w > 2 - 2x_3 + x_3 - x_2\\
  \Leftrightarrow  \quad & w + x_2 > 2 - x_3 - x_2\\
    \Leftrightarrow  \quad & x_3 > 2  - 2x_2 - w
    \stepcounter{equation}\tag{\theequation}\label{eq:case1_beatx3}
  \end{align*}
  
  For both (\ref{eq:case1_beatx2}) and (\ref{eq:case1_beatx3}) to be feasible, $w$ and $x_2$ cannot both be too small. The inequalities match at 
\begin{align*}
  &2w +x_2 = 2-2x_2-w\\
  \Leftrightarrow  \quad & 3w + 3x_2 = 2\\
   \Leftrightarrow  \quad & w + x_2 = 2/3. 
\end{align*}
We therefore need $w+x_2 > 2/3$ to satisfy both (\ref{eq:case1_beatx2}) and (\ref{eq:case1_beatx3}). We summarize the constraints on $w$ and $x_2$ in the following plot, where the gray region contains points where $w$ can win (given $w < x_2 < x_3)$.

\begin{center}
  \begin{tikzpicture}
\begin{axis}[xmin=0, xmax=0.5,ymin=0, ymax=1, samples=10,xlabel=$w$, ylabel=$x_2$, legend pos=outer north east, legend cell align={left}]

  \addplot[red, thick, name path=b] (x,x);
  \addplot[blue, thick, name path=a] (x,2/3 -x);

   \path[name path=axis] (axis cs:0, 1) -- (axis cs:1, 1);

    \addplot[gray!20] fill between[of=b and axis];
    \addplot[white] fill between[of=b and a];

  \legend{$x_2 = w$, $x_2 = 2/3-w$}
\end{axis}
\end{tikzpicture}
\end{center}

The lower bound on $x_3$ for $w$ to win is the minimum of $x_2$ and $2 - 2x_2 - w$, while the upper bound is the maximum of $1$ and $2w+x_2$. We'll plot the lines where these bounds change, namely $x_2 = 2-2x_2 - w \Leftrightarrow x_2 = 2/3 - w/3$ and $1 = 2w+x_2 \Leftrightarrow x_2 = 1 - 2w $, and label the regions over which we can easily integrate:

\begin{center}
  \begin{tikzpicture}
\begin{axis}[xmin=0, xmax=0.5,ymin=0, ymax=1, samples=10,xlabel=$w$, ylabel=$x_2$, legend pos=outer north east, legend cell align={left}]

  \addplot[red, thick, name path=b] (x,x);
  \addplot[blue, thick, name path=a] (x,2/3 -x);
  \addplot[green, thick, name path=c] (x,1-2*x);
  \addplot[violet, thick, name path=c] (x,2/3 - x/3);

   \path[name path=axis] (axis cs:0, 1) -- (axis cs:1, 1);
   
   \draw[thick, dotted] (axis cs:0.2, 0) -- (axis cs:0.2, 1);
   \draw[thick, dotted] (axis cs:1/3, 0) -- (axis cs:1/3, 2/3- 1/9);

    \addplot[gray!20] fill between[of=b and axis];
    \addplot[white] fill between[of=b and a];
    
  \node at (axis cs:0.06, 0.75) {A};
  \node at (axis cs:0.14, 0.87) {B};
  \node at (axis cs:0.35, 0.78) {C};
  \node at (axis cs:0.16, 0.56) {D};
  \node at (axis cs:0.22, 0.49) {E};
  \node at (axis cs:0.29, 0.5) {F};
  \node at (axis cs:0.37, 0.47) {G};

  \legend{(a) $x_2 = w$, (b) $x_2 = 2/3-w$, (c) $x_2 = 1-2w$, (d) $x_2 = 2/3 - w/3$}
\end{axis}
\end{tikzpicture}
\end{center}

Above line $(c)$, the upper bound for $x_3$ is $1$; below $(c)$, it's $2w+x_2$. Above line $(d)$, the lower bound for $x_3$ is $x_2$; below, it's $2-2x_2-w$. Lines (c) and (d) intersect at $w = 1/5$, while lines (a), (b), and (c) intersect at $w=1/3$.

With this information in hand, we can compute the integral describing the win probability of $w$ by summing the integrals for regions A--G and multiplying by 2 to account for the ordering $w < x_3 < x_2$:

\begin{align*}
  A: \quad & \int_{2/3-w/3}^{1-2w} \int_{x_2}^{2w+x_2} \, dx_3 \,dx_2 &&= 2w/3 - 10w^2/3 \\
  B: \quad &  \int_{1-2w}^{1} \int_{x_2}^{1} \, dx_3 \,dx_2 &&=2w^2\\
  C: \quad &  \int_{2/3-w/3}^{1} \int_{x_2}^{1} \, dx_3 \,dx_2 &&= 1/18 + w/9 + w^2/18 \\
  D: \quad &  \int_{2/3-w}^{2/3-w/3} \int_{2-2x_2-w}^{2w+x_2} \, dx_3 \,dx_2 &&= 2 w^2/3\\
  E: \quad &  \int_{2/3-w}^{1-2w} \int_{2-2x_2-w}^{2w+x_2} \, dx_3 \,dx_2 &&= 1/6 - w + 3 w^2/2\\
  F: \quad & \int_{1-2w}^{2/3-w/3} \int_{2-2x_2-w}^{1} \, dx_3 \,dx_2 &&=-2/9 + 14 w/9 - 20 w^2/9\\
  G: \quad &  \int_{w}^{2/3-w/3} \int_{2-2x_2-w}^{1} \, dx_3 \,dx_2 &&= -2/9 + 14 w/9 - 20 w^2/9
\end{align*}

For $w \in [0, 1/5]$, the win probability is $2(2w/3 - 10w^2/3 + 2w^2 + 2 w^2/3) = 4 w/3 - 4 w^2/3$.

For $w \in [1/5, 1/3]$, the win probability is $2(1/6 - w + 3 w^2/2 + -2/9 + 14 w/9 - 20 w^2/9 + 1/18 + w/9 + w^2/18) = 4 w/3 - 4 w^2/3$.

Finally, for $w \in [1/3, 1/2]$, the win probability is $2(-2/9 + 14 w/9 - 20 w^2/9 + 1/18 + w/9 + w^2/18) = -1/3 + 10 w/3 - 13 w^2/3$.

To summarize, the win probability is:
\begin{equation}
  \Pr(w \text{ wins}, w = X_{(1)}) = \begin{cases}
  4 w/3 - 4 w^2/3, & w \in [0, 1/3]\\
  -1/3 + 10 w/3 - 13 w^2/3, & w \in [1/3, 1/2]
 \end{cases}
\end{equation} 

Visualizing this:

\begin{center}
  \begin{tikzpicture}
\begin{axis}[xmin=0, xmax=0.5,ymin=0, ymax=0.4, samples=100,xlabel=$w$, ylabel={$\Pr(w \text{ wins}, w=X_{(1)})$}, legend pos=outer north east, legend cell align={left},
 y tick label style={
        /pgf/number format/.cd,
            fixed,
            precision=1,
        /tikz/.cd
    },]

  \addplot[red, thick, name path=b][domain=0:1/3] (x,4*x/3 - 4 *x^2 / 3);
  
    \addplot[blue, thick, name path=b][domain=1/3:0.5] (x,-1/3 + 10*x / 3 - 13 * x^2 / 3);

  \legend{$4w/3-4w^2/3$, $ -1/3 + 10 w/3 - 13 w^2/3$}
\end{axis}
\end{tikzpicture}
\end{center}

\item $w=X_{(2)}$. Consider the ordering $x_2 < w < x_3$ (we'll multiply by 2 later to account for $x_3 < w < x_2$). For $w$ to beat $x_2$, we need:
\begin{align*}
  &(w - x_2) / 2 + (x_3 - w) / 2 > x_2 + (w-x_2)/2\\
  \Leftrightarrow \quad & (x_3 - w) / 2 > x_2\\
  \Leftrightarrow \quad & x_3 > 2x_2 + w
  \stepcounter{equation}\tag{\theequation}\label{eq:case2_beatx1}
\end{align*} 

In order for this to be feasible, we need $2x_2 + w < 1 \Leftrightarrow x_2 < (1-w) / 2$. For $w$ to beat $x_3$, we need:
\begin{align*}
  &(w - x_2) / 2 + (x_3 - w) / 2 > 1-x_3 + (x_3-w)/2\\
  \Leftrightarrow \quad & (w - x_2) / 2 > 1-x_3\\
  \Leftrightarrow \quad & x_3 > 1-(w-x_2) / 2
  \stepcounter{equation}\tag{\theequation}\label{eq:case2_beatx2}
\end{align*} 

These bounds are equal if
\begin{align*}
  &2x_2 + w = 1-(w-x_2) / 2\\
  \Leftrightarrow \quad & 4x_2 + 2w = 2 - w + x_2\\
  \Leftrightarrow \quad & 3x_2 + 3w = 2\\
  \Leftrightarrow \quad & x_2 = 2 / 3 - w
\end{align*}

Again, we can split the $w$--$x_2$ plane using this line to make integration easy. Note that the lines $x_2 = 2/3 - w$ and the line $x_2 = w$ intersect at $1/3$. 

\begin{center}
  \begin{tikzpicture}
\begin{axis}[xmin=0, xmax=0.5,ymin=0, ymax=0.5, samples=10,xlabel=$w$, ylabel=$x_2$, legend pos=outer north east, legend cell align={left}]

  \addplot[red, thick, name path=b] (x,x);
  \addplot[blue, thick, name path=a] (x,1/2-x/2);
    \addplot[green, thick, name path=c] (x,2/3-x);
    \draw[thick, dotted] (axis cs:1/3, 0) -- (axis cs:1/3, 1/3);

   \path[name path=axis] (axis cs:0, 0) -- (axis cs:1, 0);

    \addplot[gray!20] fill between[of=b and axis];
        \addplot[white] fill between[of=b and a];
        
      \node at (axis cs:0.24, 0.1) {A};
  \node at (axis cs:0.42, 0.1) {B};
  \node at (axis cs:0.46, 0.24) {C};
  \legend{(a) $x_2 = w$, (b) $x_2 = (1-w)/2$, (c) $x_2 = 2/3-w$}
\end{axis}
\end{tikzpicture}
\end{center}

Above line (c), the constraint $x_3 > 2x_2 + w$ dominates. Below line (c), $x_3 > 1-(w-x_2)/2$ dominates. Integrating in the regions A--C:

\begin{align*}
  A: \quad & \int_{0}^{w} \int_{1-(w-x_2)/2}^{1} \, dx_3 \,dx_2 &&= w^2/4 \\
  B: \quad &  \int_{0}^{2/3-w} \int_{1-(w-x_2)/2}^{1} \, dx_3 \,dx_2 &&= -1/9 + 2 w/3 - 3 w^2/4\\
  C: \quad &  \int_{2/3-w}^{(1-w)/2} \int_{2x_2+w}^{1} \, dx_3 \,dx_2 &&= 1/36 - w/6 + w^2/4 \\
\end{align*}

Recall that we need to multiply by 2 to account for the ordering $x_3 < w < x_2$.

For $w \in [0, 1/3]$, the win probability is $2(w^2/4) = w^2/2$.

For $w \in [1/3, 1/2]$, the win probability is $2(-1/9 + 2 w/3 - 3 w^2/4 + 1/36 - w/6 + w^2/4) = -1/6 + w - w^2$.

To summarize:

\begin{equation}
  \Pr(w \text{ wins}, w = X_{(2)}) = \begin{cases}
  w^2/2, & w \in [0, 1/3]\\
  -1/6 + w - w^2, & w \in [1/3, 1/2]
 \end{cases}
\end{equation} 

Visualizing this:
\begin{center}
  \begin{tikzpicture}
\begin{axis}[xmin=0, xmax=0.5,ymin=0, ymax=0.1, samples=100,xlabel=$w$, ylabel={$\Pr(w \text{ wins}, w=X_{(2)})$}, legend pos=outer north east, legend cell align={left},
 y tick label style={
        /pgf/number format/.cd,
            fixed,
            precision=2,
        /tikz/.cd
    },]

  \addplot[red, thick, name path=b][domain=0:1/3] (x,x^2 / 2);
  
    \addplot[blue, thick, name path=b][domain=1/3:0.5] (x,-1/6 + x - x^2);

  \legend{$w^2/2$, $ -1/6 + w - w^2$}
\end{axis}
\end{tikzpicture}
\end{center}

\item  $w=X_{(3)}$. This means $x_2 < w$ and $x_3 < w$. Since $w \le 0.5$, $w$ always wins. We also have $\Pr(x_2 < w) = w$ and $\Pr(x_3 < w)=w$ . Thus, $\Pr(w \text{ wins}, w = X_{(3)} \mid w \in [0, 0.5]) = w^2$.
\end{enumerate}

Adding the three cases, we arrive at $\Pr(w \text{ wins})$. For $w \in [0, 1/3]$, the sum is $4w/3 - 4w^2/3 + w^2/2 + w^2 = 4 w/3 + w^2/6$. For $w \in [1/3, 1/2]$, the sum is $-1/3 + 10w/3 - 13w^2/3 -1/6 + w- w^2 + w^2 = -1/2 + 13 w/3 - 13 w^2/3 $. Summarizing and plotting:

\begin{equation}
  \Pr(w \text{ wins}) = \begin{cases}
  4 w/3 + w^2/6, & w \in [0, 1/3]\\
  -1/2 + 13 w/3 - 13 w^2/3, & w \in [1/3, 1/2]
 \end{cases}
\end{equation} 

\begin{center}
  \begin{tikzpicture}
\begin{axis}[xmin=0, xmax=0.5,ymin=0, ymax=0.6, samples=100,xlabel=$w$, ylabel={$\Pr(w \text{ wins})$}, legend pos=outer north east, legend cell align={left},
 y tick label style={
        /pgf/number format/.cd,
            fixed,
            precision=1,
        /tikz/.cd
    },]

  \addplot[red, thick, name path=b][domain=0:1/3] (x,4*x/3 + x^2/6);
  
    \addplot[blue, thick, name path=b][domain=1/3:0.5] (x,-1/2 + 13 *x/3 - 13 *x^2/3);

  \legend{$4 w/3 + w^2/6$, $-1/2 + 13 w/3 - 13 w^2/3$}
\end{axis}
\end{tikzpicture}
\end{center}

Scaling by 3, the variance of $P_3$ is then:
\begin{equation}
  6\Bigg[\int_{0}^{1/3}(w - 1/2)^2(4w/3 + w^2/6) \,dw + \int_{1/3}^{1/2}(w - 1/2)^2(-1/2 + 13w/3 - 13w^2/3) \,dw \Bigg] = 23/540 \approx 0.043 
\end{equation}

\subsection{1d IRV winner distribution, $k=3$}\label{sec:k-3-irv-dsn}
We'll perform the same type of analysis, but for IRV instead of plurality. In addition to breaking down cases by the order statistic of the winner, we'll also consider the IRV elimination order. 

\begin{enumerate}
  \item $w=X_{(1)}$. Consider the order $w < x_2 < x_3$ (we'll multiply by 2 later to account for the ordering $w < x_3 < x_2$).
  \begin{enumerate}
  \item Candidate 2 is eliminated first. Since 2 has a smaller vote share than the winner,
 \begin{align*}
    &(x_2 - w) / 2 + (x_3 - x_2) / 2 < w + (x_2 - w) / 2 \\
    \Leftrightarrow \quad & (x_3 - x_2) / 2 < w\\
    \Leftrightarrow \quad & x_3 < 2w + x_2
    \stepcounter{equation}\tag{\theequation}\label{eq:irv_1a_loseto1}
  \end{align*}
  
  Since 2 has a smaller vote share than 3,
  \begin{align*}
    &(x_2 - w) / 2 + (x_3 - x_2) / 2 < 1-x_3+ (x_3 - x_2) / 2 \\
    \Leftrightarrow \quad & (x_2 - w) / 2 < 1-x_3\\
    \Leftrightarrow \quad & x_3 < 1 - (x_2 -w)/2. 
    \stepcounter{equation}\tag{\theequation}\label{eq:irv_1a_loseto3}
  \end{align*}
  For this constraint to be feasible, we need
  \begin{align*}
    & x_2 < 1 - (x_2 - w) / 2  \\
    \Leftrightarrow \quad & 2x_2 < 2 - x_2 + w \\
    \Leftrightarrow \quad & x_2 < 2/3 + w/3
  \end{align*}
  
  The constraints are equal if
  \begin{align*}
    &2w + x_2 = 1-(x_2 - w)/2\\
    \Leftrightarrow \quad & 4w + 2x_2 = 2 - x_2 + w\\
    \Leftrightarrow \quad & 3w + 3x_2 = 2\\
    \Leftrightarrow \quad & x_2 = 2/3 - w
  \end{align*}

  Once 2 is eliminated, whoever is closer to $1/2$ is the winner. Thus, for $w$ to win, we need $x_3 > 1-w$. This constraint equals constraint \ref{eq:irv_1a_loseto1} if
  \begin{align*}
    & 2w + x_2 = 1-w\\ 
    \Leftrightarrow \quad & x_2 = 1-3w
  \end{align*}
  If $w$ is too small (i.e., left of the line $x_2 = 1-3w$), then we cannot satisfy both $x_3 > 1-w$ and $x_3 < 2w + x_2$. The constraint $x_3 > 1-w$ equals constraint \ref{eq:irv_1a_loseto3} if
    \begin{align*}
    & 1 - (x_2 -w)/2 = 1-w\\ 
    \Leftrightarrow \quad & 2- x_2 + w = 2-2w\\
    \Leftrightarrow \quad & x_2 = 3w
  \end{align*}
  Again, if we are to the left of this line, we cannot satisfy both $x_3 > 1-w$ and  $x_3 < 1 - (x_2 -w)/2$. Finally, the lower bound on $x_3$ is the maximum of $1-w$ and $x_2$. Above the line $x_2 = 1-w$, the lower bound $x_2$ dominates; below, $1-w$ dominates.

  \begin{center}
  \begin{tikzpicture}
\begin{axis}[xmin=0, xmax=0.5,ymin=0, ymax=1, samples=10,xlabel=$w$, ylabel=$x_2$, legend pos=outer north east, legend cell align={left}]

  \addplot[red, thick, name path=a] (x,x);
  \addplot[blue, thick, name path=b] (x,2/3 +x/3);
  \addplot[green, thick, name path=c] (x,2/3 -x);
  \addplot[violet, thick, name path=d] (x,1 -3*x);
  \addplot[orange, thick, name path=e] (x,3*x);
  \addplot[magenta, thick, name path=f] (x,1-x);

  \draw[thick, dotted] (axis cs:1/4, 1/4) -- (axis cs:1/4, 3/4);
  \draw[thick, dotted] (axis cs:1/3, 1/3) -- (axis cs:1/3, 2/3);

   \path[name path=top] (axis cs:0, 1) -- (axis cs:1, 1);
   \path[name path=bottom] (axis cs:0, 0) -- (axis cs:1, 0);

  \node at (axis cs:0.23, 0.385) {A};

  \node at (axis cs:0.22, 0.54) {C};
    \node at (axis cs:0.275, 0.33) {B};

  \node at (axis cs:0.29, 0.53) {D};
  \node at (axis cs:0.38, 0.51) {E};
  \node at (axis cs:0.42, 0.69) {F};

    \addplot[gray!20] fill between[of=b and a];
    \addplot[white] fill between[of=bottom and d];
    \addplot[white] fill between[of=e and top];

  \legend{(a) $x_2 = w$, (b) $x_2 = 2/3+w/3$, (c) $x_2=2/3-w$, (d) $x_2=1-3w$, (e) $x_2 =3w$, (f) $x_2 = 1-w$}
\end{axis}
\end{tikzpicture}
\end{center}
Below line (c), constraint (\ref{eq:irv_1a_loseto1}) dominates; above line (c), constraint (\ref{eq:irv_1a_loseto3}) dominates. Below line (f), the lower bound on $x_3$ is $1-w$; above line (f), the lower bound is $x_2$. Lines (c), (d), (e) intersect at $w = 1/6$; lines (b), (e), (f) intersect at $w = 1/4$; lines (a) and (d) intersect at $w=1/4$; lines (a) and (c) intersect at $w=1/3$.

Integrating for each region and multiplying by 2 to account for the order $w < x_3 < x_2$:
\begin{align*}
  A: \quad & \int_{1-3w}^{2/3-w} \int_{1-w}^{2w+x_2} \, dx_3 \,dx_2 &&= 1/18 - 2 w/3 + 2 w^2 \\
  B: \quad &  \int_{w}^{2/3-w} \int_{1-w}^{2w+x_2} \, dx_3 \,dx_2 &&= -4/9 + 10 w/3 - 6 w^2 \\
  C: \quad &  \int_{2/3-w}^{3w} \int_{1-w}^{1 - (x_2 -w)/2} \, dx_3 \,dx_2 &&= 1/9 - 4 w/3 + 4 w^2\\
  D: \quad &  \int_{2/3-w}^{1-w} \int_{1-w}^{1 - (x_2 -w)/2} \, dx_3 \,dx_2 &&= -5/36 + 2 w/3\\
  E: \quad &  \int_{w}^{1-w} \int_{1-w}^{1 - (x_2 -w)/2} \, dx_3 \,dx_2 &&=  -1/4 + 2 w - 3 w^2 \\
  F: \quad & \int_{1-w}^{2/3+w/3} \int_{x_2}^{1 - (x_2 -w)/2} \, dx_3 \,dx_2 &&= 1/12 - 2 w/3 + 4 w^2/3
  \end{align*}

For $w\in [1/6, 1/4]$, the win probability is $2(1/18 - 2 w/3 + 2 w^2 + 1/9 - 4 w/3 + 4 w^2) = 1/3 - 4 w + 12 w^2$.

For $w \in [1/4, 1/3]$, the win probability is $2(-4/9 + 10 w/3 - 6 w^2 -5/36 + 2 w/3 + 1/12 - 2 w/3 + 4 w^2/3) = -1 + 20 w/3 - 28 w^2/3$.

For $w \in [1/3, 1/2]$, the win probability is $2(-1/4 + 2 w - 3 w^2 + 1/12 - 2 w/3 + 4 w^2/3) = -1/3 + 8 w/3 - 10 w^2/3$.

Summarizing and visualizing:

\begin{equation}
  \Pr(w \text{ wins}, w = X_{(1)}, X_{(2)} \text{ elim 1st}) = \begin{cases}
    1/3 - 4 w + 12 w^2, & w\in [1/6, 1/4]\\
    -1 + 20 w/3 - 28 w^2/3, & w \in [1/4, 1/3]\\
    -1/3 + 8 w/3 - 10 w^2/3, & w \in [1/3, 1/2]
  \end{cases}
\end{equation}

\begin{center}
  \begin{tikzpicture}
\begin{axis}[xmin=0, xmax=0.5,ymin=0, ymax=0.25, samples=100,xlabel=$w$, ylabel={$\Pr(w \text{ wins}, w = X_{(1)}, X_{(2)} \text{ elim 1st})$}, legend pos=outer north east, legend cell align={left},
 y tick label style={
        /pgf/number format/.cd,
            fixed,
            precision=2,
        /tikz/.cd
    },]

  \addplot[red, thick, name path=b][domain=1/6:1/4] (x,1/3 - 4 *x + 12 *x^2);
  \addplot[blue, thick, name path=b][domain=1/4:1/3] (x, -1 + 20 *x/3 - 28 *x^2/3);
  \addplot[green, thick, name path=b][domain=1/3:1/2] (x, -1/3 + 8 *x/3 - 10 *x^2/3);

  \legend{$1/3 - 4 w + 12 w^2$, $-1 + 20 w/3 - 28 w^2/3$ , $-1/3 + 8 w/3 - 10 w^2/3$}
\end{axis}
\end{tikzpicture}
\end{center}

  \item Candidate 3 is eliminated first. Since candidate 3 has a smaller vote share than candidate 2:
  \begin{align*}
    &(x_3 - x_2) / 2 + 1-x_3 < (x_3 - x_2) / 2 + (x_2 - w) / 2\\
    \Leftrightarrow \quad &  1-x_3 <  (x_2 - w) / 2\\
    \Leftrightarrow \quad &  x_3 > 1-  (x_2 - w) / 2
  \end{align*} 
  Since candidate 3 has a smaller vote share than the winner:
    \begin{align*}
    &(x_3 - x_2) / 2 + 1-x_3 < w + (x_2 - w) / 2\\
    \Leftrightarrow \quad & x_3 - x_2 + 2-2x_3 < 2w + x_2 - w \\
    \Leftrightarrow \quad &  -x_3+ 2 < w + 2x_2\\
    \Leftrightarrow \quad &  x_3 > 2 - w - 2x_2
  \end{align*} 
  For this to be feasible, we need:
  \begin{align*}
    & 2 - w - 2x_2 < 1\\
    \Leftrightarrow \quad & 1 - w < 2x_2\\
    \Leftrightarrow \quad & x_2 > (1 - w) / 2
  \end{align*}
  
  The constraints $ x_3 > 1-  (x_2 - w) / 2$ and $x_3 > 2 - w - 2x_2$ are equal when:
  \begin{align*}
    & 1-  (x_2 - w) / 2 = 2 - w - 2x_2\\
    \Leftrightarrow \quad & 2-  x_2 + w = 4 - 2w - 4x_2\\
    \Leftrightarrow \quad & 3x_2  = 2 - 3w\\
     \Leftrightarrow \quad & x_2  = 2/3 - w
  \end{align*}
  Above the line $x_2  = 2/3 - w$, the constraint $x_3 > 1-  (x_2 - w) / 2$ dominates; below, $x_3 > 2 - w - 2x_2$ dominates. 

In order for $w$ to win, it must be closer to the center than $x_2$. This requires that $x_2 > 1-w$ (since $w < 0.5$ and $w < x_2$). Thus, we never need to worry about the constraint $x_3 < 2-w-2x_2$ (since the line $x_2 = 1-w$ is above the line $2/3 -w$). The lower bound on $x_3$ is thus the maximum of $x_2$ and $1-  (x_2 - w) / 2$. These are equal if
\begin{align*}
  & x_2 = 1-  (x_2 - w) / 2\\
  \Leftrightarrow \quad &  2x_2 = 2-  x_2 + w\\
  \Leftrightarrow \quad & 3x_2 = 2 + w\\
  \Leftrightarrow \quad & x_2 = 2/3 + w/3
\end{align*}
Above the line $x_2 = 2/3 + w/3$, the lower bound on $x_3$ is $x_2$; below, it's $1-  (x_2 - w) / 2$.

  \begin{center}
  \begin{tikzpicture}
\begin{axis}[xmin=0, xmax=0.5,ymin=0, ymax=1, samples=10,xlabel=$w$, ylabel=$x_2$, legend pos=outer north east, legend cell align={left}]

  \addplot[red, thick, name path=a] (x,x);
  \addplot[blue, thick, name path=b] (x,1/2 - x/2);
  \addplot[green, thick, name path=c] (x,2/3 -x);
  \addplot[violet, thick, name path=d] (x,1-x);
  \addplot[orange, thick, name path=e] (x,2/3 + x/3);

  \draw[thick, dotted] (axis cs:1/4, 3/4) -- (axis cs:1/4, 1);

   \path[name path=top] (axis cs:0, 1) -- (axis cs:1, 1);
   \path[name path=bottom] (axis cs:0, 0) -- (axis cs:1, 0);

  \node at (axis cs:0.19, 0.9) {A};
%
  \node at (axis cs:0.35, 0.9) {B};
  \node at (axis cs:0.43, 0.7) {C};
%

  \addplot[gray!20] fill between[of=d and top];

  \legend{(a) $x_2 = w$, (b) $x_2 = (1-w)/2$, (c) $x_2 = 2/3-w$, (d) $x_2 = 1-w$, (e) $x_2 = 2/3 + w/3$}
\end{axis}
\end{tikzpicture}
\end{center}
Lines (d) and (e) intersect at $w = 1/4$. Integrating over the regions:
\begin{align}
  A: &\int_{1-w}^{1} \int_{x_2}^{1} \, dx_3 \,dx_2 &&= w^2/2 \\
  B: &\int_{2/3+w/3}^{1} \int_{x_2}^{1} \, dx_3 \,dx_2 &&= 1/18 - w/9 + w^2/18\\
  C: &\int_{1-w}^{2/3+w/3} \int_{1-(x_2-w)/2}^{1} \, dx_3 \,dx_2 &&= -5/36 + 7 w/9 - 8 w^2/9
  \end{align}
  
  Multiplying by 2 to account for the ordering $w < x_3 < x_2$:
  
  For $w \in [0, 1/4]$, the win probability is $2(w^2/ 2) = w^2$.
  
  For $w \in [1/4, 1/2]$, the win probability is $2(1/18 - w/9 + w^2/18 -5/36 + 7 w/9 - 8 w^2/9) = -1/6 + 4 w/3 - 5 w^2/3$.
  
  \begin{equation}
    \Pr(w \text{ wins}, w=X_{(1)}, X_{(3)} \text{ elim 1st}) = \begin{cases}
      w^2, & w \in [0 , 1/4]\\
      -1/6 + 4 w/3 - 5 w^2/3, & w \in [1/4, 1/2]
    \end{cases}
      \end{equation}
      
      Plotting:
      
      \begin{center}
  \begin{tikzpicture}
\begin{axis}[xmin=0, xmax=0.5,ymin=0, ymax=0.15, samples=100,xlabel=$w$, ylabel={$\Pr(w \text{ wins}, w = X_{(1)}, X_{(3)} \text{ elim 1st})$}, legend pos=outer north east, legend cell align={left},
 y tick label style={
        /pgf/number format/.cd,
            fixed,
            precision=2,
        /tikz/.cd
    },]

  \addplot[red, thick, name path=b][domain=0:1/4] (x, x^2);
  \addplot[blue, thick, name path=b][domain=1/4:1/2] (x,-1/6 + 4 *x/3 - 5 *x^2/3);

  \legend{$w^2$, $ -1/6 + 4 w/3 - 5 w^2/3$, }
\end{axis}
\end{tikzpicture}
\end{center}
  
  \end{enumerate}
  
  \item $w = X_{(2)}$. Consider the order $x_2 < w < x_3$ (we'll multiply by 2 later to account for $x_3 < w < x_2$).
\begin{enumerate}
\item Candidate 2 is eliminated first.  Since candidate 2 has a smaller vote share than the winner,
  \begin{align*}
    & x_2 + (w - x_2) / 2 < (x_3 - w) / 2 + (w - x_2) / 2\\
    \Leftrightarrow \quad & x_2  < (x_3 - w) / 2\\
    \Leftrightarrow \quad & 2x_2  < x_3 - w\\
    \Leftrightarrow \quad & x_3 > 2x_2 + w.
  \end{align*}
  For this to be feasible, we need 
    \begin{align*}
    & 2x_2 + w < 1\\
    \Leftrightarrow \quad & x_2  < (1-w)/2.
  \end{align*}
  
  Since candidate 2 has a smaller vote share than candidate 3,
    \begin{align*}
    & x_2 + (w - x_2) / 2 < 1-x_3 + (x_3 - w) / 2\\
    \Leftrightarrow \quad & 2x_2 + w - x_2 < 2-2x_3 + x_3 - w \\
    \Leftrightarrow \quad & x_2 + 2w < 2-x_3 \\
    \Leftrightarrow \quad & x_3 < 2 - x_2 - 2w.
  \end{align*}
  For this to be feasible, we need 
  \begin{align*}
    & 2 - x_2 - 2w > w\\
    \Leftrightarrow \quad &  x_2 < 2-3w.
  \end{align*}
  Since $w \le 0.5$, this is always satisfied. The upper bound on $x_3$ is the minimum of $1$ and $2 - x_2 - 2w$. There are equal if 
    \begin{align*}
    & 1 = 2 - x_2 - 2w \\
    \Leftrightarrow \quad &  x_2 = 1 - 2w.
  \end{align*}
To the left of the line $x_2 = 1 - 2w$, the upper bound on $x_3$ is 1; to the right, it's $2 - x_2 - 2w$. 
  
The upper and lower bounds on $x_3$ are equal when
\begin{align*}
  &2x_2 + w = 2 - x_2 - 2w\\
  \Leftrightarrow \quad & 3x_2 = 2 - 3w\\
  \Leftrightarrow \quad & x_2 = 2/3 - w.
\end{align*}
For both constrains to be feasible, we need to be to the left of the line $x_2 = 2/3 - w$. In order for $w$ to win, it needs to be closer to the center than candidate 3. That is, we need $x_3 > 1-w$. This equals the upper bound constraint on $x_3$ if
\begin{align*}
  &1-w = 2 - x_2 - 2w\\
  \Leftrightarrow \quad & x_2 = 1 - w
\end{align*}
Since we already need to be left of the line $2/3 -w$, we don't need to worry about being to the left of $1-w$. Finally, the two lower bounds on $x_3$ are equal if
\begin{align*}
  & 1 -w = 2x_2 + w\\
  \Leftrightarrow \quad & x_2 = 1/2 - w.
\end{align*}
To the left of the line $x_2 = 1/2 - w$, the lower bound on $x_3$ is $1-w$; to the right, the lower bound is $2x_2 + w$.

  \begin{center}
  \begin{tikzpicture}
\begin{axis}[xmin=0, xmax=0.5,ymin=0, ymax=0.5, samples=10,xlabel=$w$, ylabel=$x_2$, legend pos=outer north east, legend cell align={left}]

  \addplot[red, thick, name path=a] (x,x);
  \addplot[blue, thick, name path=b] (x,1/2 - x/2);
    \addplot[green, thick, name path=c] (x,2/3 - x);
  \addplot[violet, thick, name path=d] (x,1/2-x);
  \addplot[orange, thick, name path=d] (x,1-2*x);

  \draw[thick, dotted] (axis cs:1/4, 0) -- (axis cs:1/4, 1/4);
  \draw[thick, dotted] (axis cs:1/3, 1/6) -- (axis cs:1/3, 1/3);

   \path[name path=top] (axis cs:0, 1) -- (axis cs:1, 1);
   \path[name path=bottom] (axis cs:0, 0) -- (axis cs:1, 0);

  \node at (axis cs:0.17, 0.08) {A};
  \node at (axis cs:0.32, 0.08) {B};
  \node at (axis cs:0.3, 0.25) {C};
  \node at (axis cs:0.37, 0.19) {D};
  \node at (axis cs:0.46, 0.15) {E};
%

  \addplot[gray!20] fill between[of=a and bottom];
    \addplot[white] fill between[of=c and a];

  \legend{(a) $x_2 = w$, (b) $x_2 = (1-w)/2$, (c) $x_2 = 2/3-w$, (d) $x_2 = 1/2 - w$, (e) $x_2 = 1 - 2w$}
\end{axis}
\end{tikzpicture}
\end{center}
Lines (a) and (d) intersect at $w=1/4$; lines (a) and (c) intersect at $w=1/3$. Integrating over the regions:

\begin{align*}
  A: \quad & \int_{0}^{w} \int_{1-w}^{1} \, dx_3 \,dx_2 &&= w^2\\
  B: \quad &  \int_{0}^{1/2-w} \int_{1-w}^{1} \, dx_3 \,dx_2 &&= w/2 - w^2\\
  C: \quad &  \int_{1/2-w}^{w} \int_{2x_2+w}^{1} \, dx_3 \,dx_2 &&= -1/4 + 3 w/2 - 2 w^2\\
  D: \quad &  \int_{1/2-w}^{1-2w} \int_{2x_2+w}^{1} \, dx_3 \,dx_2 &&= -1/4 + 3 w/2 - 2 w^2\\
  E: \quad &  \int_{1-2w}^{2/3-w} \int_{2x_2+w}^{2-x_2-2w} \, dx_3 \,dx_2 &&= 1/6 - w + 3 w^2/2 \end{align*}

We now sum and multiply by 2 to account for the ordering $x_3 < w < x_2$.

For $x \in [0, 1/4]$, the win probability is $2w^2$.

For $x \in [1/4, 1/3]$, the win probability is $2(w/2 - w^2 -1/4 + 3 w/2 - 2 w^2) = -1/2 + 4 w - 6 w^2$.

For $x \in [1/3, 1/2]$, the win probability is $2(w/2 - w^2 -1/4 + 3 w/2 - 2 w^2 + 1/6 - w + 3 w^2/2) = -1/6 + 2 w - 3 w^2$.

Summarizing and visualizing:

\begin{equation}
  \Pr(w \text{ wins}, w = X_{(2)}, X_{(1)} \text{ elim 1st}) = \begin{cases}
    2w^2, & w\in [0, 1/4]\\
    -1/2 + 4 w - 6 w^2, & w \in [1/4, 1/3]\\
    -1/6 + 2 w - 3 w^2, & w \in [1/3, 1/2]
  \end{cases}
\end{equation}

\begin{center}
  \begin{tikzpicture}
\begin{axis}[xmin=0, xmax=0.5,ymin=0, ymax=0.2, samples=100,xlabel=$w$, ylabel={$\Pr(w \text{ wins}, w = X_{(2)})$}, legend pos=outer north east, legend cell align={left},
 y tick label style={
        /pgf/number format/.cd,
            fixed,
            precision=2,
        /tikz/.cd
    },]

  \addplot[red, thick, name path=b][domain=0:1/4] (x,2*x^2);
  \addplot[blue, thick, name path=b][domain=1/4:1/3] (x,-1/2 + 4 *x - 6 *x^2);
  \addplot[green, thick, name path=b][domain=1/3:1/2] (x,-1/6 + 2 *x - 3 *x^2);

  \legend{$2w^2$, $-1/2 + 4 w - 6 w^2$, $-1/6 + 2 w - 3 w^2$}
\end{axis}
\end{tikzpicture}
\end{center}

\item Candidate 3 is eliminated first. Since candidate 3 has a smaller vote share than the winner, 
\begin{align*}
  &1 - x_3 + (x_3 -w) / 2 < (x_3 - w) / 2 + (w - x_2) / 2\\
  \Leftrightarrow \quad & 1 - x_3 < (w - x_2) / 2\\
  \Leftrightarrow \quad & x_3 > 1- (w - x_2) / 2
\end{align*}
This is feasible if 
\begin{align*}
  & 1- (w - x_2) / 2 < 1\\
  \Leftrightarrow \quad & - w + x_2 < 0\\
   \Leftrightarrow \quad &  x_2 < w,
\end{align*}
which is always true. Since candidate 3 has a smaller vote share than candidate 2,
\begin{align*}
  & 1 - x_3 + (x_3 -w) / 2 < x_2 + (w - x_2) / 2\\
   \Leftrightarrow \quad & 2 - 2x_3 + x_3 -w < 2x_2 + w - x_2\\
   \Leftrightarrow \quad & 2 - x_3 < x_2 + 2w\\
   \Leftrightarrow \quad & x_3 > 2 - x_2 - 2w 
\end{align*}
This is feasible if 
\begin{align*}
  &2 - x_2 - 2w < 1\\
  \Leftrightarrow \quad & x_2 > 1 - 2w.
\end{align*}
The two lower bounds on $x_3$ are equal if
\begin{align*}
  &1- (w - x_2) / 2 = 2 - x_2 - 2w \\
  \Leftrightarrow \quad & 2- w + x_2 = 4 - 2x_2 - 4w\\
  \Leftrightarrow \quad & 3x_2 = 2 - 3w\\
  \Leftrightarrow \quad & x_2 = 2/3 - w
\end{align*}
Above the line $2/3 - w$, the lower bound on $x_3$ is $1- (w - x_2) / 2$; below the line, it's $2 - x_2 - 2w$. As long as candidate 3 is eliminated first, $w$ wins since it's closer to the center than $x_2$. 

  \begin{center}
  \begin{tikzpicture}
\begin{axis}[xmin=0, xmax=0.5,ymin=0, ymax=0.5, samples=10,xlabel=$w$, ylabel=$x_2$, legend pos=outer north east, legend cell align={left}]

  \addplot[red, thick, name path=a] (x,x);
  \addplot[blue, thick, name path=b] (x,1 - 2*x);
    \addplot[green, thick, name path=c] (x,2/3 - x);


   \path[name path=top] (axis cs:0, 1) -- (axis cs:1, 1);
   \path[name path=bottom] (axis cs:0, 0) -- (axis cs:1, 0);

  \node at (axis cs:0.45, 0.17) {A};
  \node at (axis cs:0.45, 0.33) {B};
%

  \addplot[gray!20] fill between[of=a and bottom];
    \addplot[white] fill between[of=b and bottom];

  \legend{(a) $x_2 = w$, (b) $x_2 = 1 - 2w$, (c) $x_2 = 2/3 - w$}
\end{axis}
\end{tikzpicture}
\end{center}
Lines (a), (b), and (c) all intersect at $w = 1/3$. Integrating over the two regions and multiplying by 2 to account for the ordering $x_3 < w < x_2$:

\begin{align*}
  A: \quad & \int_{1-2w}^{2/3-w} \int_{2-x_2-2w}^{1} \, dx_3 \,dx_2 &&= 1/18 - w/3 + w^2/2\\
  B: \quad &  \int_{2/3-w}^{w} \int_{1-(w-x_2)/2}^{1} \, dx_3 \,dx_2 &&= 1/9 - 2 w/3 + w^2\\
 \end{align*}

For $w \in [1/3, 1/2]$, the win probability is $2(1/18 - w/3 + w^2/2 + 1/9 - 2 w/3 + w^2) = 1/3 - 2 w + 3 w^2$. Thus:
\begin{equation}
  \Pr(w \text{ wins}, w = X_{(2)}, X_{(3)} \text{ elim 1st}) = 1/3 - 2 w + 3 w^2, \qquad w \in [1/3, 1/2]
\end{equation}

\end{enumerate}
 \item $w = X_{(3)}$. If both $x_2 < w$ and $x_3 < w$, then $w$ wins by IRV. Thus, $\Pr(w \text{ wins}, w = X_{(3)}) = w^2$ for $w \in [0, 0.5]$. 
\end{enumerate}

We can finally sum over the three cases to arrive at $\Pr(w \text{ wins})$. 

For $w \in [0, 1/6]$, the sum is $w^2 + 2w^2 + w^2 = 4w^2$.

For $w \in [1/6, 1/4]$, the sum is $w^2 + 2w^2 + w^2 + 1/3 - 4w + 12w^2 = 1/3 - 4 w + 16 w^2$.

For $w \in [1/4, 1/3]$, the sum is $w^2 - 1/2 + 4w - 6w^2 - 1/6 + 4w/3 - 5w^2/3 -1 + 20w/3 - 28w^2/3 = -5/3 + 12 w - 16 w^2$.

For $w \in [1/3, 1/2]$, the sum is $w^2 + 1/3 - 2 w + 3 w^2 - 1/6 + 2w - 3w^2 - 1/6 + 4w/3 - 5w^2/3 - 1/3 + 8w/3 - 10w^2/3 = -1/3 + 4 w - 4 w^2$.

Summarizing and plotting:

\begin{equation}
  \Pr(w \text{ wins}) = \begin{cases}
  4w^2, & w \in [0, 1/6]\\
  1/3 - 4 w + 16 w^2, & w \in [1/6, 1/4]\\
  -5/3 + 12 w - 16 w^2, & w \in [1/4, 1/3]\\
  -1/3 + 4 w - 4 w^2, & w \in [1/3, 1/2]
   \end{cases}
\end{equation} 

\begin{center}
  \begin{tikzpicture}
\begin{axis}[xmin=0, xmax=0.5,ymin=0, ymax=0.7, samples=100,xlabel=$w$, ylabel={$\Pr(w \text{ wins})$}, legend pos=outer north east, legend cell align={left},
 y tick label style={
        /pgf/number format/.cd,
            fixed,
            precision=1,
        /tikz/.cd
    },]

  \addplot[red, thick, name path=b][domain=0:1/6] (x, 4*x^2);
  
  \addplot[blue, thick, name path=b][domain=1/6:1/4] (x, 1/3 - 4 *x + 16 *x^2);
  \addplot[green, thick, name path=b][domain=1/4:1/3] (x, -5/3 + 12 *x - 16 *x^2);
  \addplot[violet, thick, name path=b][domain=1/3:1/2] (x, -1/3 + 4 *x - 4 *x^2);
 
  \legend{$4w^2$, $1/3 - 4 w + 16 w^2$, $-5/3 + 12 w - 16 w^2$, $-1/3 + 4 w - 4 w^2$}
\end{axis}
\end{tikzpicture}
\end{center}  

To get the winner position distribution $f_{R_3}$, we scale by three. 
  The variance of $f_{R_3}$ is thus:
  \begin{equation}
  \begin{split}
    6\Bigg[&\int_{0}^{1/6} (w - 1/2)^2(4w^2) \, dw + \int_{1/6}^{1/4} (w - 1/2)^2(1/3 - 4 w + 16 w^2) \, dw \\
     &+ \int_{1/4}^{1/3} (w - 1/2)^2(-5/3 + 12 w - 16 w^2) \, dw + \int_{1/3}^{1/2} (w - 1/2)^2(-1/3 + 4 w - 4 w^2) \, dw \Bigg] = 25/864 \approx 0.029
  \end{split}
   \end{equation}
\clearpage

\section{Additional Figures}\label{app:figures}
\begin{figure}[h]
\centering
  \includegraphics[width=\textwidth]{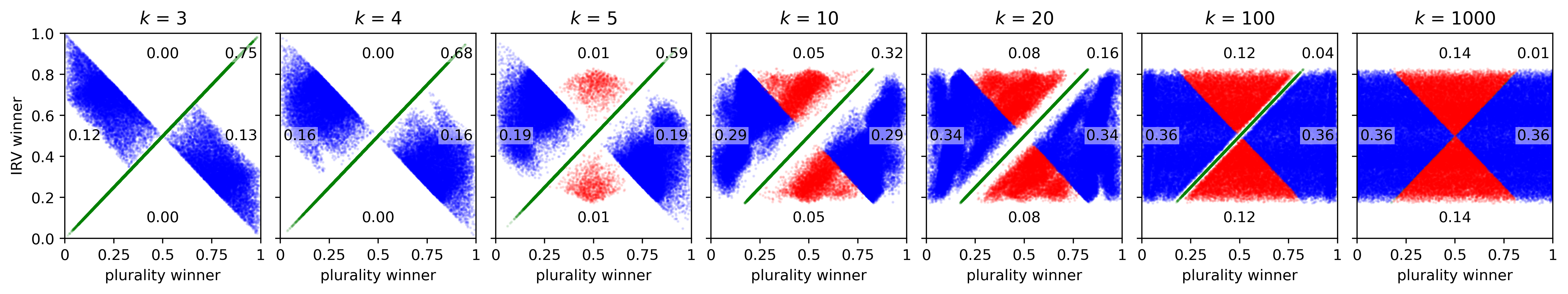}
  \caption{Plurality vs.\ IRV winner positions in 100,000 simulation trials for increasing candidate count $k$ (with uniform voters and candidates). Blue points are trials where the IRV winner was more moderate than the plurality winner, while red points are trials where the plurality winner was more moderate. Green points are trials where the winners were identical. Numbers in each quadrant show the proportion of trials falling in that region (the top right number is the proportion of same-winner trials). Notice that cases where the IRV winner is more extreme only appear beginning at $k=5$, in accordance with \Cref{thm:small-k-not-more-extreme}. Note the probabilistic moderating effect of IRV compared to plurality: IRV does not elect extreme candidates as $k$ grows large, but plurality does.}\label{fig:plurality-irv-scatter}
\end{figure}

\begin{figure}[h]
  \centering
  \includegraphics[width=0.4\textwidth]{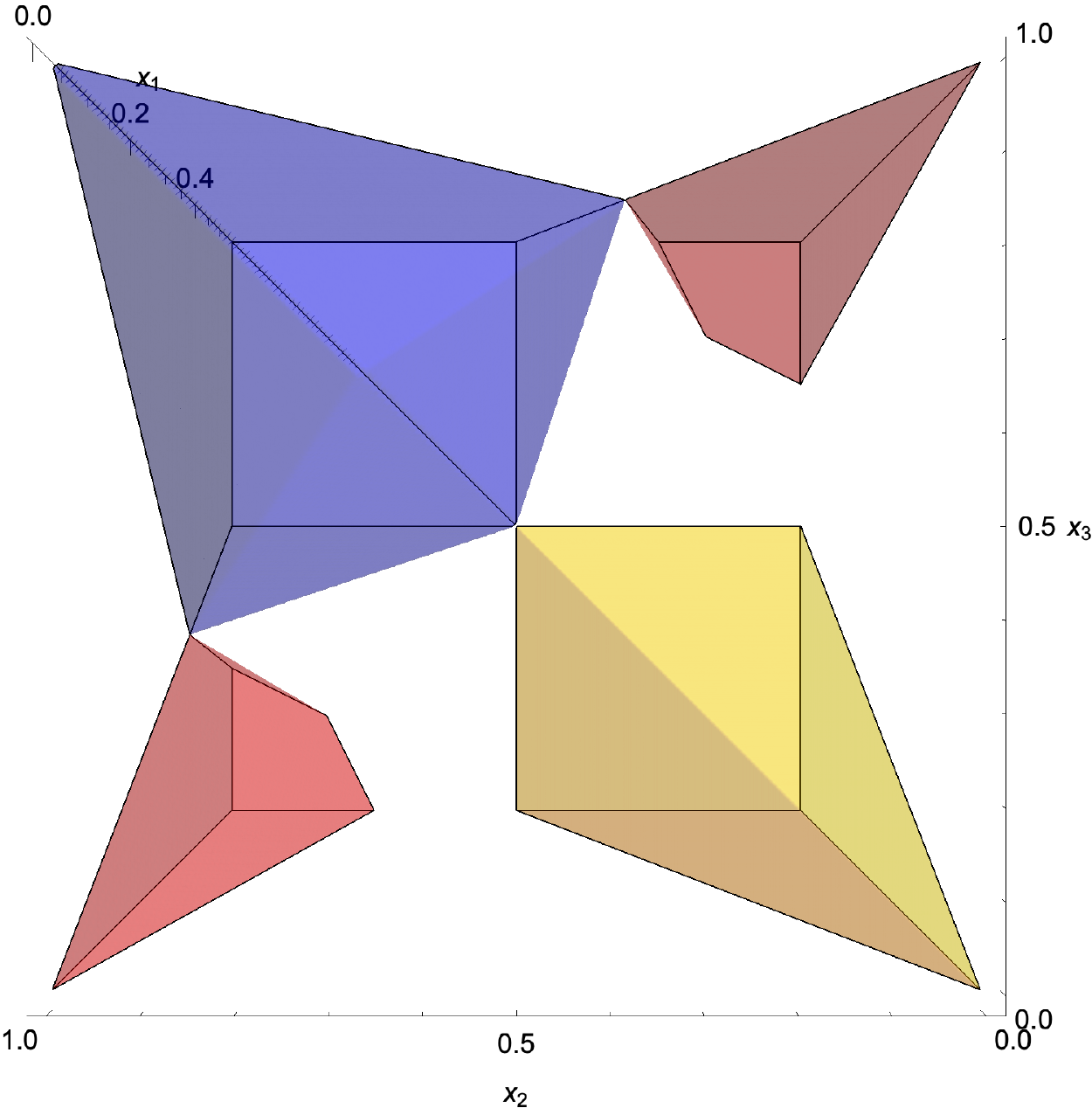}
  \qquad
   \includegraphics[width=0.4\textwidth]{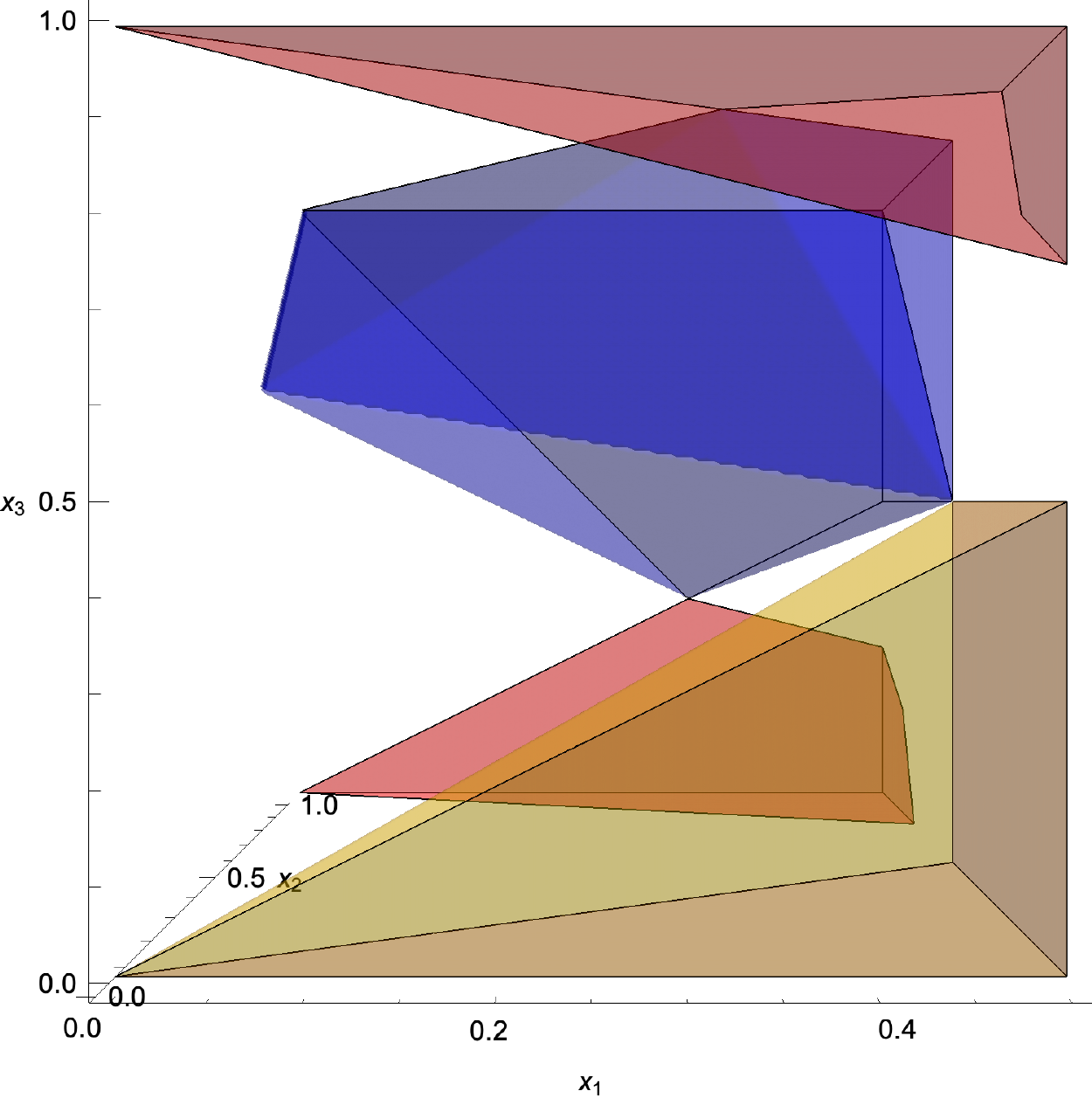}
   \caption{Plurality winner regions for $k=3$ with uniform voters and candidates. Colored polyhedra show the regions where a candidate at position $x_1$ is the plurality winner against candidates at $x_2$ and $x_3$. Regions are only shown for $x_1 \le 0.5$, since the other half of is symmetric. The color of a region corresponds to the order statistic of the winner. Blue: winner is the leftmost, red: winner is in the middle, yellow: winner is the rightmost. The left view has the plane of the page at $x_1=0$, looking towards increasing $x_1$. The right view has the plane of the page at $x_2 = 0$, with $x_1$ increasing from left to right. }\label{fig:plurality-regions}
\end{figure}
\begin{figure}[h]
  \centering
  \includegraphics[width=0.4\textwidth]{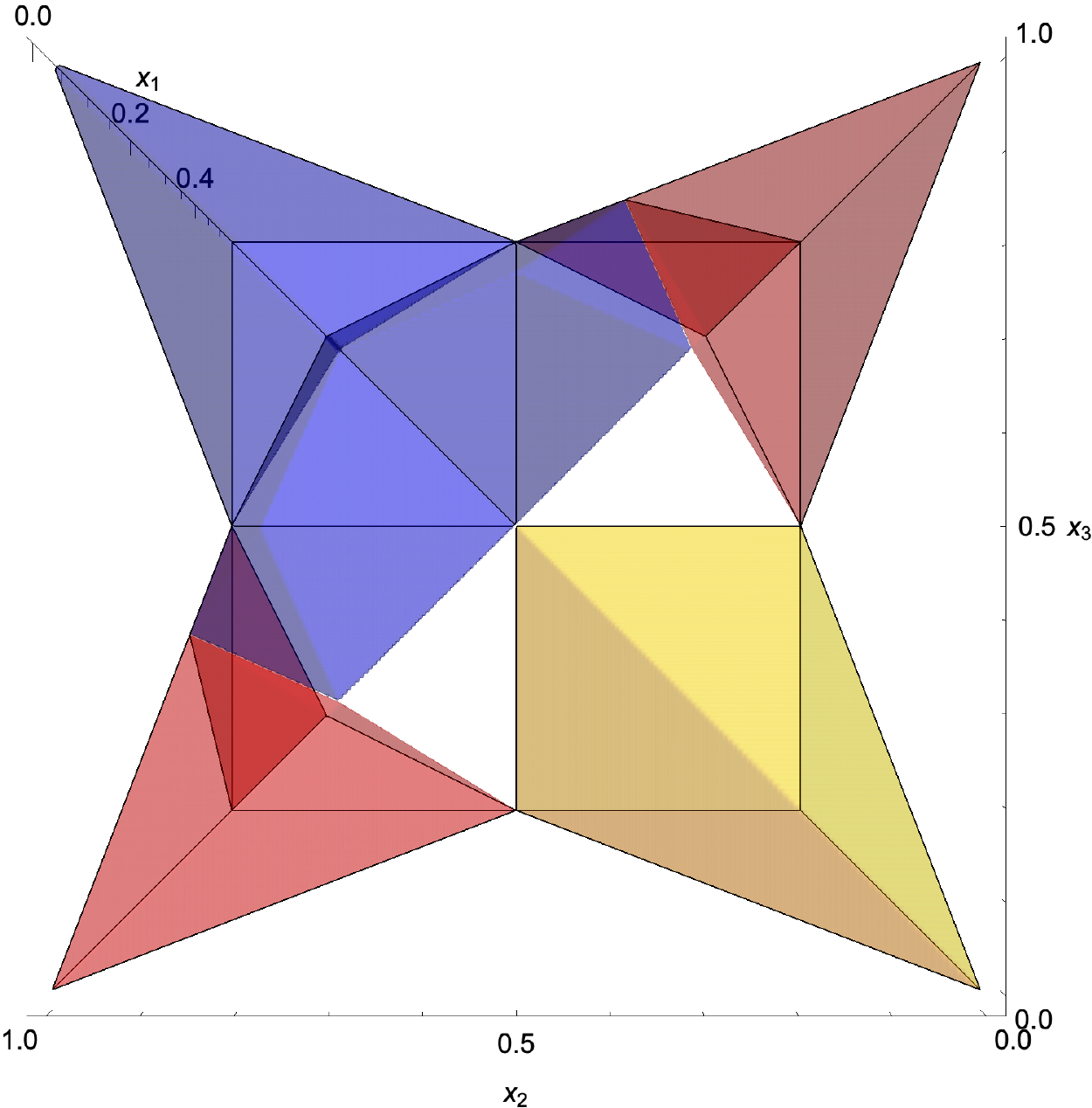}
  \qquad
   \includegraphics[width=0.4\textwidth]{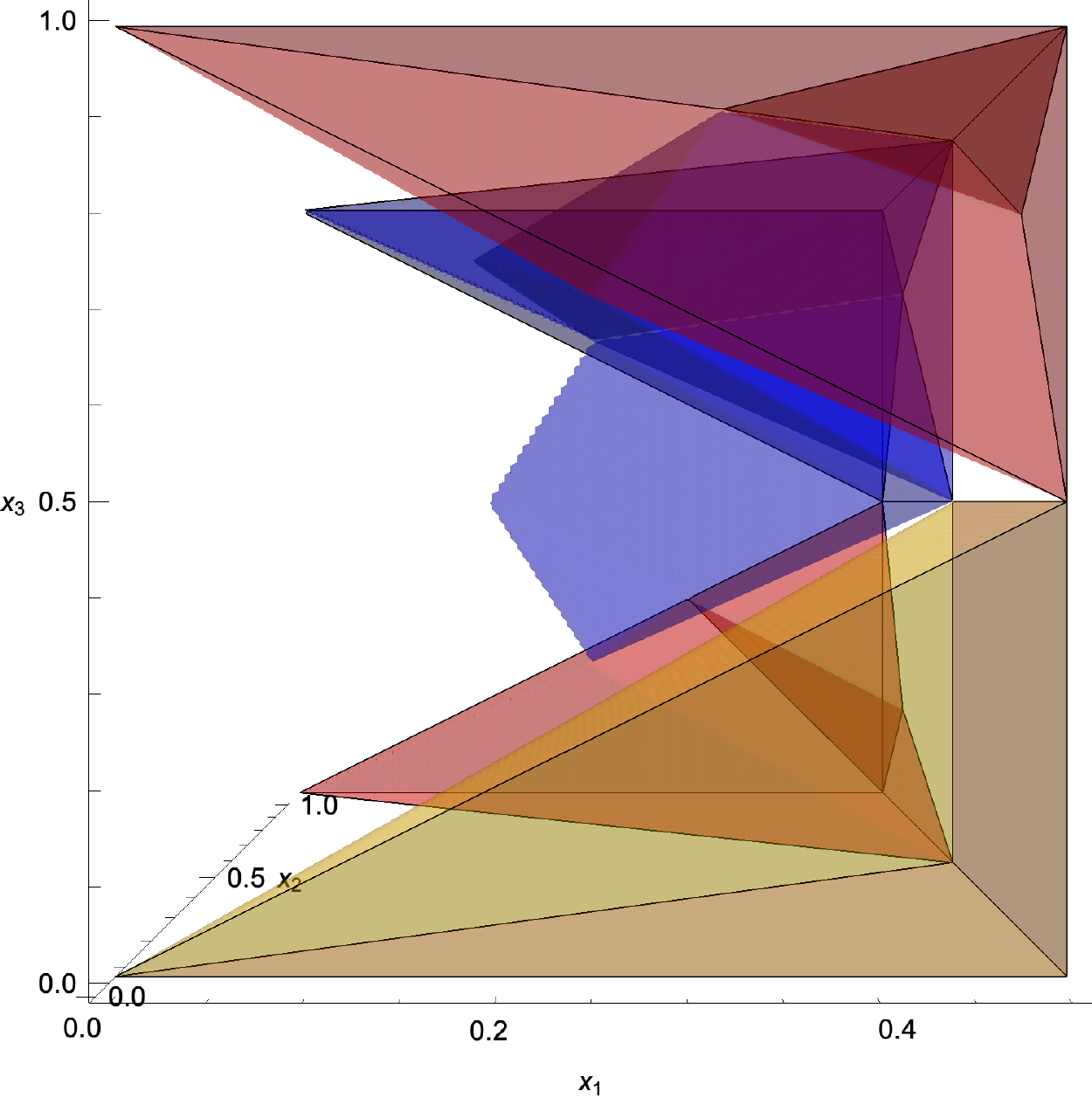}
   \caption{IRV winner regions for $k=3$ with uniform voters and candidates. See \Cref{fig:plurality-regions} for details about the visualization.}\label{fig:irv-regions}
\end{figure}

\end{document}